\newtheorem{theorem}{Theorem}[section]
\newtheorem{proposition}{Proposition}[section]
\newtheorem{lemma}{Lemma}[section]
\theoremstyle{definition}
\newtheorem{definition}{Definition}
\theoremstyle{remark}
\newtheorem{remark}{Remark}[section]
\newcommand{\pre}{{\text{pre}}}
\newcommand{\post}{{\text{post}}}
\newcommand{\bA}{{\mathbf{A}}}
\newcommand{\bS}{{\mathbf{S}}}
\newcommand{\bD}{{\mathbf{D}}}
\newcommand{\ba}{{\mathbf{a}}}
\newcommand{\calX}{{\mathcal{X}}}
\newcommand{\calF}{{\mathcal{F}}}
\newcommand{\bE}{{\mathbb{E}}}
\newcommand{\WADD}{{\mathsf{WADD}}}
\newcommand{\CBM}{{\mathsf{CBM}}}
\newcommand{\KL}{{\mathsf{KL}}}
\newcommand{\TV}{{\mathsf{TV}}}
\DeclareMathOperator*{\esssup}{ess\,sup}
\title{Differentially Private Online Community Detection for Censored Block Models: Algorithms and Fundamental Limits}
\author[1]{Mohamed~Seif\thanks{
 This work was supported by the AFOSR award \#002484665, the U.S. National Science Foundation under Grant CNS-2147631, and an Innovation Grant from Princeton NextG.}}
\author[2]{Liyan~Xie}
\author[1]{Andrea J. Goldsmith}
\author[1]{H. Vincent Poor}
\affil[1]{\small
Department of Electrical and Computer Engineering, Princeton University}
\affil[2]{\small
Department of Industrial and Systems Engineering, University of Minnesota}
\date{}
\begin{document}

\markboth{Published in IEEE Transactions on Information Forensics and Security}{Seif \MakeLowercase{\textit{et al.}}: Differentially Private Online Community Detection for Censored Block Models}


\maketitle

\begin{abstract}
We study the private online change detection problem for dynamic communities, using a censored block model (CBM). We consider edge differential privacy (DP) in both local and central settings, and propose joint change detection and community estimation procedures for both scenarios. We seek to understand the fundamental tradeoffs between the privacy budget, detection delay, and exact community recovery of community labels. Further, we provide theoretical guarantees for the effectiveness of our proposed method by showing necessary and sufficient conditions for change detection and exact recovery under edge DP.  Simulation and real data examples are provided to validate the proposed methods. 
\end{abstract}

\begin{IEEEkeywords}
Differential Privacy, Graphs, Community Detection, Online Change Point Detection
\end{IEEEkeywords}

\section{Introduction}
\label{introduction}

Community detection in networks is a fundamental problem in the area of graph mining and unsupervised learning \cite{fortunato2010community}. The main goal is to find partitions (or communities) in the graph that have dense connections within communities and sparse connections across communities. The Stochastic Block Model (SBM) is commonly used to model the graphical structure for networks \cite{abbe2018community, hajek2016achieving}. In an SBM model, nodes are divided into communities, and the connection probability between nodes depends on whether they belong to the same community or not.
Connections between nodes are more probable within the same community than between communities. This difference in connection probabilities is central to the challenge of community detection. The study and improvement of community detection methods using the SBM framework has been a dynamic area of research, with many advances and findings detailed in comprehensive surveys such as the one by Abbe et al. \cite{abbe2017community}. 


In contemporary research on network community detection, a significant focus has been placed on static network structures. 
However, this focus on static networks does not completely capture the {\it dynamic} nature of real-world networks, where the arrangement of nodes and their links may constantly change \cite{spectralcusum}. Despite some advances (e.g., \cite{barucca2018disentangling,xu2014dynamic}) in applying the dynamic SBM to these scenarios, a comprehensive understanding of the conditions that allow for accurate community detection in such dynamic environments, particularly within the framework of the dynamic Censored Block Model (CBM), remains an open and critical area of research.

The study of CBMs in dynamical networks has become increasingly relevant in network science, offering critical insights into the dynamics of complex systems \cite{hajek2015exact,moghaddam2022exact}. CBM models effectively represent real-world networks characterized by block structures and are particularly valuable in analyzing networks with {\it incomplete} or {\it censored} data. This feature is essential for understanding various network dynamics, such as information spread and epidemic outbreaks. 
CBM models are also instrumental in the development and testing of network science algorithms, especially for community detection and network reconstruction. 

The information-theoretic conditions for recoverability are well understood in terms of the scaling of the CBM parameters $n$, $p$ and  $\zeta \in (0, 1/2)$, where $n$ denotes the number of nodes, $p$ represents the probability of an edge being observed, and $\zeta$ denotes the connection probability (these quantities will be defined more precisely in the sequel). In particular, in the {\it dense} regime (which is the focus of this paper), with $p = a \log(n)/n$ for some constant $a > 0$, it is known that exact recovery under binary  CBMs is possible \emph{if and only if}  $a (\sqrt{1 - \zeta} - \sqrt{\zeta})^{2} > 1$ \cite{hajek2016achieving_extensions}. This condition is considered to be a \textit{sharp} threshold for recovery, which can also be interpreted as a phase transition.
In the literature, 
efficient algorithms for recovering communities have been developed using spectral methods (e.g., \cite{dhara2022power,dhara2022spectral}) and semidefinite programming (SDP)~\cite{hajek2016achieving}. 
Some of these results (e.g., \cite{dhara2022power, hajek2016achieving}) were proven to achieve the optimal threshold for exact recovery.



In this work, we study the dynamic CBM models under privacy constraints, since sensitive information is frequently present in network data. As such, ensuring the privacy of individuals during data analysis is imperative. Differential Privacy (DP) \cite{dwork2014algorithmic} has become the recognized standard for delivering robust privacy assurances. DP guarantees that any single user's participation in a dataset only minutely impacts the statistical outcomes of queries. 
In the literature \cite{karwa2011private}, two main privacy notions have been proposed for data analysis over graphs: $(1)$ edge DP that aims to protect the individual relationships (edges) in a graph, 
and $(2)$ node DP that focuses of the privacy of the nodes and their corresponding interactions (edges). Notably, DP algorithms have been tailored to specific network analysis problems, including counting stars, triangles, cuts, dense subgraphs, and communities, as well as the generation of synthetic graphs \cite{blocki:itcs13,imola2021locally,nguyen2016detecting, qin2017generating}. 

Our privacy threat model incorporates both \textit{local} and \textit{central} settings. We highlight that the central DP is a weaker guarantee compared to the local DP (as it can be readily shown that the local DP guarantee implies central DP). The rationale for considering both local and central DP simultaneously is as follows. In scenarios where a trusted curator/server (e.g., Facebook) is initially considered trustworthy, users could send their raw (unperturbed) local connections to the server. However, if the raw connections received by the server were leaked (e.g., via a data breach), then having an additional local DP guarantee can provide a second layer of protection. Furthermore, the local privacy setting has a worse utility guarantee due to the users' perturbation of distributed data.

Under edge DP constraints, we study the online change detection of community structures for CBMs. We assume an online sequence of graph observations generated from CBMs, where the underlying community structure changes at an unknown change-point. Our goal is to detect the change
as quickly as possible, subject to false alarm constraints {and privacy requirements} \cite{poor2008quickest}. {Specifically, we aim to identify potential changes in the underlying community structure of CBMs based on streaming graph observations, while simultaneously ensuring the privacy of each individual graph.} We propose two types of private online detection algorithms under local DP and central DP settings, respectively. Sufficient conditions for exact recovery of the unknown community structure, under DP constraints, are also established. Furthermore, we provide theoretical guarantees for the detection performance of proposed methods.

\paragraph{Related Works}

The problem of private change detection is an evolving field with significant applications. {It has been studied under $(\epsilon,\delta)$ differential privacy constraints, assuming parametric distributions \cite{canonne2019structure,cummings2018differentially}. Both works assume that the pre- and post-change distributions are fully known in advance. The change detection for unknown post-change distributions was considered in \cite{cummings2020privately}, but the analysis was limited to univariate distributions, and thus is not directly applicable to the graph-based observations considered in our setting.
In \cite{berrett2021locally}, online change detection for multivariate nonparametric regression was studied under local differential privacy. The proposed privacy mechanism relies on discretizing the continuous space into a set of cubes, whose number grows exponentially with the data dimension. This approach is not directly applicable to graph observations and poses computational challenges in high-dimensional settings. In \cite{li2022network}, the authors extended the analysis of privatized networks from static to dynamic, underscoring the complexities and challenges of preserving privacy in the dynamic analysis of network data. 
It is worthwhile mentioning that they considered the offline change-point estimation problem with a fixed number of graph observations. In contrast, we study the {\it online} setting, where graphs arrive in a streaming fashion and the objective is to detect changes as quickly as possible. The online setting differs fundamentally from the offline one in several aspects, including its intrinsic challenges (since online methods can only use past data) and distinct performance metrics.}

{{In the context of privacy‑preserving community detection on random graphs, Hehir et al. \cite{hehir2021consistency} analyzed a simple spectral approach for multi‑community settings, extending the convergence‑rate analysis of Lei \& Rinaldo’s algorithm \cite{lei2015consistency} and quantifying how privacy parameters influence misclassification between true and estimated labels. In our own work \cite{mohamed2022differentially}, we characterized the information‑theoretic trade‑offs of three private community‑recovery mechanisms—(a) stability‑based, (b) sampling‑based, and (c) graph‑perturbation. We showed that stability‑ and sampling‑based approaches yield a more favorable balance between the connection probabilities (i.e., random graph model parameters), accuracy and privacy budget $(\epsilon,\delta)$, albeit at higher computational cost, whereas graph‑perturbation, though cheaper, requires $\epsilon = \Omega(\log n)$ to guarantee exact recovery.}}

In contrast, this paper broadens the scope of analysis by considering possible network changes, in particular, shifts in community memberships. This approach presents a more comprehensive and challenging problem, as it involves not only the detection of statistical changes but also the identification of structural and compositional alterations within the network dynamics. Our methodology, therefore, addresses a broader range of changes, offering deeper insights into the complexities of dynamic network analysis.

Our work is also related to the community recovery problem over censored block models. The semidefinite relaxation based method is a widely used approach to estimate the underlying community structures from adjacency matrices. The exact recovery (necessary and sufficient) conditions have been studied in \cite{hajek2016achieving, hajek2016achieving_extensions}. In our work, we assume that the community after the change is \textit{unknown}, and thus a joint estimation and detection scheme is utilized, in which the semidefinite relaxation is used to estimate the community structure in a sequential manner. Furthermore, we derive new information-theoretic conditions for exact recovery under edge DP constraints.

\paragraph{Notation}
We use boldface uppercase letters to denote matrices (e.g., $\bA$) and
boldface lowercase letters for vectors (e.g., $\ba$). We denote scalars by non-boldface lowercase letters (e.g., $x$), and sets by capital calligraphic letters (e.g., $\calX$). Let $[n] \triangleq \{1, 2, \cdots, n\}$ represent the set of all integers from $1$ to $n$.  We use $\operatorname{Lap}(\beta)$ to denote the Laplace distribution with zero mean and scale $\beta$, and the probability density function is $\operatorname{Lap}(x|\beta)=\frac{1}{2\beta}e^{-{|x|}/{\beta}}$.
For asymptotic analysis, we write the function $f(n) = o(g(n))$ when $\lim_{n \rightarrow \infty} f(n)/g(n) = 0$.  Also, $f(n) = \mathcal{O}(g(n))$ means there exist some constant $C > 0$ such that $|f(n)/g(n)| \leq C$, $\forall n$, and $f(n) = \Omega(g(n))$ means there exists some constant $c > 0$ such that $|f(n)/g(n)| \geq c, \forall n$. A summary of the key symbols is provided in Table \ref{table:symbols_table}.

  \begin{figure*}[t]
\centering
	\centering
	{\includegraphics[width=1.4\columnwidth]{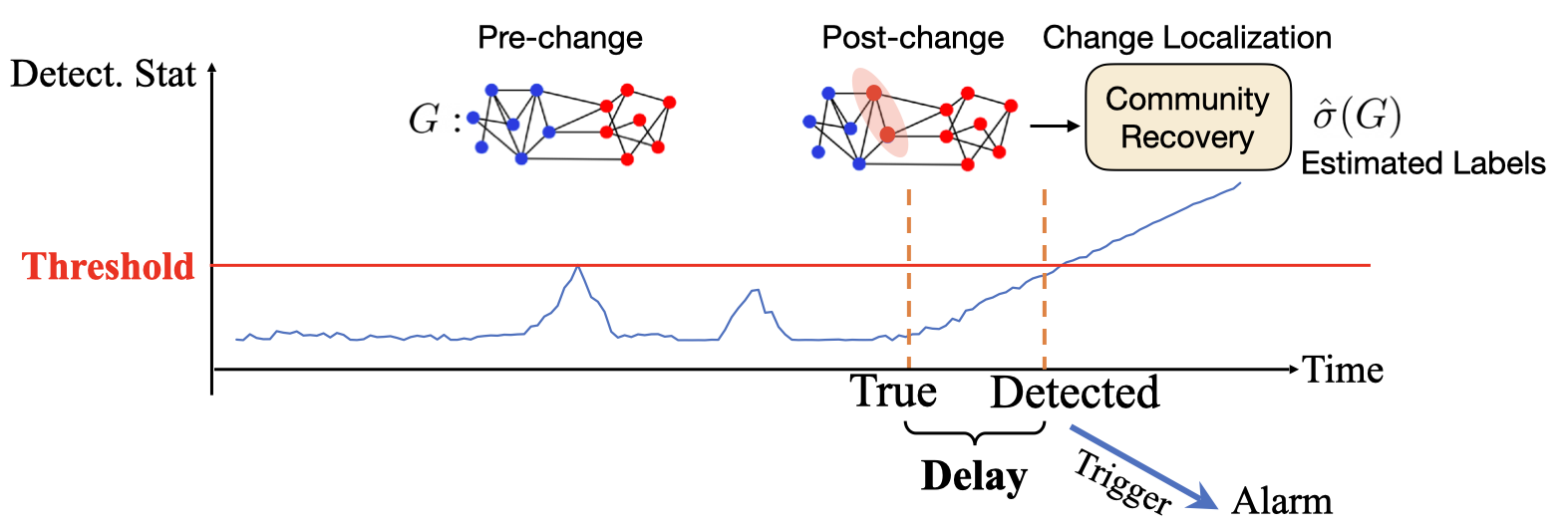}
    \caption{\small{Overview of the proposed change‑detection scheme. We develop online community‑detection frameworks that detect changes in community memberships via CUSUM-based tests. The framework also incorporates an SDP-based community recovery component to estimate the unknown post-change communities. Both the estimation and detection components satisfy edge differential privacy to protect individual node interactions. }}
    \label{fig:problem_statement_clear}}
\end{figure*}

\begin{table}[t]
\centering
\caption{\small{List of symbols.}}
\label{table:symbols_table}
\begin{tabular}{l l}
\hline
$n$ & Number of vertices\\
$\mathbf{A}$ & Adjacency matrix\\ 
$\epsilon$ & Privacy budget\\ 
$\delta$ & Failure probability \\ 
$\bm{\sigma}^{*}$ & Ground truth labels \\
$\hat{\bm{\sigma}}$ & Estimated labels \\
$\mathbf{I}$ & Identity matrix \\ 
$\mathbf{J}$ & All ones matrix\\
$\mathbf{1}$ & All ones vector\\
\hline
\end{tabular}
\end{table}

\section{Problem Setup} 
\label{sec:preliminaries_and_problem_statement}


We begin by summarizing the problem statement in Fig.~\ref{fig:problem_statement_clear}.

\subsection{Data Model}\label{sec:problem}

We observe a sequence of undirected graphs $\{G_t,t\in\mathbb{N}\}$, each generated via a CBM. Each graph $G = ({V}, E)$ consists of $n$ vertices divided into two communities\footnote{Extending our approach to encompass multiple communities is relatively straightforward, since multi-community detection can also be reformulated as an SDP problem, with analogous recovery conditions anticipated, as demonstrated in Section 5 of \cite{hajek2016achieving_extensions}.}, with community label $\bm{\sigma}^{*} = \{{\sigma}^{*}_{1}, {\sigma}^{*}_{2}, \cdots, {\sigma}^{*}_{n}\}$, $\sigma^{*}_{i} \in \{-1, +1\}$, $\forall i \in [n]$. 
The connections between vertices are represented by a weighted adjacency matrix $\mathbf{A}$, where each edge $A_{i,j} \in \{+1, -1, 0\}$, $i<j$, is drawn as follows,
\begin{equation}\label{eq:model}
 \Pr(A_{i,j} = a) =
\begin{cases}
p (1-\zeta) & \text{if } a = \sigma^{*}_{i} \cdot \sigma^{*}_{j}, \\
p \zeta & \text{if } a = -\sigma^{*}_{i} \cdot \sigma^{*}_{j}, \\
1-p & \text{if } a = 0,
\end{cases} 
\end{equation}
where $p$ represents the revealed probability of an edge and \( \zeta \in (0, \frac{1}{2})\). We set \( A_{i,i} = 0 \) for all \( i \) and \( A_{i,j} = A_{j,i} \) for all \( i > j \). This model is denoted as  $\CBM(\bm{\sigma}^{*},p,\zeta)$. 

\begin{remark}[Estimator of $\bm{\sigma}^{*}$]
 Upon observing the adjacency matrix $\bA$, we may derive the maximum likelihood (ML) estimator of $\bm{\sigma}^{*}$ as (see Appendix \ref{sec:llr} for details): 
\begin{align}
\hat{\bm{\sigma}}_{\text{ML}} = \arg \max_{\bm{\sigma}} \{c_{1} \bm{\sigma}^{T}\mathbf{A} \bm{\sigma} + c_{0}: \sigma_{i} = \pm 1\}, \label{eqn:optimal_ML_estimator}
\end{align}
 where $c_{1} = 0.25 \log((1-\zeta)/\zeta)$, and $c_{0} \geq 0$ is a deterministic scalar that does not depend on the labeling. 
It is worth mentioning that \eqref{eqn:optimal_ML_estimator} is equivalent to solving the \textit{max-cut} problem in the graph, which is known to be NP-hard \cite{hajek2016achieving_extensions}.
Subsequently, we will adopt the commonly used semidefinite relaxation of the ML estimator which can be solved in polynomial time and with theoretical guarantees \cite{hajek2016achieving_extensions}. Define $\mathbf{Y} = {\bm{\sigma}} {\bm{\sigma}}^{T}$, where $Y_{i,i} = 1, \forall i \in [n]$, then the relaxed ML estimator is
\begin{align}
    \hat{\mathbf{Y}} = \arg\max_{\mathbf{Y}} &  \hspace{0.1in}\text{tr}({\mathbf{A}} \mathbf{Y})    \quad \text{s.t.}  \hspace{0.1in}  \mathbf{Y} \succcurlyeq \mathbf{0} , \, Y_{i, i} = 1, \forall i \in [n]. \label{eqn:SDP_relaxation_asymmetric} 
\end{align}
\end{remark}

We now consider a sequence of graphs $\{G_t,t\in\mathbb{N}\}$ with $G_t\sim\CBM(\bm{\sigma}_t^{*},p,\zeta)$, and let $\mathbf{A}_t$ be the adjacency matrix of $G_t$. We assume there is an abrupt change to the community label at time $\nu$, which is called a {\it change-point} and is {\it unknown}. The problem can be formulated as follows,
\begin{equation}
	\label{eq:model1}
 \bm{\sigma}_t^{*}=\begin{cases}
  \bm{\sigma}^{\text{pre}}, &  ~~~t=1,\dots,\nu-1,\\
  \bm{\sigma}^{\text{post}}, & ~~~ t = \nu,\tau+1,\dots\\
 \end{cases}
\end{equation}
where $\bm{\sigma}^\pre$ and $\bm{\sigma}^\post$ represent the pre- and post-change communities, respectively. We further assume $\bm{\sigma}^\pre$ is known since it can usually be estimated from historical data under the pre-change regime. The post-change community $\bm{\sigma}^\post$ is assumed to be unknown.
We measure the strength of change using the Hamming distance between $\bm\sigma^\pre$ and $\bm\sigma^\post$,
\begin{equation*}\label{eq:hamming}
 \operatorname{Ham}({\bm{\sigma}^{\pre}},{\bm{\sigma}^{\post}}):=\sum\nolimits_{i=1}^n \mathbb{I}\{\sigma_i^{\pre} \neq \sigma_i^{\post}\}.
\end{equation*}
We note that  $\operatorname{Ham}({\bm{\sigma}^{\pre}},{\bm{\sigma}^{\post}}) \le n/2$, which holds without loss of generality since larger distances can be reduced by flipping community labels. 

Our goal is to design a detection algorithm that is able to protect the user's privacy while detecting the change as quickly as possible from streaming data. The detection is performed through a {\it stopping time} $\tau$ on the observation sequence at which the change is declared \cite{poor2008quickest}. 
To measure the performance of a stopping time, we introduce two metrics commonly used in online detection problems: detection delay and false alarm performances. Let $\Pr_\nu$ denote the probability measure on the observation sequence when the change-point is $\nu$, and let $\mathbb E_\nu$ denote the corresponding expectation.  Also, $\Pr_\infty$ and $\mathbb E_\infty$ denote the probability and expectation operator when there is no change (i.e., $\nu=\infty$). 
The commonly adopted definition of {\it worst-case average detection delay (WADD)} is \cite{lorden1971}: 
\begin{equation}\label{eq:WADDdef}
\WADD(\tau) =  \underset{\nu \geq 1}\sup \esssup \ \mathbb E_\nu\left[(\tau-\nu+1)^+| \mathcal F_{\nu-1}\right],
\end{equation}
where $\{\mathcal F_k,\ k\in \mathbb{N} \}$ is the filtration $\mathcal F_k = \sigma(G_1, \ldots, G_k)$, with $\mathcal F_0$ denoting the trivial sigma algebra. We measure the false alarm performance of a detector (stopping time) $\tau$ in terms of its {\it Average Run Length (ARL)}, defined as $\mathbb{E}_\infty\left[\tau\right]$. The larger the ARL, the smaller the probability of false alarms. We aim to design a detection method that satisfies the desired requirement on ARL while having a small WADD, under the private setting to be detailed next.

\subsection{Differential Privacy for Graphs}\label{sec:dp}

In addition to the dynamic community detection, we require that the algorithm also protect the individual's data privacy. 
In the context of graph data, our threat model is structured around the principle of {\it edge differential privacy} (DP), incorporating both \textit{local} and \textit{central} settings.

Local DP (LDP) offers robust privacy guarantees during the data collection process, ensuring that individual data contributions are privatized at the {\it source}. This is crucial in scenarios where trust in the data curator is limited. Similar to \cite{li2022network}, we consider the privacy mechanism which privatizes each edge $A_{i,j}$ to $\tilde A_{i,j}$ following a conditional distribution $Q$ as defined below.

\begin{definition}[$\epsilon$-edge LDP] \label{def:edgeLDP}
We say the privacy mechanism $\tilde A_{i,j} \sim Q(\cdot|A_{i,j})$, with $\tilde \bA$ being the privatized and $\bA$ the original adjacency matrix, is $\epsilon$-edge LDP, if for all $i<j$ and all $a,a',\tilde{a}\in\{-1,+1,0\}$, it holds that
\[
{Q(\tilde A_{i,j} = \tilde{a} |A_{i,j}=a)} \leq e^\epsilon {Q(\tilde A_{i,j} = \tilde{a} |A_{i,j}=a')}.
\]
\end{definition}

Conversely, in the central setting, privacy protections are applied to the aggregated data, focusing on preserving the presence or absence of single edges. We present the notion of $(\epsilon, \delta)$-edge central DP (CDP) \cite{karwa2011private} below. Here, $\epsilon$ is coined as privacy budget and $\delta$ represents failure probability. It is worth noting that smaller $\epsilon$, $\delta$ correspond to stronger privacy guarantee and vice versa. The setting when $\delta = 0$ is referred to as pure $\epsilon$-edge CDP.
 \begin{definition} [$(\epsilon, \delta)$-edge CDP] \label{def:edgeDP}  A (randomized) community estimator $\hat{\bm{\sigma}}$ as a function of $\bA$ satisfies $(\epsilon, \delta)$-edge CDP for some $\epsilon \in \mathds{R}^{+}$, if for all pairs of adjacency matrices $\mathbf{A}$ and $\mathbf{A}'$ that differ in {\it one} edge, and any measurable subset $\mathcal{S} \subseteq \operatorname{Range}(\hat{\bm{\sigma}})$, we have 
\begin{align*}\label{eq:edgeDP}
    {\operatorname{Pr}(\hat{\bm{\sigma}}(\mathbf{A}) \in \mathcal{S}}) \leq e^{\epsilon} { \operatorname{Pr}(\hat{\bm{\sigma}}(\mathbf{A}') \in \mathcal{S}}) + \delta,
\end{align*}
where the probabilities are computed only over the randomness in the estimation process. 
\end{definition}

\section{Proposed Method: Privatized  
 Adaptive \\ CUSUM Test} \label{sec:detect_algo}

In this section, we introduce the proposed joint estimation and detection scheme, in which we perform sequential estimates for the unknown post-change community structure $\bm{\sigma}^{\text{post}}$ and use it to calculate the detection statistic. We present two types of solution schemes for the LDP and CDP settings, respectively. Under the LDP setting, we first introduce the graph perturbation mechanism that outputs privatized data through a conditional distribution, and then introduce the Adaptive CUSUM type detection statistics utilizing the estimates of unknown post-change community $\bm{\sigma}^{\text{post}}$. Under the CDP setting, we first compute private estimates of $\bm{\sigma}^{\text{post}}$ using the original (non-private) data, and then calculate the detection statistics with added Laplace noise.

To obviate redundancy, we shall describe the first mechanism, graph perturbation mechanism, in full detail while treating the subsequent mechanism with conciseness. Our emphasis will be directed towards elucidating the main algorithmic distinctions, and their respective recovery conditions will be deliberated upon in subsequent sections.  We summarize our theoretical results in Table \ref{table:CBM_summary_results_approaches}, which shows the tradeoffs between $(a, \zeta)$, $(\epsilon, \delta)$ as well as the computational complexity of the mechanisms.

\begin{table*}[t]
\caption{Summary of the recovery threshold(s), detection delay(s), complexity, and privacy.} \label{table:CBM_summary_results_approaches}
    \centering
    \begin{tabular}{| c | c | c |}
    \hline
   Mechanism & Graph Perturbation & Perturbation Stability \\ 
    \hline \hline
    \small{$a(\sqrt{1-\zeta} - \sqrt{\zeta})^{2} \geq$} & Theorem \ref{thm:private_threshold_condition_one_time_instance} & Theorem \ref{thm:perturbation_stability} \\ 
    \hline
     Detection Delay & Theorem \ref{th:upper_wadd} & Theorem \ref{thm:wadd2} \\
    \hline
      Time complexity & \small{$\mathcal{O}(\operatorname{poly}(n))$} & \small{$n^{(\mathcal{O}(\log{(n)}))}$} \\
    \hline
      $\epsilon$ & \small{$\Omega(\log(n))$} & \small{$\mathcal{O}(1)$} \\ 
    \hline
    $\delta$ & 0 & $1/n^{2}$ \\
    \hline
    \end{tabular}
\end{table*}





\subsection{Detection Procedure under the LDP Setting}\label{sec:ldp-method}

\paragraph{Privacy Mechanism (1): Graph Perturbation Mechanism}\label{sec:private}


  \begin{figure}[t]
\centering
	\centering
	{\includegraphics[width=\columnwidth]{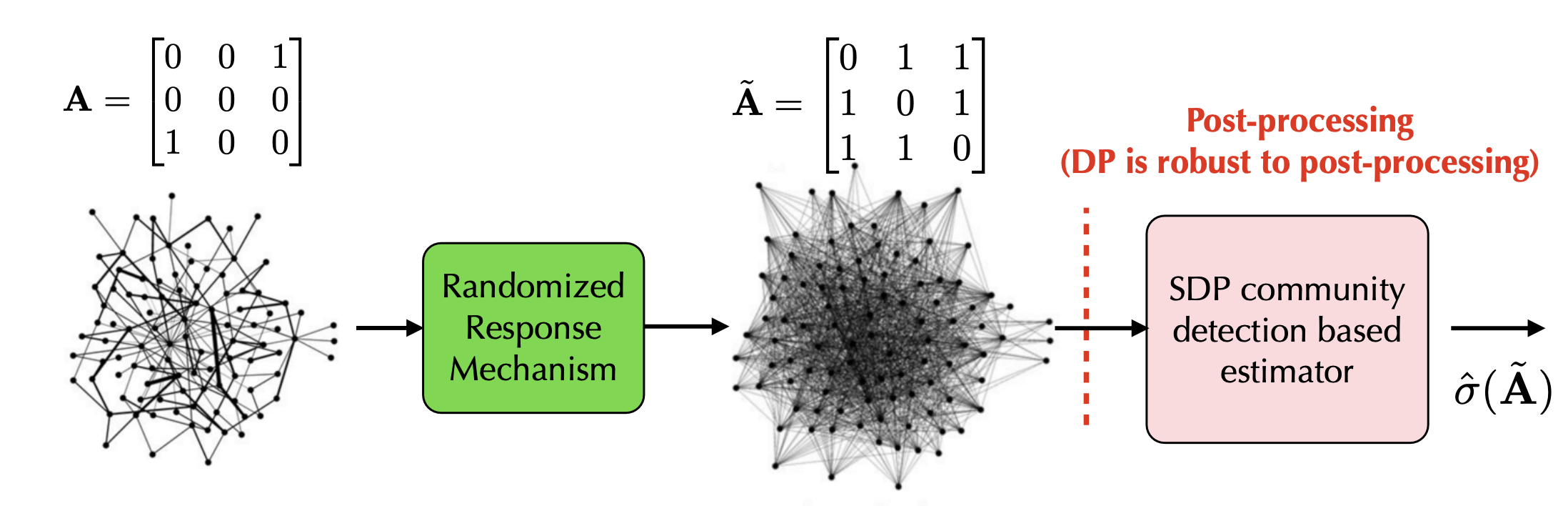}}	
	\caption{\small{Illustration of the graph perturbation-based mechanism: The original edges are first randomized via a randomized response scheme, and the resulting noisy adjacency matrix is then supplied to the community recovery algorithm.}}
    \label{fig:graph_perturbation_illustration}
\end{figure}

Under the LDP requirement, we present the graph perturbation mechanism that we adopted for private inference, as illustrated in Fig.~\ref{fig:graph_perturbation_illustration}. The basic idea is to perturb each edge of the random graph (i.e., the adjacency matrix \( \mathbf{A} \)) independently to obtain a perturbed adjacency matrix \( \tilde{\mathbf{A}} \) to satisfy \( \epsilon \)-edge LDP. 
More specifically, when \( A_{i,j} = x \) where \( x \in \{0, +1, -1\} \), the mechanism outputs \( \tilde{A}_{i,j} \) that follows the distribution \( Q_x(\cdot)\). We let $Q_x(y)=c_2 = \frac{1}{e^{\epsilon} + 2}$ when $x\neq y$, and $Q_x(y)=c_1 = 1-2c_2= \frac{e^{\epsilon}}{e^{\epsilon} + 2}$ when $x=y$. 
Note that for the original data distribution $\CBM(\bm{\sigma}^{*},p,\zeta)$, the perturbed elements in the adjacency matrix $\tilde{\mathbf{A}}$ are distributed as $\CBM(\bm{\sigma}^{*},\tilde p,\tilde\zeta)$ with (see detailed derivations in Appendix \ref{sec:pert_calc}), 
\begin{equation}\label{eqn:perturbed_parameters}
\tilde p = \frac{2 + p(e^\epsilon-1)}{e^\epsilon+2}, \ \tilde\zeta =  \frac{1+p\zeta(e^\epsilon-1)}{2 + p(e^\epsilon-1)}. 
\end{equation}
Also, the randomization mechanism is structured in such a way that \( \tilde{A}_{i,j} = \tilde{A}_{j,i} \) and \( \tilde{A}_{i,i} = 0 \). It is obvious that the above construction {\it satisfies  $\epsilon$-edge LDP} in Definition \ref{def:edgeLDP}.  

\paragraph{Detection Procedure} We then apply the detection procedure on the privatized data $\{\tilde \bA_t, t\in\mathbb N\}$. We tailor the framework of the Adaptive CUSUM test \cite{wlcusum2023} to our community detection problem. 
At each time $t$, we use the latest community structure estimate $\bm{\hat\sigma}_{t-1}$ obtained from the recent private sample $\tilde \bA_{t-1}$\footnote{\label{myfootnote}We note that it is straightforward to replace the estimator $\bm{\hat\sigma}_{t-1}$ as the one obtained from the most recent $w>1$ private samples $\{\tilde \bA_{t-w},\ldots, \tilde \bA_{t-1}\}$, which coincides with the approximate maximum likelihood solution to a multi-view stochastic block model \cite{cohen2024multi,zhang2024community}.}. We adopt the ML estimator obtained from semidefinite relaxation as in \eqref{eqn:SDP_relaxation_asymmetric}. Then, we calculate the detection statistics recursively via
\begin{equation}\label{eq:stat-wlcusum}
  S_t = (S_{t-1})^{+} + \log\frac{\Pr(\tilde\bA_t;\bm{\hat\sigma}_{t-1})}{\Pr(\tilde\bA_t;\bm{\sigma}^{\text{pre}})}, t \geq 1; \quad S_0=0,
\end{equation}
where $(x)^{+}:=\max\{x,0\}$ and  $\Pr(\cdot;\bm\sigma)$ denotes the probability mass function of $\CBM(\bm\sigma,p,\zeta)$. Here we assume the distribution parameters $p$ and $\zeta$ are {\it known}.
It is worthwhile mentioning that, by definition, the estimate $\bm{\hat\sigma}_{t-1}$ is {\it independent} with the current sample $\tilde \bA_t$. Such independence is essential for subsequent theoretical analysis. 

Given the detection statistic $S_t$, the detection procedure is conducted via a stopping rule, 
\begin{equation}\label{eq:stop_time}
   T_{\rm L}:=\inf\{t: S_t \geq b\}, 
\end{equation}
where $b$ is a pre-specified detection threshold chosen to meet the desired false alarm constraint. 
We will discuss the choice of the threshold $b$ in Section \ref{sec:theory}. The complete detection procedure is summarized in Algorithm \ref{alg:alg1}.

\begin{algorithm}[tb!]
   \caption{Private Adaptive CUSUM via Graph Perturbation Mechanism}
   \label{alg:alg1}
\begin{algorithmic}
   \STATE {\bfseries Input:} Data sequence $\{\bA_t,t\in\mathbb N\}$, detection threshold $b$.
   \STATE Initialize $S_1=0$; $t=2$. 
   \REPEAT
   \STATE Calculate $\bm{\hat\sigma}_{t-1}$ via solving \eqref{eqn:SDP_relaxation_asymmetric} on perturbed graph $\tilde\bA_{t-1}$.
   \STATE Update $S_{t}=(S_{t-1})^{+} + \log\frac{\Pr(\tilde\bA_t;\bm{\hat\sigma}_{t-1})}{\Pr(\tilde\bA_t;\bm{\sigma}^{\text{pre}})}$.
   \IF{$S_{t} \geq b$}
   \STATE Raise Alarm
   \ENDIF
   \STATE $t=t+1$;
   \UNTIL{Raise Alarm}
\end{algorithmic}
\end{algorithm}

\begin{remark}[Unknown Parameters] In the detection procedure, for simplicity, we make the assumption that $\bm{\sigma}^{\text{pre}}$, $p$, and $\zeta$ are known. Our detection procedure can be extended to the case with unknown parameters as well {(see Appendix~\ref{app:unknown-para} for more details)}. In short, when these parameters are unknown, usually there is a sufficient amount of historical data sampled from the pre-change nominal scenario which can be used to estimate them to a desired accuracy. Furthermore, in the more general case where the parameters $p$ and $\zeta$ may also change, and their post-change values are unknown, we can use the most recent samples to jointly estimate $p$ and $\zeta$ and then substitute to yield the detection statistics.
\end{remark}


\subsection{Detection Procedure under the CDP Setting}\label{sec:private_CDP}


\paragraph{Privacy Mechanism (2): Perturbation Stability Mechanism}

We present a stability-based community estimator $\hat{\bm\sigma}$ that satisfies the CDP conditions, which will then be used in the calculation of detection statistics. Given a graph represented by the adjacency matrix $\bA$, we first compute privately the stability of the ML estimator with respect to ${\bA}$ denoted by $d({\bA})$, which is defined by the \emph{minimum} number of edge modifications on ${\bA}$, so that the estimator output on the modified graph ${\bA}'$ differs from that on ${\bA}$, i.e., $\hat{\bm{\sigma}}({\bA})\neq \hat{\bm{\sigma}}({\bA}')$. The formal definition of stability is given next.
\begin{definition} [Stability of $\hat{\bm{\sigma}}$]\label{def:stab}
 The stability of an estimator $\hat{\bm{\sigma}}$ with respect to a graph ${\bA}$ is defined as follows, 
 \begin{align}
 \label{eqn:stab-sigma}
    d(\bA) = \{\min k:  \exists \bA', \text{dist}(\bA, \bA')\leq k+1, \hat{\bm{\sigma}}(\bA)\neq \hat{\bm{\sigma}}(\bA')\}, \nonumber 
\end{align}
where $\text{dist}(\bA, {\bA}')$ is defined as the number of differing entries between the two adjacency matrices.
\end{definition}
If the graph ${\bA}$ is stable enough w.r.t. the recovery algorithm (i.e., if $d(\bA)$ is larger than a threshold, which depends on $(\epsilon, \delta)$), then we release the non-private estimate $\hat{\bm{\sigma}}({\bA})$; otherwise we release a {\it random} label vector. The key intuition is that from the output of a stable estimator, one cannot precisely infer the presence or absence of a single edge (thereby providing edge DP guarantee). 
The stability-based recovery mechanism is presented in Algorithm \ref{algo:perturb_stability}. The proposed mechanism can be readily shown to satisfy $(\epsilon, \delta)$-edge CDP in Definition \ref{def:edgeDP}  \cite{dwork2006calibrating}.

\begin{algorithm}[ht]
  \caption{Perturbation Stability-based Community Recovery Algorithm}
  \label{algo:perturb_stability}
  \begin{algorithmic}[1]
     \STATE {\bfseries Input:} The graph $G({V}, {E})$ with adjacency matrix $\bA$.
     \STATE {\bfseries Output:} A private labelling vector $\hat{\bm\sigma}$. 
    \STATE $d(\bA) \leftarrow$ compute stability of $\hat{\bm{\sigma}}$ with respect to graph ${G}$ (Definition \ref{def:stab}).
    \STATE $\tilde{d} \leftarrow d({\bA}) + \operatorname{Lap}(0, 1/\epsilon)$.
    \IF{$\tilde{d}  > \frac{\log{1/\delta}}{\epsilon}$}
    \STATE Output $\hat{\bm{\sigma}}({\bA})$
    \ELSE
    \STATE Output $\perp$ (a random label vector) 
    \ENDIF
    \end{algorithmic}
\end{algorithm}

\begin{remark}[Efficient Computation] The complexity of the stability based mechanism using semidefinite relaxation of ML estimator in \eqref{eqn:SDP_relaxation_asymmetric} is $\Theta(n^{\mathcal{O}(\log(n))})$. Demonstrably, the algorithm maintains its efficacy if we opt for $\min\{{d({\bA}), \mathcal{O}(\log{n})}\}$ in place of $d({\bA})$. To achieve this, it is sufficient to compute $\hat{\bm{\sigma}}({\bA}')$ solely for those graphs ${\bA}'$ where $\text{dist}({\bA}, {\bA}')= \mathcal{O}(\log{n})$. Computing the distance to instability still needs to be computationally efficient.  We remedy this problem by designing a computationally efficient private estimator for the distance to instability that runs in $\mathcal{O}(\log^{3}(n) \cdot \text{poly}(n) / \epsilon^{2})$, with a negligible privacy leakage term $\delta$ on the leakage as we show in the following.
\end{remark}

\paragraph{From Concentration to Stability} The complexity of proving the stability of the semidefinite relaxation for ML surpasses that of the vanilla ML due to certain factors related to the optimization problem of SDP. In SDP, the ground truth label is considered as the optimal solution only under some conditions, making it challenging to establish that a solution is not a minimum bisection.  Consequently, we employ the concept of concentration to establish stability. We demonstrate that all graphs within a distance of $\mathcal{O}(\log{n})$ from $G$ also exhibit concentration. Lastly, we derive a lower bound on $a \big(\sqrt{1-\zeta} - \sqrt{\zeta} \big)^{2}$ that satisfies two conditions: (1) it is sufficient for concentration and (2) it preserves concentration even when we flip up to $\Omega(\log{n})$ connections.


 \paragraph{A Computationally Efficient Subsampling Stability Mechanism} In order to calculate the stability of our semidefinite relaxation of ML estimator in \eqref{eqn:SDP_relaxation_asymmetric} efficiently, we present an algorithm by deriving a lower bound on the distance to instability $d$ as a proxy for stability,  termed as ``\textit{subsampling stability-based estimator}'', see Fig. \ref{fig:subsampling_stability_mechanism_illustration} for an illustration. 
 
  \begin{figure}[t]
\centering
	\centering
	{\includegraphics[width=\columnwidth]{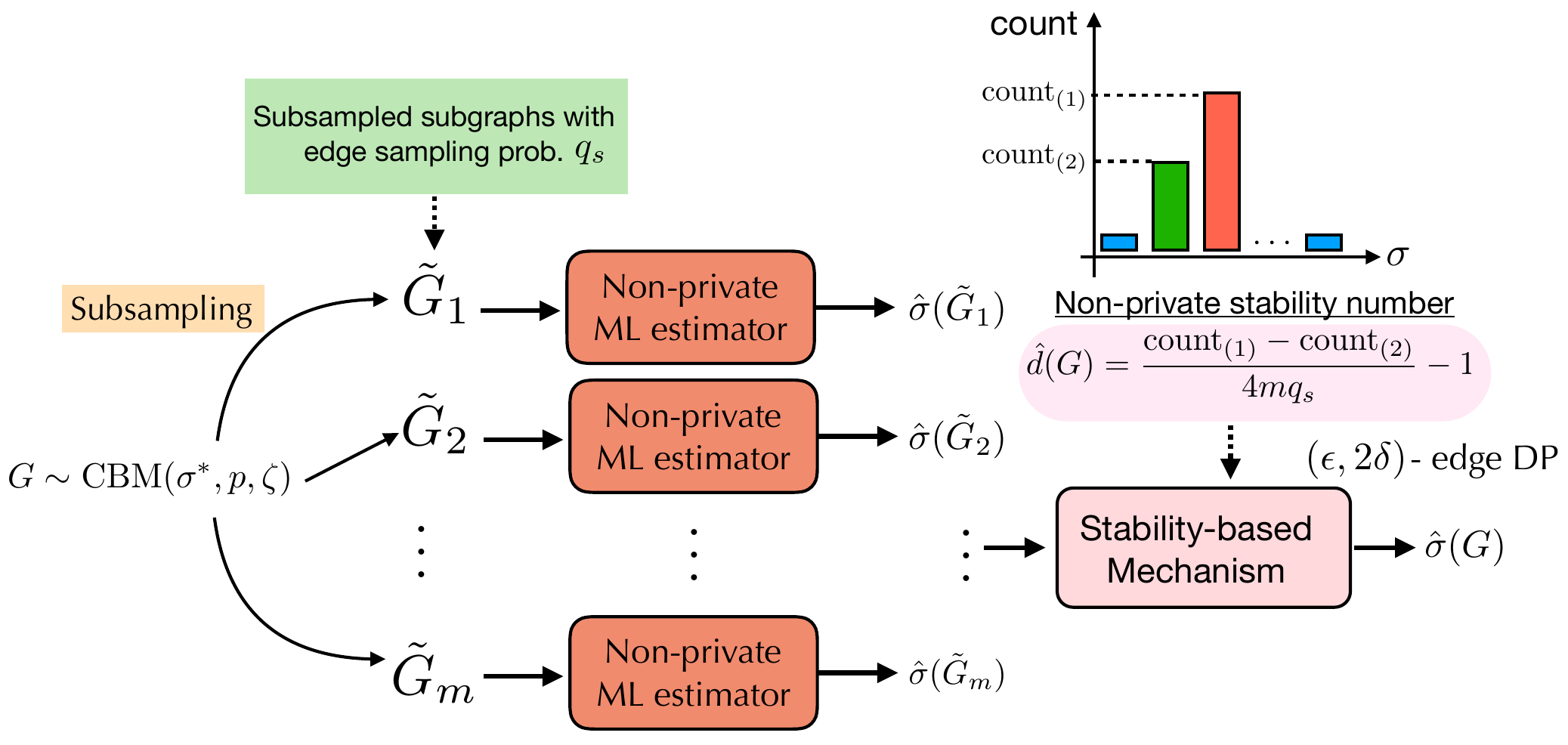}}	
	\caption{\small{Illustration of the subsampling stability-based mechanism:  From the original graph we generate \(m\) correlated subgraphs by
  including each edge independently with probability \(q_s\).
  A non-private community estimator is then applied to every subgraph,
  producing label vectors \(\{\hat{\bm{\sigma}}(\bA_k)\}_{k=1}^m\).
  These labelings are tallied in a histogram, and the modal labeling—
  provably stable under appropriately chosen
  \((q_s,m,\epsilon,\delta)\)—is returned as the private output.}}
\label{fig:subsampling_stability_mechanism_illustration}
    \vspace{-0.1in}
\end{figure}

The key idea in this algorithm is to create $m$ \textit{correlated} subgraphs $\{{\bA}_{1},{\bA}_{2},\ldots,{\bA}_{m}\}$ of the original graph ${\bA}$ where each subgraph ${\bA}_{k}$ is generated by randomly subsampling with replacement of the edges in ${\bA}$ with probability $q_{s}$. We then apply our \textit{non-private} ML-based estimator \eqref{eqn:SDP_relaxation_asymmetric} on each subgraph ${\bA}_{k}$. The labeling vectors $\hat{\bm{\sigma}}({\bA}_{k})$ are then represented on a histogram. {{Now, define $\text{count}{(\bm{\sigma})} \triangleq | \{k\in[m]: \hat{\bm{\sigma}}({\bA}_{k}) =  \bm{\sigma}\} | $.}} As shown in \cite{dwork2014algorithmic}, the stability of the histogram is proportional to the difference between the most frequent bin (i.e., the mode) and the second most frequent bin.  In other words, the most frequent outcome of the histogram agrees with the outcome of the original graph with high probability under an appropriate choice of the CBM parameters $(a, \zeta)$, the privacy budget $(\epsilon, \delta)$, the edge sampling probability $q_{s}$, and the number of subsampled weighted graphs $m$. We summarize the mechanism in Algorithm \ref{algo:sub_stability}.  

\begin{algorithm}[ht]
  \caption{Subsampling-based Efficient Private Community Recovery}
  \label{algo:sub_stability}
  \begin{algorithmic}[1]
     \STATE {\bfseries Input:} The graph $G({V}, {E})$ with adjacency matrix $\bA$.
     \STATE {\bfseries Output:} A private labelling vector $\hat{\bm\sigma}$.
     \STATE $q_{s} \leftarrow \epsilon/32 \log(n)$, $m \leftarrow \log(n/\delta)/q_{s}^{2}$.
     \STATE Subsample $m$ subgraphs ${\bA}_{1}, {\bA}_{2}, \cdots, {\bA}_{m}$.
     \STATE Compute $\bar{\bm{\sigma}} = (\hat{\bm{\sigma}}({\bA}_{1}), \hat{\bm{\sigma}}({\bA}_{2}), \cdots, \hat{\bm{\sigma}}({\bA}_{m}))$.
    \STATE $\hat{d}  \leftarrow (\text{count}_{(1)} - \text{count}_{(2)})/4 m q_{s} - 1$. 
    \STATE $\tilde{d} \leftarrow \hat{d}  + \operatorname{Lap}(0, 1/\epsilon)$.
    \IF{$\tilde{d}  > \frac{\log{1/\delta}}{\epsilon}$}
    \STATE Output $\hat{\bm{\sigma}}({\bA}) = \text{mode}(\bar{\bm{\sigma}})$ 
    \ELSE
    \STATE Output $\perp$ (a random label vector) 
    \ENDIF
  \end{algorithmic}
\end{algorithm}

\paragraph{Detection Procedure}

Given the stability-based community estimator $\hat{\bm\sigma}_{t-1}$ obtained from the last sample $\bA_{t-1}$, we calculate the {privatized} Adaptive CUSUM statistics $\tilde S_t$ as follows: 
\begin{equation}\label{eq:stat-wlcusum2}
\mathcal S_t = (\mathcal S_{t-1})^{+} + \log\frac{\Pr(\bA_t;\bm{\hat\sigma}_{t-1})}{\Pr(\bA_t;\bm{\sigma}^{\text{pre}})}, \quad \tilde S_t = \mathcal S_t + \rm{Lap}\left(\frac{4C}{\epsilon}\right),    
\end{equation}
where $C=2\log(({1-\zeta})/{\zeta})$ is the maximum difference $\big|\log\frac{\Pr(\bA;\bm{\hat\sigma}_{t-1})}{\Pr(\bA;\bm{\sigma}^{\text{pre}})}-\log\frac{p(\bA';\bm{\hat\sigma}_{t-1})}{p(\bA';\bm{\sigma}^{\text{pre}})}\big|$ for any pair of $(\bA,\bA')$ that differs in one edge. Note that the statistics in \eqref{eq:stat-wlcusum2} is different from \eqref{eq:stat-wlcusum} in two ways: (i) the estimator $\bm{\hat\sigma}_{t}$ is obtained differently due to the difference between LDP and CDP settings; (ii) the non-private data $\bA_t$ is utilized to update the statistics, consequently we add Laplace noise to the detection statistics as a common practice to ensure individual data privacy \cite{dwork2006calibrating}.
The stopping rule is constructed similarly as
\begin{equation}\label{eq:stop2}
T_{\rm C} :=\inf\{t: \tilde S_t \geq \tilde b\},    
\end{equation}
where $\tilde b = b + \rm{Lap}(\frac{2C}{\epsilon})$ is a randomized threshold based on a pre-specified constant $b$. 
By the DP property of the AboveThresh algorithm \cite{dwork2009complexity,dwork2006calibrating}, we have the above stopping rule is $\epsilon$-edge  CDP.

\section{Theoretical Analysis}\label{sec:theory}

In this section, we present a theoretical analysis for two proposed private detection frameworks. We first present sufficient and necessary conditions to ensure the exact recovery of the unknown community labels in Section \ref{sec:conditions}. Then we present theoretical guarantees for the average run length and detection delay in Section \ref{sec:detection_theory}. All proofs are deferred to the Appendices.

\subsection{Conditions for Exact Community Recovery }\label{sec:conditions}

\paragraph{Sufficient Condition under Graph Perturbation Mechanism}\label{sec:sufficient_recovery}

In the proposed detection method, we have an online estimate for the unknown community structure. We first analyze the sufficient conditions for exact community recovery under LDP privacy constraints.

\begin{definition}[Exact Recovery] 
 An estimator $\hat{\bm{\sigma}}(\mathbf{A}) = \{\hat{\sigma}_{1}, \hat{\sigma}_{2}, \cdots, \hat{\sigma}_{n}\}$ satisfies exact recovery (up to a global flip) if the probability of error behaves as  
\begin{equation}\label{eqn:exact_recovery_definition}
\operatorname{Pr}(\hat{\bm{\sigma}}(\mathbf{A})= {\bm{\sigma}^{*}} ) \geq  1 - o(1), 
\end{equation}
where the probability is taken over both the randomness of the observation $\bA$ and the estimator. 
\end{definition}

\begin{theorem} \label{thm:private_threshold_condition_one_time_instance} 
Suppose $\epsilon \geq \epsilon_{n} = \Omega(\log(n))$, $p = a \frac{\log(n)}{n}$, and $a > \frac{2 (n^{3/2} - n)}{(n-1) \log(n)}$. The estimator from \eqref{eqn:SDP_relaxation_asymmetric}, using perturbed graph $\tilde\bA$ under our Privacy Mechanism in Section \ref{sec:ldp-method}, satisfies exact recovery if  
\begin{align}\label{eq:recov_condi}
 a \big(\sqrt{1-\zeta} - \sqrt{\zeta} \big)^{2} > \left( \frac{\sqrt{n} }{\sqrt{n} -1} \right) \times  \left(\frac{e^{\epsilon} + 1}{e^{\epsilon} - 1} \right).
\end{align}
\end{theorem}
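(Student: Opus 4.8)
The plan is to reduce the private recovery problem to an ordinary (non-private) exact-recovery statement for a CBM with \emph{perturbed} parameters, and then to carry the finite-$n$ and finite-$\epsilon$ corrections through a careful algebraic estimate of the resulting Hellinger-type divergence. Since the graph perturbation mechanism passes each edge independently through the randomized-response channel $Q_x$, the privatized matrix $\tilde\bA$ is itself distributed as $\CBM(\bm{\sigma}^{*},\tilde p,\tilde\zeta)$ with $\tilde p,\tilde\zeta$ as in \eqref{eqn:perturbed_parameters}. Hence solving the SDP \eqref{eqn:SDP_relaxation_asymmetric} on $\tilde\bA$ is exactly the non-private SDP estimator run on a CBM with parameters $(\tilde p,\tilde\zeta)$, and its success is governed by the standard CBM recovery guarantee. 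I would therefore invoke the sufficient condition for exact recovery of the SDP estimator from \cite{hajek2016achieving_extensions}: for a censored block model with revealed probability $q$ and flip probability $\xi$, the SDP recovers the labels with probability $1-o(1)$ provided $n\,q(\sqrt{1-\xi}-\sqrt{\xi})^{2}$ exceeds $\log n$ times a finite-$n$ correction factor. It is this non-asymptotic form that will ultimately produce the factor $\tfrac{\sqrt n}{\sqrt n-1}$.

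The second step is to evaluate the effective divergence $\tilde p(\sqrt{1-\tilde\zeta}-\sqrt{\tilde\zeta})^{2}=\tilde p-2\sqrt{\tilde p\tilde\zeta\cdot\tilde p(1-\tilde\zeta)}$ in terms of the original $(p,\zeta,\epsilon)$. Writing $E=e^{\epsilon}-1$, a direct computation from \eqref{eqn:perturbed_parameters} gives $\tilde p\tilde\zeta=(1+p\zeta E)/(e^{\epsilon}+2)$ and $\tilde p(1-\tilde\zeta)=(1+p(1-\zeta)E)/(e^{\epsilon}+2)$, so that
\[
\tilde p(\sqrt{1-\tilde\zeta}-\sqrt{\tilde\zeta})^{2}=\frac{2+pE-2\sqrt{(1+p\zeta E)(1+p(1-\zeta)E)}}{e^{\epsilon}+2}.
\]
Completing the square inside the radical with $W=1+\tfrac{pE}{2}$ and $V=\tfrac{pE(1-2\zeta)}{2}$ yields $(1+p\zeta E)(1+p(1-\zeta)E)=W^{2}-V^{2}$, so the numerator equals $2(W-\sqrt{W^{2}-V^{2}})=2V^{2}/(W+\sqrt{W^{2}-V^{2}})$ and hence
\[
\tilde p(\sqrt{1-\tilde\zeta}-\sqrt{\tilde\zeta})^{2}=\frac{2V^{2}}{(e^{\epsilon}+2)\,(W+\sqrt{W^{2}-V^{2}})}.
\]

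The third step is to lower-bound this closed form and match it to \eqref{eq:recov_condi}. I would use the identity $(1-2\zeta)^{2}=(\sqrt{1-\zeta}-\sqrt{\zeta})^{2}(\sqrt{1-\zeta}+\sqrt{\zeta})^{2}$ to expose the target factor $(\sqrt{1-\zeta}-\sqrt{\zeta})^{2}$ hidden inside $V^{2}$, and then bound $W+\sqrt{W^{2}-V^{2}}$ from above. This is where the hypotheses are consumed: the assumption $\epsilon\ge\epsilon_{n}=\Omega(\log n)$ makes $e^{\epsilon}$ polynomially large, which keeps $\tilde\zeta$ bounded away from $1/2$ (so that the recovery theorem of \cite{hajek2016achieving_extensions} genuinely applies to $(\tilde p,\tilde\zeta)$); and the condition $a>\tfrac{2(n^{3/2}-n)}{(n-1)\log n}$, which is equivalent to $p(\sqrt n+1)>2$, forces $pE$ to be large enough that the surplus term in $W+\sqrt{W^{2}-V^{2}}$ collapses precisely to the factor $\tfrac{e^{\epsilon}+1}{e^{\epsilon}-1}$. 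Multiplying by $n$, using $p=a\log(n)/n$, and comparing against the finite-$n$ recovery threshold then turns $n\tilde p(\sqrt{1-\tilde\zeta}-\sqrt{\tilde\zeta})^{2}>\log n$ into exactly \eqref{eq:recov_condi}.

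The main obstacle is the non-asymptotic bookkeeping in this last step: both correction factors $\tfrac{\sqrt n}{\sqrt n-1}$ and $\tfrac{e^{\epsilon}+1}{e^{\epsilon}-1}$ must be tracked exactly rather than absorbed into $o(1)$ error terms, so the upper bound on $W+\sqrt{W^{2}-V^{2}}$ has to be tight in the operative regime and the lower bound on $a$ must be invoked with its precise constant. A secondary subtlety, which I would verify at the outset, is that the perturbed model remains inside the parameter window for which the cited SDP recovery guarantee is valid, since $\tilde\zeta\to 1/2$ as the privacy level increases; the requirement $\epsilon\ge\epsilon_{n}=\Omega(\log n)$ is exactly what rules this degeneracy out.
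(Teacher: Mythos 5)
Your overall strategy coincides with the paper's: the perturbed matrix $\tilde\bA$ is again a $\CBM(\bm\sigma^{*},\tilde p,\tilde\zeta)$, so one reruns the non-private SDP recovery analysis (Chernoff bound on $\min_i d_i^{*}$ plus spectral concentration) on the perturbed model, obtains the threshold $n\,\tilde p(\sqrt{1-\tilde\zeta}-\sqrt{\tilde\zeta})^{2}>\log n$, and translates it back into $(a,\zeta,\epsilon)$. Your $W,V$ identity giving $\tilde p(\sqrt{1-\tilde\zeta}-\sqrt{\tilde\zeta})^{2}=2V^{2}/\bigl((e^{\epsilon}+2)(W+\sqrt{W^{2}-V^{2}})\bigr)$ is correct and is a cleaner exact form than the paper's route through the Taylor expansion of the Chernoff exponent; it is a legitimate substitute for that step.

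Two places in your third step are thinner than they appear. First, the constant: the crude bound $W+\sqrt{W^{2}-V^{2}}\le 2W=2+pE$ together with $(\sqrt{1-\zeta}+\sqrt{\zeta})^{2}\ge 1$ yields the threshold $\tfrac{2(e^{\epsilon}+2)}{e^{\epsilon}-1}\cdot\tfrac{\sqrt n}{\sqrt n-1}$, roughly a factor of $2$ worse than the stated $\tfrac{e^{\epsilon}+1}{e^{\epsilon}-1}\cdot\tfrac{\sqrt n}{\sqrt n-1}$. To recover the stated constant you must keep the factor $(\sqrt{1-\zeta}+\sqrt{\zeta})^{2}$ exposed by your identity $(1-2\zeta)^{2}=(\sqrt{1-\zeta}-\sqrt{\zeta})^{2}(\sqrt{1-\zeta}+\sqrt{\zeta})^{2}$ and cancel it against the sharper estimate $W+\sqrt{W^{2}-V^{2}}\le\tfrac{pE}{2}(\sqrt{1-\zeta}+\sqrt{\zeta})^{2}+1+\sqrt{1+pE}$, then use the hypothesis on $a$ (equivalently $pE\ge 2(\sqrt n-1)$) and $e^{\epsilon}+2\ge\sqrt n$ to absorb the leftover terms into the $\tfrac{\sqrt n}{\sqrt n-1}$ slack. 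Note also that in the paper the hypothesis on $a$ is consumed to guarantee $\tfrac{pE}{2+pE}\ge 1-\tfrac{1}{\sqrt n}$, which is the actual source of the $\tfrac{\sqrt n}{\sqrt n-1}$ factor — it does not come from a finite-$n$ version of the cited recovery theorem, which is not available in that form. Second, you cannot invoke the non-private guarantee of \cite{hajek2016achieving_extensions} as a black box, because $\tilde\zeta$ depends on $n$ (it tends to $1/2$ unless $\epsilon$ grows); the paper re-derives the Chernoff exponent for the perturbed edge distribution precisely to check that $\ell(k_n/m)=\tilde p(\sqrt{1-\tilde\zeta}-\sqrt{\tilde\zeta})^{2}+o(\log n/n)$ survives with these $n$-dependent parameters. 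Your remark that $\epsilon=\Omega(\log n)$ rules out the degeneracy is the right observation, but that verification must be carried out rather than cited.
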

\noindent {{The proof of this theorem is presented in Appendix~\ref{app:proofIV1}.}} Observe that, if $\epsilon$ remains constant while $n$ increases, it follows that from \eqref{eqn:perturbed_parameters}, $\lim_{n\rightarrow \infty} \tilde{\zeta} = 1/2$. This suggests that with a consistent $\epsilon$, the characteristics of both inter- and intra-community edges converge in the asymptotic limit, rendering exact recovery infeasible. However, Theorem \ref{thm:private_threshold_condition_one_time_instance} demonstrates that exact recovery can be achieved when the leakage grows logarithmically with $n$.  Our analysis also reveals an interesting trade-off between the privacy scaling coefficient $c$, where $\epsilon = c \log(n)$ and the CBM parameter $a$ as highlighted in Fig. \ref{fig:phase_transition_mechanisms_app} in the Appendix.

\paragraph{Sufficient Condition under Stability Mechanism} We next present a sufficient exact recovery condition under CDP for the perturbation stability-based recovery algorithm in the following theorem. We also plot and compare the threshold conditions for exact recovery under both mechanisms, as shown in Fig. \ref{fig:phase_transition_mechanisms}.

\begin{theorem} \label{thm:perturbation_stability} The perturbation stability-based mechanism (described in Algorithm \ref{algo:perturb_stability}) satisfies $(\epsilon, \delta)$-edge CDP, and it satisfies exact recovery if 
\begin{align}\label{eq:cond2}
         a \big(\sqrt{1-\zeta} - \sqrt{\zeta} \big)^{2} >  1,
\end{align}
for $a > 3/\epsilon$, $p = a \log(n)/n$, $\epsilon>0$, and $\delta = n^{-2}$. 
\end{theorem}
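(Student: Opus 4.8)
The statement bundles two claims—an $(\epsilon,\delta)$-edge CDP guarantee and an exact-recovery guarantee—and I would establish them separately.

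\textbf{Privacy.} Algorithm~\ref{algo:perturb_stability} is an instance of the propose-test-release paradigm, so the privacy claim follows once I verify the two ingredients it requires. First, the distance to instability $d(\bA)$ of Definition~\ref{def:stab} has $\ell_{1}$-sensitivity one: if $\bA,\bA'$ differ in a single edge, then any witness of instability for one graph becomes, after adjusting that one edge, a witness for the other, so $|d(\bA)-d(\bA')|\le 1$. Hence $\tilde d=d(\bA)+\operatorname{Lap}(1/\epsilon)$ is $\epsilon$-DP by the Laplace mechanism. Second, the threshold $\log(1/\delta)/\epsilon$ guarantees that whenever the test passes we are releasing $\hat{\bm{\sigma}}(\bA)$ on a stable instance, so the released label vector is identical on all neighbours, while the probability of passing the test starting from an \emph{unstable} instance is at most $\Pr[\operatorname{Lap}(1/\epsilon)>\log(1/\delta)/\epsilon]=\delta$. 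Combining the two cases through the standard analysis \cite{dwork2006calibrating} yields $(\epsilon,\delta)$-edge CDP in the sense of Definition~\ref{def:edgeDP}.

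\textbf{Reduction for recovery.} Since $\perp$ is output only when the test fails, exact recovery reduces to showing that with probability $1-o(1)$ two events occur together: the non-private SDP estimator $\hat{\bm{\sigma}}(\bA)$ of \eqref{eqn:SDP_relaxation_asymmetric} equals $\bm{\sigma}^{*}$, and the noisy stability $\tilde d$ clears the threshold $\log(1/\delta)/\epsilon=2\log(n)/\epsilon$ (using $\delta=n^{-2}$). Because $|\operatorname{Lap}(1/\epsilon)|\le \log(n)/\epsilon$ except with probability $2/n$, it suffices to prove the \emph{deterministic} high-probability bound $d(\bA)=\Omega(\log n)$ with a constant large enough to exceed $2\log(n)/\epsilon$ when $a>3/\epsilon$. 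Thus the whole recovery claim rests on lower bounding the distance to instability.

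\textbf{Concentration-to-stability.} I would begin from the dual-certificate characterization of exact SDP recovery for the CBM \cite{hajek2016achieving,hajek2016achieving_extensions}: $\bY^{*}=\bm{\sigma}^{*}(\bm{\sigma}^{*})^{T}$ is the unique maximizer of \eqref{eqn:SDP_relaxation_asymmetric} whenever the diagonal dual $\bD=\operatorname{diag}(d_{i})$ with $d_{i}=\sum_{j}A_{ij}\sigma_{i}^{*}\sigma_{j}^{*}$ makes $\bS:=\bD-\bA$ positive semidefinite with $\bS\bm{\sigma}^{*}=0$ and $\lambda_{2}(\bS)>0$. Decomposing $\bA=\bE\bA+(\bA-\bE\bA)$ and testing positivity on the subspace orthogonal to $\bm{\sigma}^{*}$, this reduces to the concentration requirement $\min_{i}d_{i}+p(1-2\zeta)>\|\bA-\bE\bA\|$; a Chernoff bound optimized at $\theta=\tfrac12\log((1-\zeta)/\zeta)$ gives $\Pr[d_{i}\le t]\le n^{-a(\sqrt{1-\zeta}-\sqrt{\zeta})^{2}(1+o(1))}\,n^{\theta t/\log n}$, so under $a(\sqrt{1-\zeta}-\sqrt{\zeta})^{2}>1$ a union bound yields $\min_{i}d_{i}=\Omega(\log n)$ with the spectral term $\|\bA-\bE\bA\|=O(\sqrt{a\log n})$ of strictly lower order. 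The robustness step is then to show this certificate survives any $k=O(\log n)$ edge flips: flipping one edge changes two of the $d_{i}$ by at most $2$ and perturbs $\|\bA-\bE\bA\|$ by $O(1)$, so an adversarial set of $k$ flips shifts the certificate margins by $O(k)$; choosing the constant in $k$ small relative to the $\Omega(\log n)$ slack keeps $\bS$ PSD with the same null vector, so $\hat{\bm{\sigma}}(\bA')=\bm{\sigma}^{*}=\hat{\bm{\sigma}}(\bA)$ for all such $\bA'$. This is exactly $d(\bA)=\Omega(\log n)$, and matching the margin to the threshold $2\log(n)/\epsilon$ produces the requirement $a>3/\epsilon$.

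\textbf{Main obstacle.} The delicate step is this last, quantitative robustness of the certificate under $\Omega(\log n)$ \emph{simultaneous, adversarially chosen} edge flips: I must control the degradation of both $\min_{i}d_{i}$ and $\lambda_{2}(\bS)$ uniformly over all $\binom{\binom{n}{2}}{k}$ neighbouring graphs rather than for a single realization, and verify that the $\Omega(\log n)$ concentration slack strictly dominates this worst-case degradation. It is precisely this uniform-over-flips requirement—tying the CBM margin $a(\sqrt{1-\zeta}-\sqrt{\zeta})^{2}$ to the privacy threshold $2\log(n)/\epsilon$—that forces the extra constraint $a>3/\epsilon$ beyond the classical recovery condition \eqref{eq:cond2}.
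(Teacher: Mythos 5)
Your proposal follows essentially the same route as the paper: the privacy claim is obtained from the distance-to-instability / propose-test-release framework of Dwork et al.\ (which the paper likewise delegates to \cite{dwork2006calibrating}), and the recovery claim is obtained by the paper's ``concentration to stability'' argument---showing the dual certificate of the SDP from Appendix~\ref{app:recovery} survives up to $\Omega(\log n)$ adversarial edge flips, so that $d(\bA)=\Omega(\log n)$ clears the noisy threshold $\log(1/\delta)/\epsilon=2\log(n)/\epsilon$ under $a>3/\epsilon$. Your outline is in fact at least as explicit as the paper's own treatment of this theorem, and correctly isolates the uniform-over-flips robustness of the certificate as the step carrying the real technical burden.
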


\begin{figure}[t]
\centering
	\centering
	{\includegraphics[width= 0.75\columnwidth]{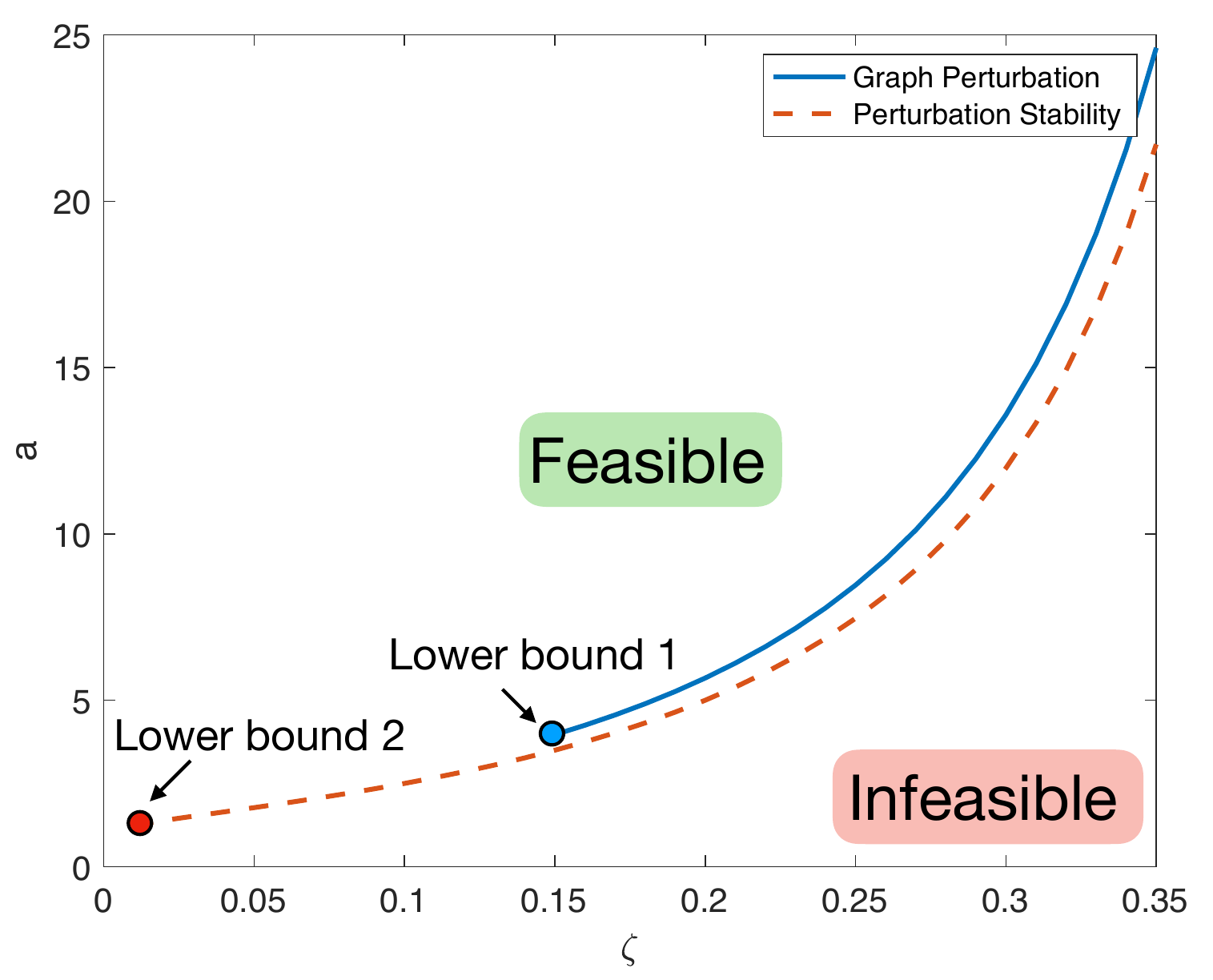}}	
	\caption{\small{Exact Recovery Regime that depends on $(a,\zeta)$, where $n = 100$ and $\epsilon = \log(n)$.}}
    \label{fig:phase_transition_mechanisms}
\end{figure}


We also present a sufficient condition for exact recovery for the computationally-efficient stability mechanism in the following theorem.

\begin{theorem} \label{thm:sub_stability} Algorithm \ref{algo:sub_stability} satisfies $(\epsilon, 2\delta)$-edge CDP, and it satisfies exact recovery if 
\begin{align}\label{eq:cond2}
         a \big(\sqrt{1-\zeta} - \sqrt{\zeta} \big)^{2} >  \max \left\{\frac{32 \log(n)}{\epsilon}, 1\right\},
\end{align}
for $p = a \log(n)/n$, $\epsilon>0$, and $\delta = n^{-2}$. 
\end{theorem}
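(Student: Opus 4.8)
The plan is to prove the two assertions separately: the $(\epsilon, 2\delta)$-edge CDP guarantee and the exact-recovery guarantee, both resting on one common observation. Independently retaining each revealed edge of $\bA \sim \CBM(\bm{\sigma}^{*}, p, \zeta)$ with probability $q_s$ leaves the sign mechanism (governed by $\zeta$) untouched and merely thins the revelation probability, so each subsampled graph obeys $\bA_k \sim \CBM(\bm{\sigma}^{*}, p q_s, \zeta)$. Writing $p q_s = a' \log(n)/n$ with $a' = a q_s = a\epsilon/(32\log n)$, the non-private sharp threshold underlying the SDP recovery results of \cite{hajek2016achieving_extensions} applies verbatim to each $\bA_k$.

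For exact recovery, I would first show each subgraph is recovered exactly. The per-subgraph condition $a'(\sqrt{1-\zeta}-\sqrt{\zeta})^{2} > 1$ becomes, after substituting $a' = a\epsilon/(32\log n)$, exactly $a(\sqrt{1-\zeta}-\sqrt{\zeta})^{2} > 32\log(n)/\epsilon$; the $\max\{\cdot,1\}$ in the statement absorbs the complementary regime $\epsilon > 32\log(n)$, in which $q_s$ saturates at $1$ and the ordinary information-theoretic threshold $1$ governs the full graph. A union bound over the $m = \mathrm{polylog}(n)$ subgraphs — using that each fails with only polynomially small probability under the strict threshold — shows that with high probability every $\hat{\bm{\sigma}}(\bA_k)$ equals $\bm{\sigma}^{*}$ (up to global flip). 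Consequently $\text{count}_{(1)} = m$ and $\text{count}_{(2)} = 0$, so $\hat{d} = 1/(4 q_s) - 1 = 8\log(n)/\epsilon - 1$. Since $\log(1/\delta)/\epsilon = 2\log(n)/\epsilon$ for $\delta = n^{-2}$, in the regime of interest ($q_s \to 0$, i.e. $\epsilon = \mathcal{O}(1)$) the gap $\hat{d} - \log(1/\delta)/\epsilon = 6\log(n)/\epsilon - 1$ is strictly positive for $n$ large, and the $\operatorname{Lap}(1/\epsilon)$ perturbation falls below it only with exponentially small probability; hence the test passes and the algorithm outputs $\text{mode}(\bar{\bm{\sigma}}) = \bm{\sigma}^{*}$, establishing exact recovery.

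For privacy, I would invoke the propose-test-release (distance-to-instability) framework of \cite{dwork2006calibrating}: it suffices to show that $\hat{d}$ is, with high probability over subsampling, a valid lower bound on the true distance to instability $d(\bA)$ and has sensitivity at most $1$ under a single edge change. Coupling the subsampling randomness across neighboring $\bA, \bA'$ that differ in one edge, the two families of subgraphs are identical except on those subgraphs that include the differing edge; the number of such subgraphs is $\mathrm{Binomial}(m, q_s)$, hence at most $2 m q_s$ with high probability by a Chernoff bound. Each affected subgraph perturbs $\text{count}_{(1)}$ and $\text{count}_{(2)}$ by at most one, so the gap $\text{count}_{(1)} - \text{count}_{(2)}$ changes by at most $4 m q_s$ and $\hat{d}$ by at most $1$; the same counting gives $\hat{d} \le d(\bA)$. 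Adding $\operatorname{Lap}(1/\epsilon)$ to this sensitivity-$1$ statistic and releasing $\hat{\bm{\sigma}}(\bA)$ only when $\tilde{d}$ exceeds $\log(1/\delta)/\epsilon$ is then $(\epsilon, \delta)$-edge CDP, and the extra $\delta$ (yielding $2\delta$ overall) accounts for the subsampling event on which the sensitivity/lower-bound property fails.

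The main obstacle is precisely the sensitivity/lower-bound step: the Chernoff control on the number of affected subgraphs must hold uniformly over all $\binom{n}{2}$ candidate edges the adversary could modify, which forces a union bound and dictates the choice $m = \log(n/\delta)/q_s^{2}$ so that the per-edge deviation probability is $\ll \delta/n^{2}$. Care is also needed so that the per-subgraph recovery failure probability is polynomially small rather than merely $o(1)$, ensuring it survives the union bound over all $m$ subgraphs; this is where the strict inequality in the recovery threshold is used.
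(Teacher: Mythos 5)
The paper does not actually reproduce a proof of this theorem — it only remarks that ``the proof follows along lines similar to those in \cite{seif2024differentially}'' — so there is no in-text argument to compare against. Your sketch is the standard subsample-and-aggregate stability argument from \cite{dwork2014algorithmic}, which is exactly what the algorithm's design (the $(\text{count}_{(1)}-\text{count}_{(2)})/4mq_s - 1$ statistic, the $\operatorname{Lap}(1/\epsilon)$ test against $\log(1/\delta)/\epsilon$) and the constants in the statement presuppose: each subgraph is marginally $\CBM(\bm{\sigma}^{*}, pq_s,\zeta)$, the per-subgraph threshold $aq_s(\sqrt{1-\zeta}-\sqrt{\zeta})^2>1$ rescales to $32\log(n)/\epsilon$, and the coupled-subsampling sensitivity bound yields $(\epsilon,2\delta)$-CDP. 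The calculations check out, so I consider this correct and essentially the intended approach.
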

\noindent {{The proof follows along lines similar to those in our work in \cite{seif2024differentially}.}

\begin{remark}[Comparison with Graph Perturbation Mechanism]
  A key advantage of the perturbation stability mechanism is its constant privacy leakage guarantee, i.e., $\epsilon = \mathcal{O}(1)$. However, it leads to a slightly higher running time and more strict recovery conditions. This highlights the fundamental tradeoffs between computational complexity, recovery conditions, and privacy guarantees. 
\end{remark}

{{
\begin{remark}[Memory Footprint and Scalability]
Efficient storage is critical when privacy mechanisms are applied
to large graphs; the following bounds summarise the peak RAM usage
of each step and how we scale beyond it.
\begin{itemize}
\item {Randomized‑response:}  A dense representation of the perturbed
      graph needs $\Theta(n^{2})$ bits, whereas a compressed edge list
      uses only $\Theta(m)$ words for sparse graphs.
   \item {Subsampling-stability mechanism:} 
    At any given time, only the current subsample \(S\subseteq E\) of size \(\lvert S\rvert = \beta\,m\) (for some \(0 < \beta < 1\)) is held in RAM. Hence, the peak memory cost is $\Theta\bigl(n + \beta\,m\bigr)$,
    which is linear in the graph size and strictly below the cost of storing the full edge list whenever \(\beta < 1\).
\end{itemize}
For massive graphs, we can further reduce resources by sketching \cite{seif2024differentially} the graph to
$\tilde{O}(n^{\alpha})$ nodes/edges with $\alpha<1$, running community
detection on the sketch, and propagating the labels back.  
This lowers both time and memory by a factor of $n^{1-\alpha}$ with negligible loss in accuracy, and the same idea accelerates the SDP
formulation even under the extra separation constraint induced by subsampling.
\end{remark}
}
}

\paragraph{Necessary Condition for Exact Community Recovery}

We subsequently introduce an information-theoretic lower bound that delineates the minimum necessary privacy levels for achieving exact community recovery in $\epsilon$-edge CDP frameworks. This lower bound serves as a critical benchmark for assessing the effectiveness and limitations of different privacy-preserving mechanisms.

\begin{theorem} [Converse Result]\label{thm:converse} Suppose there exists an $\epsilon$-edge CDP mechanism such that for any labeling vector $\bm{\sigma} \in \{\pm 1\}^{n}$, on $\mathbf{A} \sim \CBM(\bm\sigma,p,\zeta)$, output $\hat{\bm{\sigma}} \in \{\pm 1\}^{n}$ satisfying the exact recovery \eqref{eqn:exact_recovery_definition}. Then, the privacy leakage must scale at least as
\begin{align}
\epsilon & \geq  \frac{1}{2} \log \left[ 1 + \frac{2 \log(n) - \log(8 e)}{ p'  (4n-32)} \right], \label{eqn:lower_bound_privacy_CBM}
\end{align} 
where $p' = 2p^2\zeta(\zeta - 1) - (p - 1)^2 + 1$, $ p = a \log(n)/n$, and $\zeta \in (0, 1/2)$.
\end{theorem}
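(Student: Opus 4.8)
The plan is to prove the converse by reducing exact recovery to a multiple-hypothesis testing problem and then applying a Fano-type argument, with the privacy constraint entering through a coupling bound on the \emph{output} distributions rather than the raw graph laws. First, I would fix a balanced reference labeling $\bm\sigma^{(0)}$ and construct a packing of confusable alternatives $\{\bm\sigma^{(\ell)}\}$, each obtained by swapping the label of one $+1$ node with one $-1$ node. This produces $N = \Theta(n^{2})$ distinct labelings at small Hamming distance from $\bm\sigma^{(0)}$, so that $\log N = 2\log n - O(1)$, and the collected packing/Fano constants are what ultimately track the additive term $-\log(8e)$. If an $\epsilon$-edge CDP mechanism $M$ achieves exact recovery in the sense of \eqref{eqn:exact_recovery_definition}, then the decoder that outputs $\hat{\bm\sigma}$ and declares index $\ell$ whenever $\hat{\bm\sigma}=\bm\sigma^{(\ell)}$ identifies, with error probability $o(1)$, which labeling generated $\bA\sim\CBM(\bm\sigma^{(\ell)},p,\zeta)$.

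Second, I would invoke Fano's inequality on this testing problem: an error probability $o(1)$ under a uniform prior over the $N$ hypotheses forces the mutual information between the hypothesis index and the mechanism output to satisfy $I \gtrsim \log N = 2\log n - O(1)$. By convexity of mutual information, $I$ is upper bounded by the average pairwise divergence $\tfrac{1}{N}\sum_{\ell}\KL\!\big(M\circ P_{\bm\sigma^{(\ell)}}\,\big\|\,M\circ P_{\bm\sigma^{(0)}}\big)$ between the output laws induced by each alternative and the reference. This step converts ``the structure is hard to recover'' into ``the output laws must be well separated,'' and reduces everything to bounding a single KL term.

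Third --- the crux --- I would bound each $\KL\!\big(M\circ P_{\bm\sigma^{(\ell)}}\|M\circ P_{\bm\sigma^{(0)}}\big)$ using $\epsilon$-edge CDP together with the statistical closeness of the two graph laws. Since $\bm\sigma^{(\ell)}$ differs from $\bm\sigma^{(0)}$ only by a two-node swap, $P_{\bm\sigma^{(\ell)}}$ and $P_{\bm\sigma^{(0)}}$ agree on every edge except those incident to the two swapped nodes; on these $\le 4n-32$ edges the aligned and anti-aligned single-edge laws are exchanged. Setting up the natural edge-wise independent coupling, the per-edge probability of disagreement is exactly $p' = 1 - \sum_{a}\Pr(A=a)^{2}$ evaluated for the swapped pair, i.e. the collision-complement of the aligned and anti-aligned edge distributions, which one verifies equals $p' = 2p^{2}\zeta(\zeta-1)-(p-1)^{2}+1$. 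Propagating this coupling through the DP density-ratio bound, each differing edge contributes an amount amplified by the privacy factor, and using the elementary relation $\KL \le \log(1+\chi^{2})$ on the output laws and summing over the affected edges gives a bound of the form $\KL\!\big(M\circ P_{\bm\sigma^{(\ell)}}\|M\circ P_{\bm\sigma^{(0)}}\big)\le \log\!\big(1+(e^{2\epsilon}-1)\,p'\,(4n-32)\big)$.

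Finally, I would combine the two estimates into $2\log n-\log(8e)\le \log\!\big(1+(e^{2\epsilon}-1)\,p'\,(4n-32)\big)$, exponentiate, and solve for $\epsilon$, which rearranges to exactly \eqref{eqn:lower_bound_privacy_CBM}. The main obstacle is the third step: correctly propagating the $\epsilon$-edge CDP guarantee across a change in the data-generating \emph{distribution} rather than a single fixed neighboring dataset. The genuine difficulty is to show that privacy contracts the separation of the output laws by the factor $(e^{2\epsilon}-1)$ per affected edge while keeping the per-edge statistical cost exactly $p'$; this needs a careful coupling of the two random-graph laws and a strong-data-processing-type inequality for $\epsilon$-DP channels. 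A naive application of group privacy would instead charge the $\Theta(n)$ simultaneously-altered edges multiplicatively, yielding a vacuous $e^{\Theta(n)\epsilon}$ factor, so the argument must exploit that each such edge differs only \emph{in distribution} (with probability $p'$), making the contributions additive in KL.
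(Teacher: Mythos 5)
Your overall strategy is the same one the paper uses — a packing of confusable labelings, an edge-wise coupling of the two graph laws under which each of the $O(n)$ affected edges disagrees independently with probability $p'$ (the collision complement you correctly identify), and a group-privacy argument amortized over the coupled Hamming distance so that the cost is additive rather than the vacuous $e^{\Theta(n)\epsilon}$. The packaging differs: you route the argument through Fano's inequality and a KL bound on the output laws, whereas the paper never touches mutual information. It instead works directly with the probabilities of the disjoint recovery events $\mathcal S_i$, applies group privacy conditionally on the realized Hamming distance $\operatorname{Ham}(\bA,\bA')$, and uses Cauchy--Schwarz to split off the moment generating function $\mathbb{E}_{\Pi}[e^{2\epsilon \operatorname{Ham}(\bA,\bA')}]$, which factorizes exactly over edges into $\prod_e\bigl(1+p_e(e^{2\epsilon}-1)\bigr)$. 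That factorization is what produces the $(e^{2\epsilon}-1)p'(4n-32)$ term, and it requires no data-processing inequality for DP channels.

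The gap in your write-up is precisely the step you flag as the crux: the inequality $\KL\bigl(M\circ P_{\bm\sigma^{(\ell)}}\,\|\,M\circ P_{\bm\sigma^{(0)}}\bigr)\le \log\bigl(1+(e^{2\epsilon}-1)p'(4n-32)\bigr)$ is asserted, not derived, and in this form it is not what the natural coupling argument yields. What one can actually prove (e.g., by bounding $1+\chi^2$ of the output laws via Cauchy--Schwarz and group privacy under the coupling) is $\KL\le d\,\log\bigl(1+(e^{2\epsilon}-1)p'\bigr)\le (e^{2\epsilon}-1)p'\,d$ with $d$ the number of affected edges — the number of edges sits \emph{outside} the logarithm, not inside it. This matters for your final step: combining Fano with the correct bound gives the linear relation $2\log n - O(1)\le (e^{2\epsilon}-1)p'(4n-32)$, which does rearrange to \eqref{eqn:lower_bound_privacy_CBM}; whereas exponentiating your claimed relation $2\log n-\log(8e)\le\log\bigl(1+(e^{2\epsilon}-1)p'(4n-32)\bigr)$ would give $n^2/(8e)\le 1+(e^{2\epsilon}-1)p'(4n-32)$ and hence a different (much larger, $\epsilon=\Omega(\log n)$) lower bound — it does not "rearrange to exactly" the stated inequality as you claim. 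So the skeleton is sound and essentially matches the paper's, but the proof is incomplete until you supply the privacy-contraction lemma in its correct form and redo the final algebra accordingly.
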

\noindent {{The proof of this theorem is presented in Appendix~\ref{app:proofIV4}.}}  The lower bound derived in this context reveals the direct connection between the CBM parameters \( a \) and \( \zeta \), and the privacy budget \( \epsilon \) required for exact community recovery. Distinctly, this lower bound decays slower than the bound derived from the traditional packing lower bound argument method, as discussed in \cite{vadhan2017complexity}. In contrast to the constant nature of our derived lower bound, the packing argument suggests a more dynamic relationship, specifically, \( \epsilon = \Omega \left({\log(n)}/{{n \choose 2}} \right) \), highlighting a less robust link with \( n \). For smaller values of \( \epsilon \leq 1 \), the lower bound on the privacy budget in \eqref{eqn:lower_bound_privacy_CBM} scales as \( \epsilon = \Omega \left({\log(n)}/{n}\right) \), which indicates a nuanced scaling behavior in comparison to the previously discussed bound. It is worth highlighting that the proposed graph perturbation mechanism requires $\epsilon = \Omega(\log(n))$. There remains an interesting opportunity to close this gap by devising more sophisticated mechanisms.

\subsection{Guarantees on Average Run Length and Detection Delay}\label{sec:detection_theory}
In this subsection, we provide guarantees for the average run length and detection delay of both the procedure $T_{\rm L}$ under LDP setting in Equation \eqref{eq:stop_time} and $T_{\rm C}$ under CDP setting in Equation \eqref{eq:stop2}.

\begin{lemma}[Average Run Length of $T_{\rm L}$]\label{lem:arl}
For a given detection threshold $b>0$, we have the test $T_{\rm L}$ in method \eqref{eq:stop_time} satisfies $\bE_\infty[T_{\rm L}]\geq e^{b}$.
\end{lemma}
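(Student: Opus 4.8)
The plan is to exploit the likelihood-ratio structure of the increments together with a compensated exponential martingale and the optional stopping theorem. Write the per-step increment as $Z_t = \log\frac{\Pr(\tilde\bA_t;\bm{\hat\sigma}_{t-1})}{\Pr(\tilde\bA_t;\bm{\sigma}^{\text{pre}})}$, so that the recursion \eqref{eq:stat-wlcusum} reads $S_t = (S_{t-1})^{+} + Z_t$, and let $\mathcal F_{t-1}$ be the sigma-algebra generated by $\tilde\bA_1,\dots,\tilde\bA_{t-1}$ together with the internal randomness of the estimator.

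First I would establish the key identity $\bE_\infty[e^{Z_t}\mid \mathcal F_{t-1}] = 1$. Under $\Pr_\infty$ there is no change, so $\tilde\bA_t$ is drawn from $\Pr(\cdot;\bm{\sigma}^{\text{pre}})$ and is independent of $\mathcal F_{t-1}$, whereas $\bm{\hat\sigma}_{t-1}$ is $\mathcal F_{t-1}$-measurable (this is precisely the independence emphasized after \eqref{eq:stat-wlcusum}). Conditioning on $\mathcal F_{t-1}$ fixes $\bm{\hat\sigma}_{t-1}$, and since $e^{Z_t}$ is the ratio of two CBM probability mass functions evaluated at $\tilde\bA_t \sim \Pr(\cdot;\bm{\sigma}^{\text{pre}})$, the conditional expectation telescopes to $\sum_{\tilde a}\Pr(\tilde a;\bm{\hat\sigma}_{t-1}) = 1$, where the sum is over realizations of $\tilde\bA_t$.

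Next I would turn this into a process that absorbs the CUSUM reset. Set $U_t = e^{S_t}$; the recursion gives $U_t = \max(U_{t-1},1)\,e^{Z_t}$, hence $\bE_\infty[U_t\mid\mathcal F_{t-1}] = \max(U_{t-1},1) = U_{t-1} + (1-U_{t-1})^{+}$. Writing the Doob compensator $A_t = \sum_{s=1}^t (1-U_{s-1})^{+}$, the process $M_t = U_t - A_t$ is a martingale with $M_0 = 1$. Each increment of $A_t$ lies in $[0,1)$ and, crucially, the $s=1$ term vanishes because $S_0 = 0$ forces $U_0 = 1$; therefore $A_t \le t-1$. Applying optional stopping at the bounded time $T_{\rm L}\wedge n$, the identity $\bE_\infty[M_{T_{\rm L}\wedge n}] = 1$ yields $\bE_\infty[U_{T_{\rm L}\wedge n}] = 1 + \bE_\infty[A_{T_{\rm L}\wedge n}] \le \bE_\infty[T_{\rm L}\wedge n]$. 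On $\{T_{\rm L}\le n\}$ the defining inequality $S_{T_{\rm L}}\ge b$ gives $U_{T_{\rm L}\wedge n}\ge e^{b}$, so $\bE_\infty[U_{T_{\rm L}\wedge n}] \ge e^{b}\,\Pr_\infty(T_{\rm L}\le n)$. Letting $n\to\infty$ (the claim is vacuous if $\bE_\infty[T_{\rm L}]=\infty$, and otherwise $T_{\rm L}<\infty$ a.s. so $\Pr_\infty(T_{\rm L}\le n)\to 1$) yields $\bE_\infty[T_{\rm L}]\ge e^{b}$.

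The main obstacle is that the increments $Z_t$ are neither i.i.d. nor independent of the past, since $\bm{\hat\sigma}_{t-1}$ is itself computed from $\tilde\bA_{t-1}$; consequently the classical renewal/Wald analysis of CUSUM does not apply directly, and the entire argument hinges on the conditional identity of the first step, which is exactly why the independence of $\bm{\hat\sigma}_{t-1}$ and $\tilde\bA_t$ is indispensable. The only remaining points needing care are the optional-stopping justification (handled by truncating at $T_{\rm L}\wedge n$ and passing to the limit) and the bookkeeping that makes the compensator bound $A_t\le t-1$ tight enough to deliver the clean constant $e^{b}$ rather than $e^{b}-1$.
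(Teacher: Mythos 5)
Your proof is correct and follows essentially the same route as the paper: both arguments hinge on the conditional mean-one identity $\bE_\infty[e^{Z_t}\mid\mathcal F_{t-1}]=1$ (valid because $\bm{\hat\sigma}_{t-1}$ is $\mathcal F_{t-1}$-measurable and $\tilde\bA_t$ is independent of the past under $\Pr_\infty$), followed by optional stopping to get $\bE_\infty[e^{S_{T_{\rm L}}}]\le \bE_\infty[T_{\rm L}]$. The only cosmetic difference is that the paper introduces the Shiryaev--Roberts statistic $R_t=(R_{t-1}+1)e^{Z_t}$ as a dominating process with $R_t-t$ a martingale and uses $R_t\ge e^{S_t}$, whereas you compensate $e^{S_t}$ directly via its Doob decomposition and bound the compensator by $t-1$; your truncation at $T_{\rm L}\wedge n$ also makes the optional-stopping step more careful than the paper's.
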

\noindent {{The proof of this lemma is presented in Appendix~\ref{app:prooflemIV1}.}} In online change detection problems, a pre-specified large constant $\gamma$ is typically imposed as a lower bound on the average run length to prevent frequent false alarms. Based on Lemma \ref{lem:arl}, we may choose the detection threshold $b=\log\gamma$ to guarantee $\bE_\infty[T_{\rm L}]\geq \gamma$ is satisfied. We next present the theoretical guarantee of the detection effectiveness.  

\begin{theorem}[Detection Delay of $T_{\rm L}$]\label{th:upper_wadd}
Assume the exact recovery condition \eqref{eq:recov_condi} is satisfied, we have the following worst-case average detection delay of the detection procedure \eqref{eq:stop_time}, when $b=\log\gamma$ and $\tilde I_0=o(\log\gamma)$ as $\gamma\to\infty$:
\begin{equation}
\WADD[T_{\rm L}]=\frac{\log\gamma }{\tilde I_0} (1+o(1)),
\label{eq:th1}
\end{equation}
where 
\[
\tilde{I}_0=\frac12 (\log\frac{1-\tilde\zeta}{\tilde\zeta})\tilde p(1-2\tilde\zeta)\left({n\choose 2} - \sum_{i<j}{\sigma}^\pre_i{\sigma}^\pre_j{\sigma}_i^\post{\sigma}_j^\post\right).
\]
\end{theorem}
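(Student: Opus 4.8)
\emph{Proof sketch.} The plan is to identify $\tilde I_0$ as the post-change drift of the Adaptive CUSUM statistic \eqref{eq:stat-wlcusum} and then invoke the asymptotic theory of CUSUM-type detectors \cite{wlcusum2023,lorden1971} to match the worst-case delay with $b/\tilde I_0$ at $b=\log\gamma$.

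First I would compute the drift. Under the post-change regime $\tilde\bA_t\sim\CBM(\bm{\sigma}^{\post},\tilde p,\tilde\zeta)$, and on the event of correct recovery $\bm{\hat\sigma}_{t-1}=\bm{\sigma}^{\post}$, the increment $\Delta_t:=\log\frac{\Pr(\tilde\bA_t;\bm{\hat\sigma}_{t-1})}{\Pr(\tilde\bA_t;\bm{\sigma}^{\pre})}$ splits over edges: an edge $(i,j)$ with $\sigma^{\pre}_i\sigma^{\pre}_j=\sigma^{\post}_i\sigma^{\post}_j$ contributes zero, while a \emph{disagreeing} edge contributes $\tilde p(1-2\tilde\zeta)\log\frac{1-\tilde\zeta}{\tilde\zeta}$ in expectation. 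Since the number of disagreeing edges is $\tfrac12\big(\binom{n}{2}-\sum_{i<j}\sigma^{\pre}_i\sigma^{\pre}_j\sigma^{\post}_i\sigma^{\post}_j\big)$, summing yields exactly $\tilde I_0$; equivalently $\tilde I_0=\KL\big(\CBM(\bm{\sigma}^{\post},\tilde p,\tilde\zeta)\,\|\,\CBM(\bm{\sigma}^{\pre},\tilde p,\tilde\zeta)\big)$. The independence of $\bm{\hat\sigma}_{t-1}$ from $\tilde\bA_t$ noted after \eqref{eq:stat-wlcusum} makes this conditional expectation exact.

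Next I would reduce the worst-case delay to the canonical $\nu=1$, $S_0=0$ configuration. Because the recursion $S_t=(S_{t-1})^{+}+\Delta_t$ forces $S_{\nu-1}\ge0$, and a larger starting value only shortens the time to cross $b$, both the essential supremum over $\mathcal F_{\nu-1}$ and the supremum over $\nu$ in \eqref{eq:WADDdef} are controlled by the reset case. On the correct-recovery event the increments $\{\Delta_t\}_{t\ge\nu}$ are i.i.d.\ with mean $\tilde I_0$ and finite second moment, so Wald's identity gives $\bE[T_{\rm L}]=(b+\bE[\text{overshoot}])/\tilde I_0$; nonlinear renewal theory bounds the expected overshoot by a lower-order term under the scaling $\tilde I_0=o(\log\gamma)$, producing the upper bound $\WADD[T_{\rm L}]\le\frac{\log\gamma}{\tilde I_0}(1+o(1))$. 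The matching lower bound follows from the universal information-theoretic bound for quickest detection: since $b=\log\gamma$ gives $\bE_\infty[T_{\rm L}]\ge\gamma$ by Lemma~\ref{lem:arl}, any such test must incur delay at least $\frac{\log\gamma}{\tilde I_0}(1-o(1))$, establishing \eqref{eq:th1}.

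The main obstacle is the data-dependence of the estimator $\bm{\hat\sigma}_{t-1}$, which breaks the i.i.d.\ increment structure underpinning classical CUSUM. The remedy is Theorem~\ref{thm:private_threshold_condition_one_time_instance}: under \eqref{eq:recov_condi} each recovery step fails with probability $o(1)$, so across the $\Theta(\log\gamma/\tilde I_0)$ relevant steps the effective drift is $\tilde I_0(1-o(1))$. Formally I would bound the contribution of recovery failures to the expected delay by a union bound combined with the boundedness of the per-step increment (at most the number of disagreeing edges times $\log\frac{1-\tilde\zeta}{\tilde\zeta}$), and control the overshoot via its finite second moment so that both the error and overshoot terms remain lower-order, leaving the first-order delay $\log\gamma/\tilde I_0$ intact.
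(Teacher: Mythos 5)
Your overall skeleton matches the paper's: the lower bound comes from the universal information-theoretic bound $\WADD\geq(\log\gamma)/\KL(\tilde p_0\|\tilde p_\infty)\cdot(1+o(1))$ applied to the privatized distributions (the paper states this as Theorem~\ref{thm:delay_lower}), the drift computation correctly identifies $\tilde I_0$ as the KL divergence $\bE_1[\tilde\ell(\tilde\bA;\bm\sigma^\pre,\bm\sigma^\post)]$, and the upper bound is an adaptive-CUSUM renewal argument whose effective drift must be related back to $\tilde I_0$ via the exact-recovery guarantee. The paper outsources the renewal analysis to Theorem~1 of \cite{wlcusum2023}, which directly yields $\WADD[T_{\rm L}]\leq(\log\gamma)/\hat I_0\cdot(1+o(1))$ with $\hat I_0=\bE_1[\tilde\ell(\tilde\bA;\bm\sigma^\pre,\bm{\hat\sigma}_{t-1})]$, whereas you re-derive it via Wald's identity and overshoot control; that part is fine in spirit.

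The genuine gap is in how you pass from $\hat I_0$ to $\tilde I_0$. You propose a union bound over the $\Theta(\log\gamma/\tilde I_0)$ post-change steps, each recovery failing with probability $o(1)$. But the failure probability guaranteed by Theorem~\ref{thm:private_threshold_condition_one_time_instance} is $n^{-\Omega(1)}$, i.e., small in $n$, while the asymptotics of the theorem are taken as $\gamma\to\infty$ with the number of relevant steps growing like $\log\gamma/\tilde I_0\to\infty$. For fixed $n$ the per-step failure probability is a fixed constant, so the union bound over an unboundedly growing number of steps gives a total failure probability that diverges rather than vanishes; you cannot conclude that all recoveries succeed on a high-probability event. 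The paper avoids this entirely by working per step in expectation: since $\bm{\hat\sigma}_{t-1}$ is independent of $\tilde\bA_t$, one writes $\hat I_0=\tilde I_0\Pr\{\bm{\hat\sigma}_{t-1}=\bm\sigma^\post\}+\bE[\hat I_0\mid\bm{\hat\sigma}_{t-1}\neq\bm\sigma^\post]\Pr\{\bm{\hat\sigma}_{t-1}\neq\bm\sigma^\post\}$, observes that the conditional expectation on the failure event lies in $[0,\tilde I_0]$ (because the increment equals $\tilde I_0$ minus a KL divergence), and concludes $\hat I_0/\tilde I_0=1-O(n^{-c})$. This single-step identity feeds directly into the renewal upper bound without any control on the joint success of all recoveries. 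Your argument can be repaired by replacing the union bound with exactly this in-expectation bookkeeping (counting the expected drift lost per failed step rather than the probability that some step fails), but as written the high-probability step does not go through in the stated asymptotic regime.
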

\noindent {{The proof of this theorem is presented in Appendix~\ref{app:proofIV5}.}} 
Here $\tilde I_0$ is the KL divergence between the post-change privatized distribution $\CBM(\bm\sigma^\post,\tilde p,\tilde\zeta)$ and pre-change privatized distribution $\CBM(\bm\sigma^\pre,\tilde p,\tilde\zeta)$ (we refer to Appendix \ref{sec:llr} for detailed derivations). It is worthwhile mentioning that the delay in \eqref{eq:th1} matches the well-known information-theoretic lower bound for any detection procedures applied to the privatized data \cite{tartakovsky2014sequential}. This implies that our detection procedure maintains the asymptotic optimality in terms of detection delay.



\begin{remark}[Comparison with Non-Private Detection]\label{rem:non-private}
As shown in Theorem~\ref{thm:delay_lower} in the Appendix~\ref{app:proofIV5}, the information-theoretic lower bound on the WADD is given by $\frac{\log\gamma}{I_0}(1+o(1))$, where $I_0$ denotes the KL divergence between the post- and pre-change data distributions \cite{tartakovsky2014sequential}. This bound is known to be achieved by the classical CUSUM procedure and thus serves as the optimal performance benchmark for non-private detection.
In contrast, the detection delay of our private method above is $\frac{\log\gamma}{\tilde{I}_0}(1+o(1))$, where $\tilde{I}_0 < I_0$ due to the information loss induced by privacy-preserving graph perturbations. This highlights the inherent trade-off between privacy guarantees and detection efficiency. As the privacy requirement becomes weaker (i.e., larger $\epsilon$), $\tilde{I}_0$ approaches $I_0$, and the performance of our private detection method approaches that of the optimal non-private baseline.
\end{remark}

Similar to the results above, we also present guarantees on the test $T_{\rm C}$ under CDP constraints.

\begin{lemma}[Average Run Length of $T_{\rm C}$]\label{lem:arl2}
Assume $\epsilon > 8\log\frac{1-\zeta}{\zeta}$. For a given detection threshold $b>0$, we have the test \eqref{eq:stop2} satisfies 
$
\bE_\infty[T_{\rm C}]\geq \frac{1-(4C/\epsilon)^2}{1-(2C/\epsilon)^2} e^b
$, where $C=2\log(({1-\zeta})/{\zeta})$.
\end{lemma}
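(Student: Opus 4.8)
The plan is to reduce the bound for $T_{\rm C}$ to the noiseless CUSUM bound of Lemma~\ref{lem:arl} and then pay for the two Laplace perturbations through their moment generating functions. Recall the structure behind $\bE_\infty[T_{\rm L}]\ge e^b$: writing $Z_t=\log\frac{\Pr(\bA_t;\bm{\hat\sigma}_{t-1})}{\Pr(\bA_t;\bm{\sigma}^{\pre})}$, the independence of $\bm{\hat\sigma}_{t-1}$ from $\bA_t$ gives $\bE_\infty[e^{Z_t}\mid\mathcal F_{t-1}]=1$, so the within-excursion likelihood-ratio process $e^{W_t}$ with $W_t=\sum_i Z_i$ is a nonnegative $\Pr_\infty$-martingale of mean one. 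Ville's maximal inequality then bounds the probability that a single excursion (a maximal run on which $\mathcal S_t>0$) ever reaches level $b$ by $e^{-b}$, and a renewal argument---the number of excursions preceding a false alarm is stochastically at least geometric with that crossing probability, and each excursion costs at least one observation---yields $\bE_\infty[T_{\rm L}]\ge e^b$. I would re-run this argument for $\mathcal S_t$ on the enlarged filtration $\mathcal G_t=\sigma(\mathcal F_t,\xi_1,\dots,\xi_t,\rho)$, where $\xi_t\sim\operatorname{Lap}(4C/\epsilon)$ is the fresh per-step noise added in \eqref{eq:stat-wlcusum2} and $\rho\sim\operatorname{Lap}(2C/\epsilon)$ is the one-shot threshold noise in \eqref{eq:stop2}; since both noises are independent of the graph stream, $e^{W_t}$ remains a mean-one $\mathcal G_t$-martingale and $T_{\rm C}$ is a $\mathcal G_t$-stopping time.

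The only change is that a crossing at time $t$ now reads $\mathcal S_t\ge b+\rho-\xi_t$, i.e.\ the deterministic level $b$ is replaced by the random level $b+\rho-\xi_t$. Applying optional stopping to $e^{W_t}$ at the first within-excursion crossing time and using $e^{W_t}\ge e^{b+\rho-\xi_t}$ there produces a bound of the form $e^{b}\,\bE_\infty[e^{\rho-\xi}]\le 1$ after integrating out the noise. The key point is that the relevant exponential moments are finite exactly under the stated hypothesis: with $C=2\log\frac{1-\zeta}{\zeta}$, the threshold noise contributes $\bE[e^{\rho}]=(1-(2C/\epsilon)^2)^{-1}$ and the statistic noise contributes $\bE[e^{-\xi}]=(1-(4C/\epsilon)^2)^{-1}$, both finite precisely when $4C/\epsilon<1$, which is $\epsilon>8\log\frac{1-\zeta}{\zeta}$. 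Tracking these two factors converts the per-excursion crossing probability $e^{-b}$ into $\frac{\bE[e^{-\xi}]}{\bE[e^{\rho}]}\,e^{-b}=\frac{1-(2C/\epsilon)^2}{1-(4C/\epsilon)^2}\,e^{-b}$, and feeding this into the same renewal bound gives $\bE_\infty[T_{\rm C}]\ge\frac{\bE[e^{\rho}]}{\bE[e^{-\xi}]}\,e^{b}=\frac{1-(4C/\epsilon)^2}{1-(2C/\epsilon)^2}\,e^{b}$, as claimed.

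The main obstacle I anticipate is the statistical coupling between the stopping time and the noise realizations: the per-step noise $\xi_{T_{\rm C}}$ is selected by the very event that triggers the alarm, so it is biased toward large values, and likewise $\rho$ is biased small on the crossing event, so one cannot naively factor $\bE[e^{\rho-\xi_{T_{\rm C}}}]$ into a product of marginal Laplace moment generating functions. The clean way around this is to perform the noise integration one step at a time, conditioning on $\mathcal G_{t-1}$ before the fresh $\xi_t$ is drawn and exploiting that $\xi_t$ is independent of $\mathcal G_{t-1}$ while $\rho$ is independent of the entire data stream; equivalently, one introduces an exponential change of measure (tilting) on the Laplace noises so that the biased conditional laws are replaced by their unconditional moment generating functions, at the cost of exactly the factors $\bE[e^{\rho}]$ and $\bE[e^{-\xi}]$. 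Care must also be taken with the overshoot at the crossing time and with the integrability/uniform-integrability conditions needed to justify optional stopping as the truncation horizon tends to infinity; these are routine given the negative $\Pr_\infty$-drift of $Z_t$ and the finiteness of the relevant exponential moments.
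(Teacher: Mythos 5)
Your proposal lands on the correct constants and correctly identifies that $\epsilon>4C$ is what makes the two Laplace moment generating functions finite, but the route you sketch has a genuine gap at exactly the step you flag, and I do not think your proposed fixes close it. The paper does not use a Ville/excursion/renewal argument here. It defines the Shiryaev--Roberts-type statistic $R_t=(R_{t-1}+1)e^{\ell_t}$, uses that $\{R_t-t\}$ is a $\Pr_\infty$-martingale to get the exact identity $\bE_\infty[T_{\rm C}]-1=\bE_\infty[R_{T_{\rm C}}]$, and then handles the two noises by \emph{different} mechanisms: the threshold noise $L_0$ enters through the pathwise inequality $e^{\tilde S_T}\geq e^{b+L_0}$ (which gives $\bE[e^{\tilde S_T}]\geq e^b\,\bE[e^{L_0}]$ with no independence needed), while the per-step noise enters through $e^{\tilde S_t}=e^{\mathcal S_t}e^{L_t}\leq R_t e^{L_t}$ and the factorization $\bE[R_Te^{L_T}]\leq\bE[R_T]\bE[e^{L_T}]$. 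That asymmetry is why one MGF lands in the numerator and the other in the denominator of the final bound. In your sketch both noises are "integrated out" from a single inequality $e^{W_\tau}\geq e^{b+\rho-\xi_\tau}$, which, if it factored at all, would put \emph{both} MGFs on the same side and produce a different constant; the formula you quote is asserted rather than derived from the mechanism you describe.

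The deeper problem is structural: because a fresh $\xi_t$ is drawn at every step, the crossing event is a union over $t$ of events $\{W_t\geq b+\rho-\xi_t\}$, and no single maximal inequality on the mean-one martingale $e^{W_t}$ controls this union. If you carry out your "one step at a time" conditioning honestly — condition on everything but $\xi_t$, apply Markov to $\Pr(\xi_t\geq b+\rho-W_t)\leq e^{W_t-b-\rho}\bE[e^{\xi}]$, and sum over $t$ — you get $\Pr(\text{cross})\leq e^{-(b+\rho)}\bE[e^{\xi}]\sum_t\bE[e^{W_t}\mathbb{I}\{\tau>t-1\}]$, and each summand is bounded by $1$, not the whole sum; the sum is of order $\bE[\tau]$. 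That accumulated factor is precisely what the paper's identity $\bE_\infty[T]-1=\bE_\infty[R_T]$ is designed to absorb (the $R_t$ statistic \emph{is} that running sum of likelihood ratios), so pushing your approach through essentially forces you to rediscover the Shiryaev--Roberts argument rather than giving an independent per-excursion bound of the form $\bE[e^{-\xi}]e^{-(b+\rho)}$. An exponential tilting of the noise laws does not rescue this either, since the issue is the union over crossing times, not merely the bias of $\xi_\tau$ on the crossing event. To repair the proof you should adopt the paper's decomposition: establish $\bE_\infty[T_{\rm C}]-1=\bE_\infty[R_{T_{\rm C}}]$, lower-bound $\bE[e^{\tilde S_T}]$ by $e^b\bE[e^{L_0}]$ pathwise, and upper-bound it by $\bE[R_T]\bE[e^{L_T}]$.
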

\noindent {{The proof of this lemma is presented in Appendix~\ref{app:prooflemIV2}.}} 
Again, we may choose the threshold $b=\log\gamma + \log\frac{1-(2C/\epsilon)^2}{1-(4C/\epsilon)^2}$ to guarantee $\mathbb E_\infty[T_{\rm C}]\geq \gamma$ is satisfied. For a constant privacy leakage $\epsilon$, we have the resulted threshold $b = \log\gamma \cdot (1+o(1))$.

\begin{theorem}[Detection Delay of $T_{\rm C}$]\label{thm:wadd2} 

Assume the exact recovery condition \eqref{eq:cond2} is satisfied and $\epsilon > 8\log\frac{1-\zeta}{\zeta}$, we have the following worst-case average detection delay of the detection procedure \eqref{eq:stop2}, when $b=\log\gamma+\log\frac{1-(2C/\epsilon)^2}{1-(4C/\epsilon)^2}$ and $I_0=o(\log\gamma)$ as $\gamma\to\infty$:
\begin{equation}
\WADD[T_{\rm C}]=\frac{\log\gamma }{I_0} (1+o(1)),
\label{eq:th2}
\end{equation}
where 
\[
{I}_0=\frac12 (\log\frac{1-\zeta}{\zeta}) p(1-2\zeta)({n\choose 2} - \sum_{i<j}{\sigma}^\pre_i{\sigma}^\pre_j{\sigma}_i^\post{\sigma}_j^\post).
\]
\end{theorem}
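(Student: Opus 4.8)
The plan is to mirror the renewal-theoretic argument used for the LDP procedure in Theorem~\ref{th:upper_wadd}, adapting it to the three sources of CDP-specific randomness: the stability-based estimator $\hat{\bm\sigma}_{t-1}$, the per-step Laplace perturbation on the statistic, and the randomized threshold. First I would establish the drift of the CUSUM increments. Under the exact recovery condition \eqref{eq:cond2}, the stability-based estimator satisfies $\hat{\bm\sigma}_{t-1}=\bm\sigma^\post$ (up to a global flip) with probability $1-o(1)$ at every post-change time. Conditioned on this event, the increment $\log\frac{\Pr(\bA_t;\hat{\bm\sigma}_{t-1})}{\Pr(\bA_t;\bm\sigma^\pre)}$ has mean equal to the KL divergence $I_0=\KL(\CBM(\bm\sigma^\post,p,\zeta)\,\|\,\CBM(\bm\sigma^\pre,p,\zeta))$. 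Computing this divergence edge-by-edge as in Appendix~\ref{sec:llr}, each flipped edge contributes $p(1-2\zeta)\log\frac{1-\zeta}{\zeta}$ (the $a=0$ outcome cancels), and since the number of flipped edges equals $\tfrac12\big({n\choose 2}-\sum_{i<j}\sigma^\pre_i\sigma^\pre_j\sigma^\post_i\sigma^\post_j\big)$, summing yields the stated closed form for $I_0$. Note that, unlike the LDP case, the divergence is evaluated at the true parameters $(p,\zeta)$ rather than $(\tilde p,\tilde\zeta)$, since the CDP statistic \eqref{eq:stop2} operates on the unperturbed data $\bA_t$.

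Second, I would bound the delay of the clean statistic $\mathcal S_t$. Because $\mathcal S_t$ accumulates increments of positive mean $I_0$ after the change, the standard Lorden/Tartakovsky renewal argument yields a worst-case delay of $\frac{b}{I_0}(1+o(1))$ for $\mathcal S_t$ to reach level $b$, with the essential supremum over $\nu$ in \eqref{eq:WADDdef} handled by the usual reset-to-zero bound on $\mathcal S_{\nu-1}$. Third, I would absorb the Laplace perturbations. For constant $\epsilon>8\log\frac{1-\zeta}{\zeta}$, both the threshold offset $\tilde b-b\sim\operatorname{Lap}(2C/\epsilon)$ (drawn once, as in AboveThreshold) and the per-step noise $\tilde S_t-\mathcal S_t\sim\operatorname{Lap}(4C/\epsilon)$ are $\mathcal O(1/\epsilon)=\mathcal O(1)$ in magnitude. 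I would show that once $\mathcal S_t$ exceeds $b+\Delta$ for a slowly growing buffer $\Delta=o(\log\gamma)$, the noisy statistic crosses the noisy threshold with probability $1-o(1)$ per step, using the exponential tails of the Laplace law; this adds only $\frac{\Delta}{I_0}=o\!\left(\frac{\log\gamma}{I_0}\right)$ to the delay. The threshold choice $b=\log\gamma+\log\frac{1-(2C/\epsilon)^2}{1-(4C/\epsilon)^2}$ enforces the ARL constraint via Lemma~\ref{lem:arl2} while keeping $b=\log\gamma\,(1+o(1))$ for constant $\epsilon$. Combining the three steps gives $\WADD[T_{\rm C}]=\frac{\log\gamma}{I_0}(1+o(1))$, and asymptotic optimality follows by matching the information-theoretic lower bound of Theorem~\ref{thm:delay_lower} (see Remark~\ref{rem:non-private}), which applies directly here since CDP uses the original observations.

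The hardest part will be rigorously controlling the per-step \emph{refreshed} Laplace noise in the stopping rule, because it couples the detection statistic to exogenous randomness at every step: a single large negative draw can momentarily prevent crossing even when $\mathcal S_t\gg b$. I expect to resolve this by exploiting the exponential Laplace tail together with the positive drift $I_0$, so that the probability of failing to stop for $k$ consecutive steps decays geometrically and the resulting extra delay is $\mathcal O(1)$, hence lower order. A secondary technical obstacle is ensuring the exact-recovery event $\{\hat{\bm\sigma}_{t-1}=\bm\sigma^\post\}$ holds uniformly over the unboundedly many post-change steps without a union bound degrading the $o(1)$ error; this is inherited from the analysis underlying Theorem~\ref{th:upper_wadd} and handled by combining the per-step $1-o(1)$ recovery guarantee with the geometric decay of the residual statistic after each crossing attempt.
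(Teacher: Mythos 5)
Your proposal is correct in its overall skeleton and matches the paper's on the two framing steps: the lower bound comes from the information-theoretic bound of Theorem~\ref{thm:delay_lower} applied to the unperturbed data (so the relevant divergence is $I_0$, not $\tilde I_0$), and the upper bound reduces to showing the adaptive increments have effective drift $\hat I_0 = I_0(1-O(n^{-c'}))$ under the exact recovery condition \eqref{eq:cond2}. Where you genuinely diverge is in how the Laplace randomness is absorbed. The paper does \emph{not} use a buffer: it replaces $\mathcal S_t$ by the unreflected walk $U_t=\sum_s \ell_s$, notes $U_t+L_t\leq\tilde S_t$ so that $\mathbb E[T_{\rm C}]\leq\mathbb E[T']$ for $T'=\inf\{t:U_t+L_t\geq \tilde b\}$, applies a Wald-type identity to get $\mathbb E[T']\leq(\mathbb E[U_{T'}+L_{T'}-\tilde b]+\mathbb E[\tilde b-L_{T'}]+I_0)/\hat I_0$, and then extends Lorden's ladder-variable overshoot bound to the noise-augmented walk, yielding $\mathbb E[R_\nu]\leq 2\hat J_0/\hat I_0+(2\hat J_0\nu/\hat I_0)^{1/2}+(2\hat J_0 I_0/\hat I_0)^{1/2}$ with a second moment $\hat J_0=\mathbb E[\ell_2^2]+64(C/\epsilon)^2$ that explicitly accounts for the refreshed noise increments $L_t-L_{t-1}$; substituting the random threshold $\nu=\tilde b=b+L_0$ and taking expectation over $L_0$ finishes the bound. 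Your buffer-plus-geometric-tail argument is a workable alternative and is arguably more elementary, but it has to confront exactly the coupling you flag: conditioning on the one-shot draw $L_0$ (which exceeds $\Delta$ only with probability $e^{-\Omega(\Delta\epsilon/C)}$) and using independence of the per-step draws $L_t$ to get geometric decay of the non-crossing probability, with a residual contribution $\mathbb E[L_0\mathbb I\{L_0>\Delta\}]/I_0$ that is exponentially small in $\Delta$. With $\Delta=o(\log\gamma)$ this yields the same $(1+o(1))$ factor. The paper's route buys an explicit nonasymptotic overshoot bound of order $\sqrt{b}$ that slots directly into the classical QCD machinery of \cite{wlcusum2023}; your route avoids the ladder-variable machinery at the cost of a more delicate stopping-time/noise coupling argument. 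One small point to make explicit if you pursue your route: the ``standard renewal argument'' for $\mathcal S_t$ reaching level $b+\Delta$ must be run with the adaptive (non-i.i.d.) increments, which the paper handles through the conditional-expectation identity $\mathbb E[\ell_t\mid\mathcal F_{t-2}]=\hat I_0$; your appeal to the analysis of Theorem~\ref{th:upper_wadd} covers this but should be stated as such.
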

\noindent {{The proof of this theorem is presented in Appendix~\ref{app:proofIV6}.}}  We note that ${I}_0$ is the KL divergence of the post- and pre-change $\CBM$ distributions without perturbation with ${I}_0>\tilde I_0$. Consequently, the delay in \eqref{eq:th2} is smaller than that in \eqref{eq:th1} (and $\tilde I_0\approx I_0$ for $\epsilon$ sufficiently large). This is consistent with the intuition that CDP is less stringent than LDP constraints, thus having less impact on the degradation of detection performance. On the other hand, the detection procedure under CDP setting requires more strict recovery conditions and a slightly higher running time caused by computation of stability.

\subsection{Delay Lower Bound with Private Constraint}\label{sec:info-lower-bound}

\subsubsection{Delay Lower Bound under $\epsilon$-edge LDP}


In this section, we investigate the minimax lower bound to the detection delay for {\it arbitrary} private algorithms that satisfies $\epsilon$-edge LDP,
\begin{equation}\label{eq:minimax}
  \begin{aligned}
   &\inf_{Q\in \mathcal Q_{\epsilon}} \inf_{\tau\in \mathcal C(\gamma)} \underset{\nu \geq 1}\sup \esssup \ \mathbb E_{\nu,Q}\left[(\tau-\nu+1)^+| \mathcal F_{\nu-1}\right],  
   \end{aligned}
\end{equation}
where $\mathcal Q_{\epsilon}$ is the set of all privacy mechanisms that satisfy the $\epsilon$-edge LDP constraint, $\tau$ ranges over all stopping rules in the set $\mathcal C(\gamma)$ satisfying the ARL constraint $\mathbb E_\infty[\tau]\geq \gamma$; and the inner supremum denotes the worst-case detection delay over all possible change-point locations, with $\mathbb E_{\cdot,Q}$ denoting the expectation under the privacy mechanism $Q$. We aim to provide insights regarding a general lower bound to \eqref{eq:minimax} and compare with the delay of our proposed method in the end.


To solve the above minimax lower bound, we first analyze the KL divergence after the privacy mechanism. We denote $p_\infty(\bA)$ and $p_0(\bA)$ as the probability distribution for the adjacency matrix $\bA$ in the pre-change and post-change regime, respectively. And denote $\tilde p_\infty(\bA)$ and $\tilde  p_0(\bA)$ as the corresponding two distributions for the privatized data distribution.
%
Following \cite{duchi2018minimax}, we have the following upper bound for the KL divergence after {\it any} $\epsilon$-edge LDP perturbation.
\begin{lemma}[KL divergence after perturbation]\label{lemma:KL_after_pertub} For any privacy mechanism $Q$ satisfying the $\epsilon$-edge LDP in Definition \ref{def:edgeLDP}, we have the KL divergence after perturbation is upper bounded as follows:
   \[
   \KL(\tilde  p_0 || \tilde  p_\infty)\leq C_\epsilon(e^\epsilon -1)^2 p^2(1-2\zeta)^2\bigg[{n \choose 2} - C_{\bm\sigma^\pre,\bm\sigma^\post}\bigg],
   \] 
   with $C_\epsilon=\min\{4,e^{2\epsilon}\}$,  $C_{\bm\sigma^\pre,\bm\sigma^\post}:=\sum_{i<j}{\sigma}^\pre_i{\sigma}^\pre_j{\sigma}_i^\post{\sigma}_j^\post$.
\end{lemma}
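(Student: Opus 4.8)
The plan is to exploit the product structure of both the $\CBM$ distributions and the local privatization mechanism, reducing the joint KL divergence to a sum of per-edge contributions, and then bounding each per-edge term by a strong data-processing inequality for local privacy.

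First I would note that under $\CBM(\bm\sigma,p,\zeta)$ the edges $\{A_{i,j}\}_{i<j}$ are mutually independent, and that under $\epsilon$-edge LDP each edge is privatized independently through its own channel $Q(\cdot\mid A_{i,j})$. Hence both $\tilde p_0$ and $\tilde p_\infty$ factorize over the ${n\choose 2}$ coordinates, and since the KL divergence of product measures decomposes coordinatewise,
\[
\KL(\tilde p_0 \,\|\, \tilde p_\infty) = \sum_{i<j} \KL\!\left(\tilde p_0^{\,ij} \,\|\, \tilde p_\infty^{\,ij}\right),
\]
where $\tilde p_0^{\,ij}$ and $\tilde p_\infty^{\,ij}$ denote the marginal laws of the privatized edge $\tilde A_{i,j}$ in the post- and pre-change regimes. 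Crucially, this decomposition means the $(e^\epsilon-1)^2$ contraction is paid once per coordinate, with no composition penalty across edges.

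Second, for each coordinate I would invoke the bound of Duchi et al.\ \cite{duchi2018minimax}, valid for any $\epsilon$-LDP channel $Q$ and any pair of input laws $P_0,P_\infty$:
\[
\KL(QP_0\,\|\,QP_\infty) + \KL(QP_\infty\,\|\,QP_0) \leq \min\{4,e^{2\epsilon}\}\,(e^\epsilon-1)^2\,\|P_0-P_\infty\|_{\TV}^2 .
\]
Since the one-sided divergence is dominated by the symmetrized one, each per-edge term is at most $C_\epsilon (e^\epsilon-1)^2 \|p_0^{ij}-p_\infty^{ij}\|_{\TV}^2$, where $p_0^{ij},p_\infty^{ij}$ are the \emph{unperturbed} edge laws. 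Then I would compute this per-edge total variation directly from \eqref{eq:model}: writing $s^\pre_{ij}=\sigma^\pre_i\sigma^\pre_j$ and $s^\post_{ij}=\sigma^\post_i\sigma^\post_j$, the two laws of $A_{i,j}$ coincide when $s^\pre_{ij}=s^\post_{ij}$ (TV $=0$), while when $s^\pre_{ij}=-s^\post_{ij}$ the masses $p(1-\zeta)$ and $p\zeta$ are swapped on the two nonzero outcomes, giving $\|p_0^{ij}-p_\infty^{ij}\|_{\TV}=p(1-2\zeta)$. Using $\mathbb{I}\{s^\pre_{ij}\neq s^\post_{ij}\}=(1-s^\pre_{ij}s^\post_{ij})/2$ to count the flipped edges, the sum collapses to the factor ${n\choose 2}-C_{\bm\sigma^\pre,\bm\sigma^\post}$, yielding the stated bound.

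The only delicate step — and the one I would treat most carefully — is the strong data-processing inequality itself; the tensorization and the TV computation are routine. The point to verify is that a product of $\epsilon$-LDP channels acting on product input laws still permits the coordinatewise application of the Duchi et al.\ estimate, which holds precisely because the output measures remain products and their KL decomposes additively. This ensures the local-privacy contraction enters once per edge rather than accumulating over the $\binom{n}{2}$ observations.
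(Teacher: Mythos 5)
Your proposal is correct and follows essentially the same route as the paper: decompose the KL divergence over the independent privatized edges, apply the Duchi et al.\ $\epsilon$-LDP contraction bound (symmetrized KL $\leq \min\{4,e^{2\epsilon}\}(e^{\epsilon}-1)^{2}\,\TV^{2}$) to each edge, observe that only the pairs with $\sigma_i^\pre\sigma_j^\pre\neq\sigma_i^\post\sigma_j^\post$ contribute with raw total variation $p(1-2\zeta)$, and count those pairs via $C_{\bm\sigma^\pre,\bm\sigma^\post}$. The only difference is cosmetic: the paper re-derives the contraction inequality inline for the ternary edge alphabet (bounding $|q_1(z)-q_{-1}(z)|$ and $|\log(\tilde p_0(z)/\tilde p_\infty(z))|$ directly), whereas you invoke it as a cited black box; both yield the same constant $C_\epsilon=c_\epsilon^2$.
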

\noindent {{The proof of this theorem is presented in Appendix~\ref{app:prooflemIV3}.}} 
Since the optimal worst-case average detection delay, under ARL constraint $\gamma$, for detecting distribution shifts from $\tilde p_\infty$ to $\tilde  p_0$ is known to be $({\log\gamma})/{\KL(\tilde  p_0 || \tilde  p_\infty)}\cdot (1+o(1))$ \cite{tartakovsky2014sequential}, the above Lemma implies that 
\begin{multline}\label{eq:lower_tentative}
   \inf_{Q\in \mathcal Q_{\epsilon}} \inf_{\tau\in\mathcal C(\gamma)} \underset{\nu \geq 1}\sup \esssup \ \mathbb E_{\nu,Q}\left[(\tau-\nu+1)^+| \mathcal F_{\nu-1}\right] \geq \\
 \frac{\log\gamma}{C_\epsilon(e^\epsilon -1)^2 p^2(1-2\zeta)^2 ({n \choose 2} - C_{\bm\sigma^\pre,\bm\sigma^\post})} (1+o(1)).   
\end{multline}

Meanwhile, recall that in our privacy mechanism, we have that the KL divergence after the randomized perturbation becomes $\tilde I_0=\left(\log\frac{1 + p(1-\zeta)(e^\epsilon-1)}{1 + p\zeta(e^\epsilon-1)}\right)\cdot p(1-2\zeta)\frac{e^\epsilon-1}{e^\epsilon+2}({n \choose 2} - C_{\bm\sigma^\pre,\bm\sigma^\post}) \approx p(1-2\zeta)(e^\epsilon -1)\cdot p(1-2\zeta)\frac{e^\epsilon-1}{e^\epsilon+2}({n \choose 2} - C_{\bm\sigma^\pre,\bm\sigma^\post})$ for small $\epsilon$, which matches with the denominator in \eqref{eq:lower_tentative} up to a constant factor. However, for large $\epsilon$ values, there still remains an opportunity to close the gap between the delay of our proposed algorithm and the lower bound in \eqref{eq:lower_tentative} by developing more efficient private perturbation mechanisms.

\subsubsection{Delay Lower Bound under $(\epsilon, \delta)$-edge CDP}

We analyze the lower bound on detection delay under CDP settings, focusing on a specific scenario where the detection problem is performed based on the output of binary hypothesis tests on individual observations at each time. More specifically, we consider a hypothesis test $\mathcal T$ with binary output $\{0,1\}$, where $\mathcal T(\bA) = 1$ implies there is a change (i.e., accept the alternative hypothesis $H_{1}$ that the change has happened) and vice versa, based on the observed graph $\bA$. Now, for any two adjacent graphs $\bA,\bA'$, an $(\epsilon, \delta)$-edge CDP test $\mathcal T$ should satisfies
\[
{\operatorname{Pr}(\mathcal T(\mathbf{A}') = a)} \leq e^{\epsilon} { \operatorname{Pr}(\mathcal T({\mathbf{A}}) = a)} + \delta, \text{ for } a=0,1.
\]
Then we have the following Lemma that quantifies the performance of the $(\epsilon, \delta)$-edge CDP test $\mathcal T$.
\begin{lemma} \label{lem:private_hypo_test}
For an $(\epsilon, \delta)$-edge CDP binary hypothesis test $\mathcal T$, we have 
    \begin{align}\label{eq:hypo-test-diff-cdp}
     & { \operatorname{Pr}(\mathcal T({\mathbf{A}}) = 1 | p_0)} -  { \operatorname{Pr}(\mathcal T({\mathbf{A}}) = 1 | p_\infty)} \nonumber \\
     & \leq \frac{1}{\sqrt{2}} \times \left( \frac{e^{R\epsilon} - 1}{e^{R\epsilon} + 1} \right) \times \left(1 + \frac{2 \delta}{e^{\epsilon} - 1}\right) \times \sqrt{\KL (p_{0} || p_{\infty} )},
\end{align}
where $p_\infty(\bA)$ and $p_0(\bA)$ denote the probability distribution for the adjacency matrix $\bA$ in the pre-change and post-change regime, respectively, and $R = 2^{{n \choose 2}}$.
\end{lemma}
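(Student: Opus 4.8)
The plan is to factor the left‑hand side---which is the advantage of $\mathcal T$ in distinguishing the post‑change mixture $p_0$ from the pre‑change mixture $p_\infty$---into an oscillation term governed by the $(\epsilon,\delta)$‑edge CDP constraint times a distance between $p_0$ and $p_\infty$, and then apply Pinsker's inequality to the latter. First I would set $f(\bA):=\Pr(\mathcal T(\bA)=1)$, so that the left‑hand side equals $\mathbb E_{\bA\sim p_0}[f(\bA)]-\mathbb E_{\bA\sim p_\infty}[f(\bA)]=\sum_{\bA}\big(p_0(\bA)-p_\infty(\bA)\big)f(\bA)$. Because $\sum_{\bA}\big(p_0(\bA)-p_\infty(\bA)\big)=0$, subtracting the midpoint $c=\tfrac12(\max_{\bA}f+\min_{\bA}f)$ leaves the sum unchanged; the triangle inequality together with $|f(\bA)-c|\le\tfrac12(\max_{\bA}f-\min_{\bA}f)$ and $\sum_{\bA}|p_0(\bA)-p_\infty(\bA)|=2\,\TV(p_0,p_\infty)$ then yields
\[
\mathbb E_{p_0}[f]-\mathbb E_{p_\infty}[f]\;\le\;\big(\max_{\bA}f-\min_{\bA}f\big)\,\TV(p_0,p_\infty).
\]

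The crux is to bound the oscillation $\max_{\bA}f-\min_{\bA}f$ using edge CDP. Here I would invoke group privacy: any two adjacency matrices are connected by a chain of single‑edge modifications, and composing the $(\epsilon,\delta)$‑edge CDP guarantee along this chain (whose length is bounded by $R$, the conservative count $2^{n\choose 2}$ of distinct graph configurations) shows that $\mathcal T$ is $\big(R\epsilon,\;\tfrac{e^{R\epsilon}-1}{e^{\epsilon}-1}\delta\big)$‑DP between the graphs attaining the maximum and minimum of $f$. For a binary test that is $(\epsilon',\delta')$‑DP on such a pair, a short optimization over the feasible set (maximizing $f(\bA)-f(\bA')$ subject to the two‑sided constraints $f(\bA)\le e^{\epsilon'}f(\bA')+\delta'$ and $1-f(\bA)\le e^{\epsilon'}(1-f(\bA'))+\delta'$, and symmetrically) gives the sharp advantage $\tfrac{e^{\epsilon'}-1+2\delta'}{e^{\epsilon'}+1}$. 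Substituting $\epsilon'=R\epsilon$ and $\delta'=\tfrac{e^{R\epsilon}-1}{e^{\epsilon}-1}\delta$, the factor $e^{R\epsilon}-1$ cancels and the oscillation is bounded by $\tfrac{e^{R\epsilon}-1}{e^{R\epsilon}+1}\big(1+\tfrac{2\delta}{e^{\epsilon}-1}\big)$, which is precisely the leading factor in the claim.

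It remains to control $\TV(p_0,p_\infty)$, and Pinsker's inequality gives $\TV(p_0,p_\infty)\le\sqrt{\tfrac12\,\KL(p_0\|p_\infty)}$; combining the three displays produces the stated bound, with the constant $\tfrac1{\sqrt2}$ coming directly from Pinsker. I expect the oscillation bound to be the main obstacle: one must propagate the additive $\delta$ correctly through group privacy and establish the tight $(\epsilon',\delta')$‑advantage $\tfrac{e^{\epsilon'}-1+2\delta'}{e^{\epsilon'}+1}$, since relying on the multiplicative part of the DP constraint alone would forfeit the sharp $\tanh$‑type constant $\tfrac{e^{R\epsilon}-1}{e^{R\epsilon}+1}$. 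The centering step and the final assembly are routine.
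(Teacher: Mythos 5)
Your proposal is correct and follows essentially the same route as the paper's proof: both bound the oscillation of the acceptance probability $f(\bA)=\Pr(\mathcal T(\bA)=1)$ via group privacy to obtain the factor $\frac{e^{R\epsilon}-1}{e^{R\epsilon}+1}\bigl(1+\frac{2\delta}{e^{\epsilon}-1}\bigr)$, multiply by the total variation distance, and finish with Pinsker's inequality. Your explicit centering argument merely spells out the step the paper leaves implicit when it passes from the bound on $P_{\max}$ to the advantage-times-$\TV$ inequality.
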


\noindent {{The proof of this lemma is presented in Appendix~\ref{app:prooflemIV4}.}} 

We note that for non-private binary hypothesis test, we have the following bound
    \[
     { \operatorname{Pr}(\mathcal T({\mathbf{A}}) = 1 | p_0)} -  { \operatorname{Pr}(\mathcal T({\mathbf{A}}) = 1 | p_\infty)} \leq \TV (p_{0} || p_{\infty} ). 
    \]
Indeed, the summation of type-I and type-II errors of any test is lower bounded by $1-\TV (p_{0} || p_{\infty} )$, with this minimum value obtained by the Neyman-Pearson test. Compared with the above Equation \eqref{eq:hypo-test-diff-cdp}, we observe that the additional constant $\frac{1}{\sqrt{2}}\frac{e^{R\epsilon} - 1}{e^{R\epsilon} + 1}  \left(1 + \frac{2 \delta}{e^{\epsilon} - 1}\right)$ can be viewed as the performance loss due to the privacy constraints.

We then present the general analysis of the information bound on the detection delay. Based on the converse result above for private hypothesis testing for a single observation, we consider the scenario that at each time $k$, an $(\epsilon, \delta)$-edge CDP test $\mathcal T$ is performed and the binary output is $O_k = \mathcal T(\bA_k)\in \{0,1\}$. The detection is performed via the sequence of binary outputs $\{O_k,k\in\mathbb N\}$.

\begin{proposition}\label{prop-delay-lower-cdp}
 When using the sequence of binary outputs $\{O_k,k\in\mathds{N} \}$ for detection, with each $O_k$ resulted from an $(\epsilon,\delta)$-edge CDP hypothesis test $\mathcal T$ applied on sample $\bA_k$, the detection delay is lower bound by,
 \[
\WADD \geq \frac{\log\gamma}{\frac{1}{\alpha_0} \left( \frac{e^{R\epsilon} - 1}{e^{R\epsilon} + 1} \right)^2 \left(1 + \frac{2 \delta}{e^{\epsilon} - 1}\right)^2 \KL (p_{0} || p_{\infty} )} (1+o(1)),
\]
as $\gamma\to\infty$, where $\alpha_0:=\operatorname{Pr}(\mathcal T({\mathbf{A}}) = 1 | p_\infty)>0$.
\end{proposition}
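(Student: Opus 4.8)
The plan is to reduce the detection problem driven by the binary outputs $\{O_k\}$ to a canonical change detection problem over an i.i.d.\ Bernoulli sequence, and then invoke the classical asymptotic lower bound on the worst-case average detection delay. First I would observe that, because the \emph{same} $(\epsilon,\delta)$-edge CDP test $\mathcal T$ is applied to each incoming graph and the graphs are i.i.d.\ within each regime, the induced binary stream is itself i.i.d.\ within each regime: before the change-point each $O_k\sim\mathrm{Ber}(\alpha_0)$ with $\alpha_0=\Pr(\mathcal T(\bA)=1\mid p_\infty)$, and from $\nu$ onward each $O_k\sim\mathrm{Ber}(\alpha_1)$ with $\alpha_1=\Pr(\mathcal T(\bA)=1\mid p_0)$. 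Hence detection from $\{O_k\}$ is exactly change detection in an i.i.d.\ Bernoulli sequence whose law switches from $\mathrm{Ber}(\alpha_0)$ to $\mathrm{Ber}(\alpha_1)$ at $\nu$. Since any stopping rule built from $\{O_k\}$ is adapted to the coarser filtration generated by these outputs, the ARL constraint $\bE_\infty[\tau]\geq\gamma$ carries over unchanged, so the reduced problem is governed by off-the-shelf results.

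Next I would invoke the classical information-theoretic lower bound for i.i.d.\ sequential change detection \cite{tartakovsky2014sequential}: under the ARL constraint $\gamma$, every stopping rule satisfies
\[
\WADD \geq \frac{\log\gamma}{\KL(\mathrm{Ber}(\alpha_1) || \mathrm{Ber}(\alpha_0))}(1+o(1)), \quad \gamma\to\infty.
\]
It then remains to upper bound the Bernoulli KL divergence so that this denominator is controlled by the privacy-dependent quantity in the statement. For this I would use $\KL\le\chi^2$ together with the exact identity $\chi^2(\mathrm{Ber}(\alpha_1) || \mathrm{Ber}(\alpha_0))=\tfrac{(\alpha_1-\alpha_0)^2}{\alpha_0(1-\alpha_0)}$, which gives $\KL(\mathrm{Ber}(\alpha_1) || \mathrm{Ber}(\alpha_0))\le\tfrac{(\alpha_1-\alpha_0)^2}{\alpha_0(1-\alpha_0)}$.

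Finally I would substitute the single-observation gap bound of Lemma \ref{lem:private_hypo_test}, namely $(\alpha_1-\alpha_0)^2\le\tfrac12\big(\tfrac{e^{R\epsilon}-1}{e^{R\epsilon}+1}\big)^2\big(1+\tfrac{2\delta}{e^\epsilon-1}\big)^2\KL(p_0 || p_\infty)$, and use that the per-step false-alarm level obeys $\alpha_0\le 1/2$ (which may be taken without loss of generality, since flipping the test output swaps $\alpha_0\leftrightarrow 1-\alpha_0$ while preserving both $|\alpha_1-\alpha_0|$ and the Bernoulli KL). The factor $\tfrac12$ from the lemma then cancels against $1-\alpha_0\ge 1/2$, yielding $\KL(\mathrm{Ber}(\alpha_1) || \mathrm{Ber}(\alpha_0))\le\tfrac1{\alpha_0}\big(\tfrac{e^{R\epsilon}-1}{e^{R\epsilon}+1}\big)^2\big(1+\tfrac{2\delta}{e^\epsilon-1}\big)^2\KL(p_0 || p_\infty)$. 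Plugging this upper bound into the denominator of the WADD lower bound produces the claimed inequality.

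The main obstacle I anticipate is not any single calculation but making the reduction airtight: one must confirm that restricting to detectors built from the thresholded outputs $\{O_k\}$ is precisely the scenario of the proposition (rather than an additional restriction), and that the i.i.d.\ asymptotic lower bound applies verbatim to the Bernoulli stream under the inherited ARL constraint. The remaining delicate point is matching the exact constant $1/\alpha_0$: the chi-square bound naturally produces $1/(\alpha_0(1-\alpha_0))$, so one must exploit the factor $\tfrac12$ in Lemma \ref{lem:private_hypo_test} together with $\alpha_0\le 1/2$ to recover exactly the stated form.
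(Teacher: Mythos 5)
Your proposal is correct and follows essentially the same strategy as the paper: both reduce the problem to the induced i.i.d.\ Bernoulli stream $\{O_k\}$, bound the per-sample divergence of that stream by the quantity $I=\frac{1}{\alpha_0}\big(\frac{e^{R\epsilon}-1}{e^{R\epsilon}+1}\big)^2\big(1+\frac{2\delta}{e^\epsilon-1}\big)^2\KL(p_0\|p_\infty)$ using the gap bound of Lemma \ref{lem:private_hypo_test}, and then feed this into the asymptotic WADD lower bound.

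The two executions differ only in technical routing. Where you invoke the classical i.i.d.\ lower bound (Theorem \ref{thm:delay_lower}) as a black box and control the Bernoulli divergence via $\KL\le\chi^2=\frac{(\alpha_1-\alpha_0)^2}{\alpha_0(1-\alpha_0)}$, the paper instead re-derives the lower bound from scratch via Lai's change-of-measure argument (defining the events $C_\delta,\tilde C_\delta$ and showing both have vanishing conditional probability), and bounds $\bE_0[Z_i]\le\frac{2}{\alpha_0}(\alpha_1-\alpha_0)^2$ by a reverse-Pinsker-type inequality; the factor $2$ there cancels the $\frac12$ from Lemma \ref{lem:private_hypo_test} exactly as your $\frac12$ cancels $1-\alpha_0\ge\frac12$. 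Your version is shorter and makes explicit a point the paper glosses over: the constant $\frac{1}{\alpha_0}$ (rather than $\frac{1}{\min\{\alpha_0,1-\alpha_0\}}$) only comes out cleanly when $\alpha_0\le\frac12$. Your ``flip the test output'' remark does not fully repair this for $\alpha_0>\frac12$ — flipping proves the bound with $\frac{1}{1-\alpha_0}$ in place of $\frac{1}{\alpha_0}$, which is a \emph{weaker} denominator and hence does not imply the stated inequality verbatim — but the paper's own reverse-Pinsker step has the identical implicit restriction, so this is a shared (and minor) blemish rather than a gap specific to your argument.
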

\noindent {{We prove this proposition in Appendix~\ref{app:proofpropIV1}.}} 
Recall that the information-theoretic lower bound to the asymptotic delay in the non-private setting is $(\log\gamma)/\KL (p_{0} || p_{\infty} )$ \cite{tartakovsky2014sequential}. Compared with the delay lower bound above, we note that the privacy impact is reflected in the constant term in the denominator that depends on $\epsilon$ and $\delta$.


\subsection{Converse: How Many Graphs are Needed for Detection with Privacy Constraint?}


In this subsection, we present a necessary condition for the number of graphs required to achieve the exact recovery condition under 
$\epsilon$-edge DP for arbitrary $\epsilon$. It is worth highlighting that the community recovery problem becomes easier as more graphs are observed and utilized together within the recovery algorithm.



\begin{theorem} \label{lemma:required_window_with_privacy} For the CBM model, the number of observed graphs needed for exact recovery while satisfying $\epsilon$-edge DP should be at least 
\begin{align}
     w \geq \max\left\{ \tilde{c}_{1} \times \frac{e^{\epsilon}}{e^{\epsilon}-1}, \tilde{c}_{2} \times \log(n) \right\}, \label{eqn:condition_window_size}
\end{align}
for an arbitrary  $\epsilon > 0$, $\tilde{c}_{1} = 1 - n^{-\Omega(1)}$ and $\tilde{c}_{2} \triangleq {(4 - 4 n^{-\Omega(1)} )}/{\left(1-2 - 2 n^{-\Omega(1)} \right) ^{2} }$.
\end{theorem}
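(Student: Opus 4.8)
The plan is to establish two \emph{separate} necessary conditions on $w$ — one information-theoretic and one privacy-driven — and then observe that any feasible scheme must obey both, which produces the stated maximum. Throughout I treat the $w$ graphs $\{G_t\}_{t=1}^w$ as conditionally i.i.d.\ copies of $\CBM(\bm\sigma^*,p,\zeta)$ given the common labeling $\bm\sigma^*$, each privatized independently through an $\epsilon$-edge DP channel, and I argue at the level of the fundamental limit so the bound holds for \emph{any} estimator rather than the specific SDP.

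First I would derive the $\tilde c_2\log(n)$ term by a genie-aided reduction. Exact recovery in the sense of \eqref{eqn:exact_recovery_definition} requires, in particular, that every vertex be correctly labeled; so fix a vertex $i$ and hand a genie the true labels of all remaining vertices. The residual task is a binary test between $\sigma_i=+1$ and $\sigma_i=-1$ based on the $w(n-1)$ (privatized) edges incident to $i$. I would lower bound the error of the optimal such test by a Chernoff/Bhattacharyya bound, obtaining a per-vertex error of order $\exp(-w D)$, where $D$ is the accumulated single-graph log-likelihood-ratio divergence of one incident-edge family (driven by a mean $\propto(1-2\zeta)$ and a second moment $\propto p$). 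A first-moment argument on the number of misclassified vertices then shows this count is at least $n\exp(-w D)(1-o(1))$; since the genie-model errors are essentially independent across vertices, exact recovery forces $n\exp(-wD)\to 0$, i.e.\ $w\ge \log(n)/D\,(1+o(1))$. Absorbing the degree-concentration fluctuations of the incident revealed-edge counts into the $n^{-\Omega(1)}$ corrections produces the constant $\tilde c_2$.

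Next I would derive the $\tilde c_1\,e^\epsilon/(e^\epsilon-1)$ term from the privacy constraint directly. Here the key tool is the contraction of statistical distance under an $\epsilon$-edge DP channel (Definition~\ref{def:edgeLDP}, in the spirit of the KL bound of Lemma~\ref{lemma:KL_after_pertub} and \cite{duchi2018minimax}): a single privatized graph can contribute only a bounded amount of distinguishing information for the genie test above, with the effective per-graph contribution scaling like $(1-e^{-\epsilon})$ relative to the corresponding non-private quantity. Since the $w$ independent privatized graphs can jointly supply at most $w(1-e^{-\epsilon})$ times the raw per-graph signal, matching this against the fixed distinguishing budget required to separate $\sigma_i=+1$ from $\sigma_i=-1$ gives $w\,(1-e^{-\epsilon})\ge \tilde c_1$, equivalently $w\ge \tilde c_1\,e^\epsilon/(e^\epsilon-1)$, with $\tilde c_1=1-n^{-\Omega(1)}$ again tracking the finite-$n$ concentration slack. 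Combining the two conditions yields \eqref{eqn:condition_window_size}.

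I expect the privacy step to be the main obstacle. The contraction factors that emerge \emph{naturally} from the three-level randomized response of \eqref{eqn:perturbed_parameters} are $\tfrac{e^\epsilon-1}{e^\epsilon+2}$ and $\tfrac{e^\epsilon-1}{e^\epsilon+1}$, not the $\tfrac{e^\epsilon-1}{e^\epsilon}$ implicit in the target factor $e^\epsilon/(e^\epsilon-1)$, so the delicate work is passing from the generic $\epsilon$-edge DP distinguishability bound to the precise stated factor while keeping the non-asymptotic constants clean. The complementary subtlety is making the genie/first-moment argument rigorous \emph{uniformly} over vertices, where the fluctuations of the incident revealed-edge counts are exactly what the $n^{-\Omega(1)}$ terms in $\tilde c_1,\tilde c_2$ are designed to absorb; the remaining manipulations (the Chernoff exponent and the algebra converting the two divergence bounds into the displayed constants) are routine.
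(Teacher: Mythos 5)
There is a genuine gap, and it sits exactly where you predicted: the privacy term. The paper does not obtain the factor $e^{\epsilon}/(e^{\epsilon}-1)$ from a channel-contraction bound at all. Its proof is a Fano-type mutual-information sandwich: on one side, introducing the indicator of the exact-recovery event gives $I(\bm{\sigma}_t^{*};\hat{\bm{\sigma}}_t)\geq n\,\operatorname{Pr}(\hat{\bm{\sigma}}_t=\bm{\sigma}_t^{*})-1$ (so $\tilde c_1$ is the success probability $1-n^{-\Omega(1)}$ guaranteed by \eqref{eqn:exact_recovery_definition}, not a degree-concentration slack as you describe); on the other side, the chain rule over the $w$ graphs gives $I(\bm{\sigma}_t^{*};\hat{\bm{\sigma}}_t)\leq\sum_{i=0}^{w-1}[\,n-H(\hat{\bm{\sigma}}_{t-i}\mid\bm{\sigma}_{t-i}^{*})\,]$, and the privacy constraint enters through the entropy inequality $H(\hat{\bm{\sigma}}\mid\mathbf{A})\geq e^{-\epsilon}H(\hat{\bm{\sigma}}\mid\mathbf{A}')$ for adjacent adjacency matrices, yielding $I\leq wn[1-e^{-\epsilon}h_2(p_e)]$ and hence $w\geq\operatorname{Pr}(\hat{\bm{\sigma}}_t=\bm{\sigma}_t^{*})\cdot e^{\epsilon}/(e^{\epsilon}-1)$ after bounding $h_2(\cdot)\leq 1$. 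Your route instead models each graph as passing through an $\epsilon$-edge LDP channel and invokes Duchi-style contraction of divergences. That is a different (strictly more restrictive) privacy model than the central edge-DP constraint on the estimator that the theorem addresses, and — as you concede yourself — the contraction factors it produces ($\tfrac{e^{\epsilon}-1}{e^{\epsilon}+2}$ or $(e^{\epsilon}-1)^2$ from the KL contraction of Lemma~\ref{lemma:KL_after_pertub}) do not match the stated $\tfrac{e^{\epsilon}-1}{e^{\epsilon}}$. Since you explicitly flag that you do not know how to pass from your distinguishability bound to the stated constant, the privacy half of the argument is not actually closed; the missing idea is the entropy-under-group-privacy step inside a Fano argument.

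On the statistical term $\tilde c_2\log(n)$: your genie-aided per-vertex hypothesis-testing reduction with a first-moment argument is the standard converse for exact recovery and is a reasonable way to obtain a $w=\Omega(\log n)$ requirement. The paper itself only asserts that this term "follows from applying the basic Chernoff--Hoeffding concentration inequality" without a detailed derivation, so your sketch is at least as substantiated as the paper's on this half. But because the first term of the max is the one the appendix actually proves, and your derivation of it rests on the wrong privacy model and admittedly cannot reproduce the constant, the proposal as a whole does not establish \eqref{eqn:condition_window_size}.
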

\noindent {{The proof of this theorem is presented in Appendix~\ref{app:proofIV7}.}} 
Here, the first term in the argument of \eqref{eqn:condition_window_size} represents the edge privacy impact on the required window size, while the second term reflects the statistical limit required for exact recovery and follows from applying the basic Chernoff-Hoeffding concentration inequality. The above condition shows that the privacy impact is dominant in the high-privacy regime, where $\epsilon$ is small. Conversely, in the low-privacy regime, the lower bound on $w$ is mainly governed by the fundamental statistical limit, i.e., $w = \Omega(\log(n))$. This dual dependency ensures that the window size $w$ adapts appropriately to balance both privacy and recovery accuracy requirements.

The above theorem demonstrates that exact recovery is possible for arbitrary $\epsilon > 0$ when 
$w >1$, unlike the previous case where $w =1$ (i.e., a single observed graph) in Theorem \ref{thm:converse}. Consequently, in order to achieve exact recovery, we may use the most recent $w$ graphs to estimate the unknown community and then calculate the detection statistics. We would like to highlight that while using additional graphs can be beneficial for improving recovery, it may be costly to store the most recent $w$ observations in online change detection especially when the required window size $w$ is large.


\section{Numerical Results}\label{sec:numerical}

\begin{figure*}[tb]
    \centering
    \begin{tabular}{cccc}
  \includegraphics[width=0.22\textwidth]{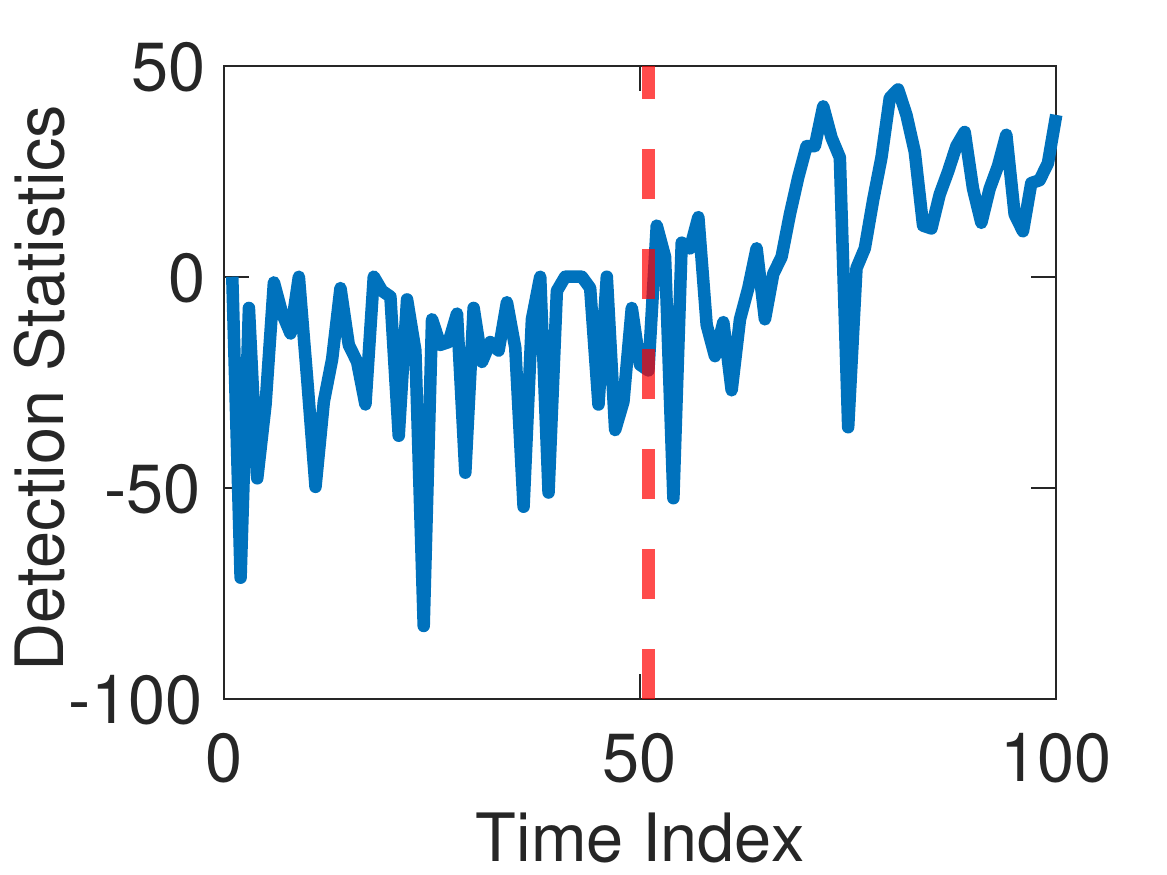}  & \includegraphics[width=0.22\textwidth]{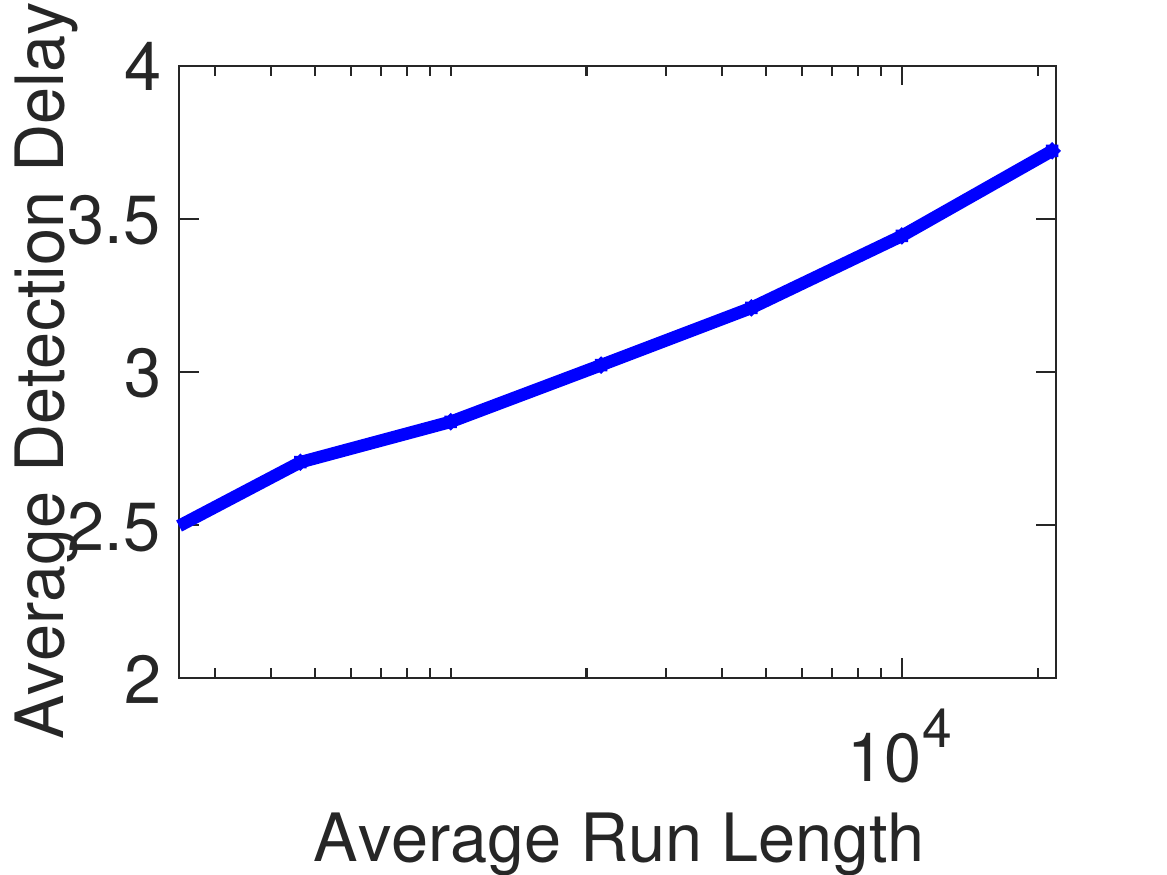} &  \includegraphics[width=0.22\textwidth]{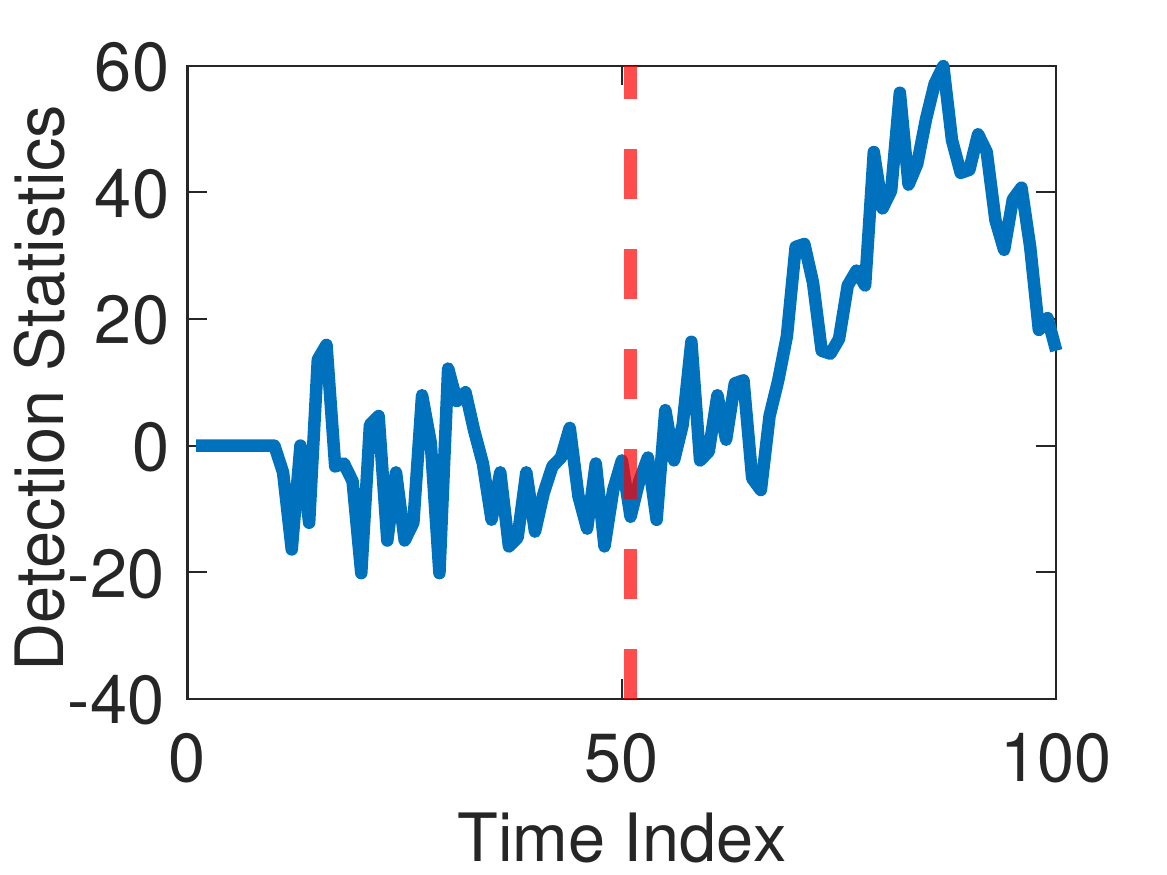} &\includegraphics[width=0.22\textwidth]{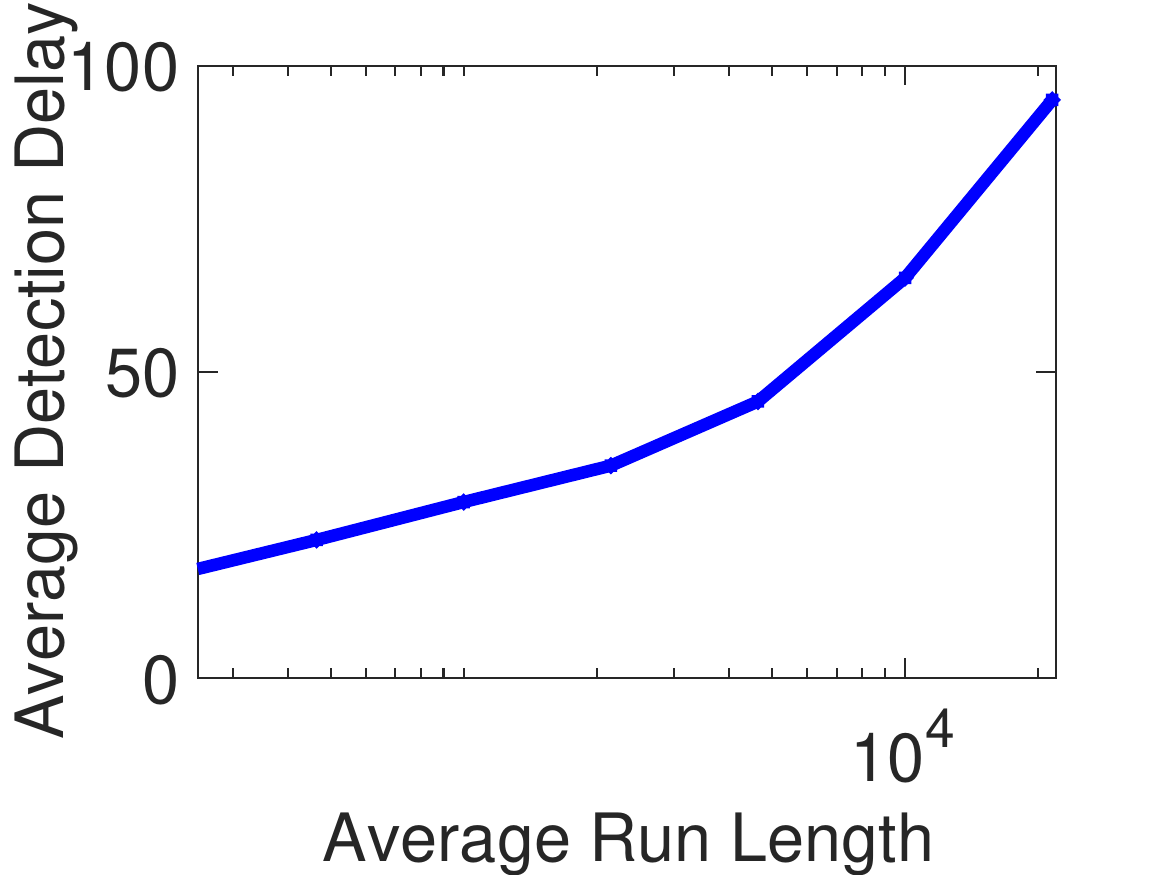}  \\
 \multicolumn{2}{c}{Case 1} & \multicolumn{2}{c}{Case 2}
    \end{tabular}
    \caption{Sample statistics trajectories and detection delay of the method $T_{\rm L}$ in \eqref{eq:stop_time} for both cases.}
    \vspace{-0.1in}
    \label{fig:stat}
\end{figure*}

\begin{figure*}[tb]
    \centering
    \begin{tabular}{cccc}
  \includegraphics[width=0.22\textwidth]{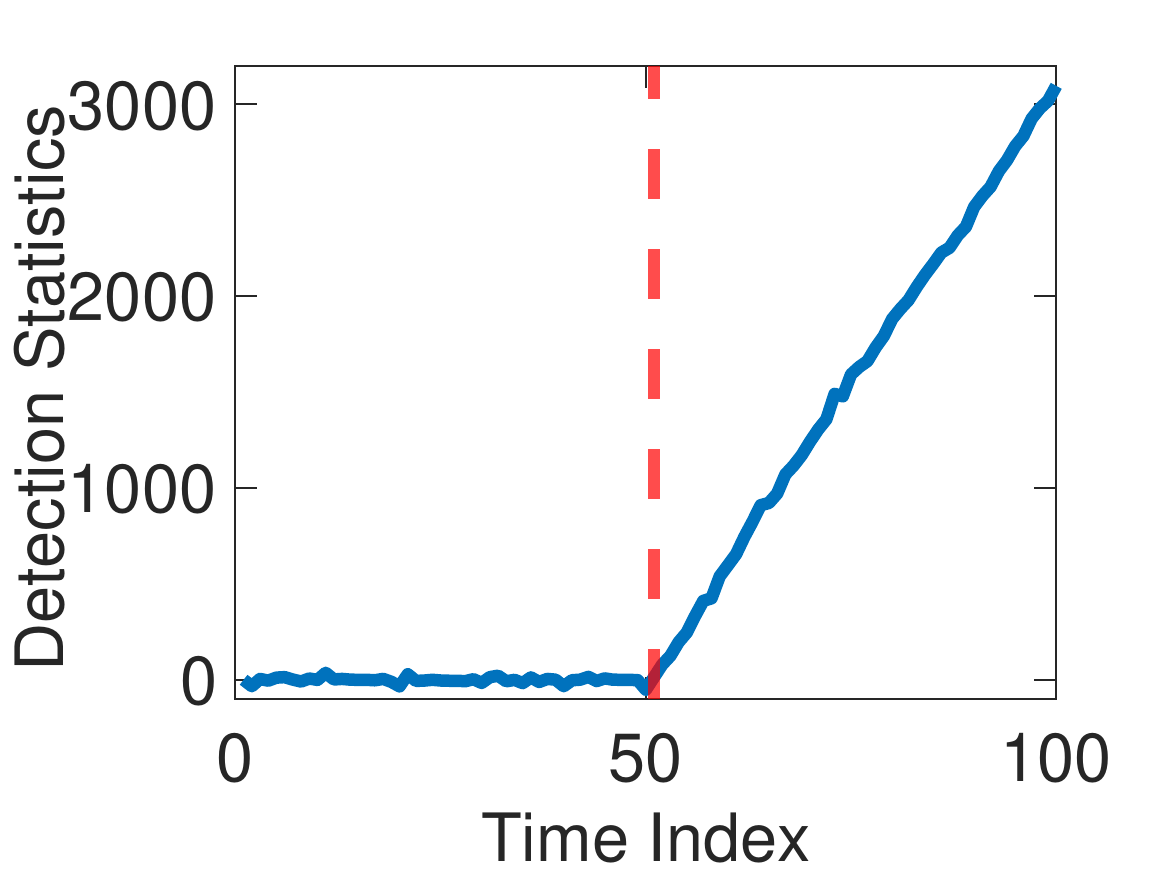}  & \includegraphics[width=0.22\textwidth]{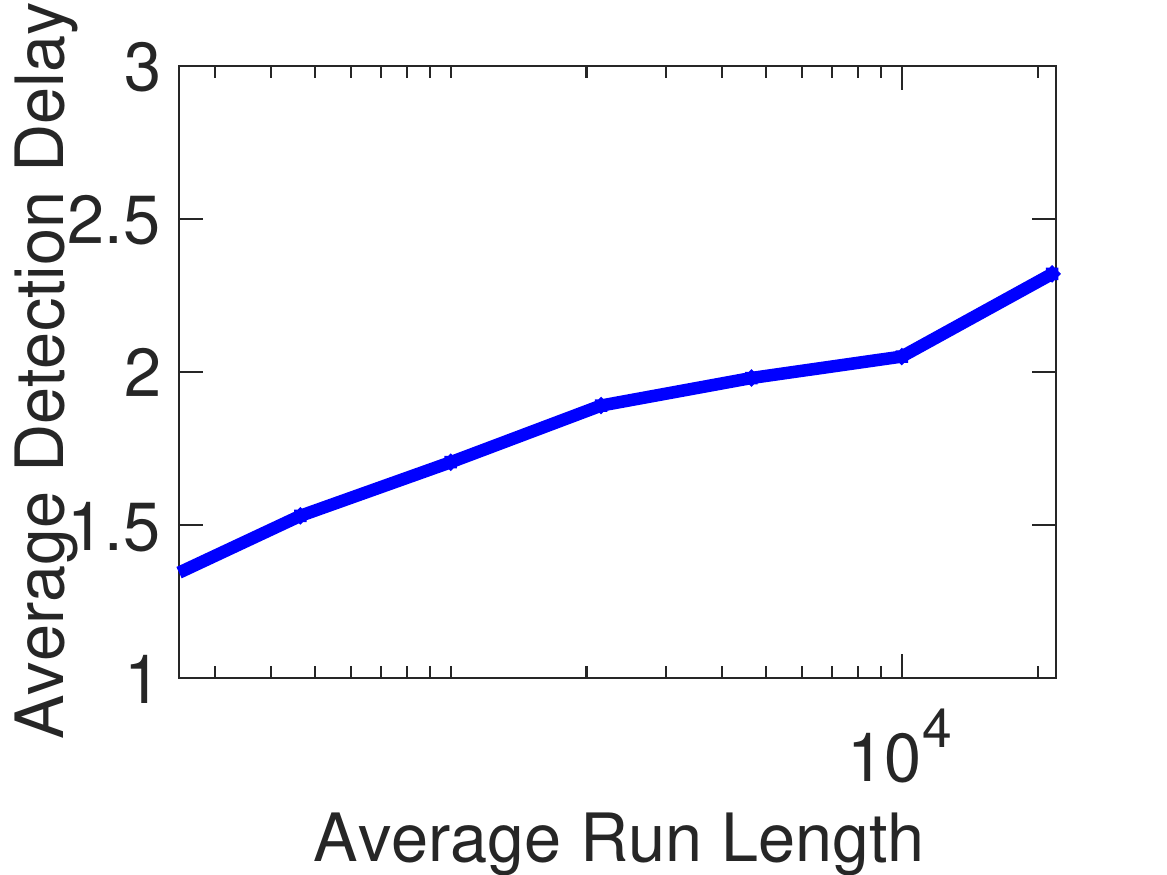} &  \includegraphics[width=0.22\textwidth]{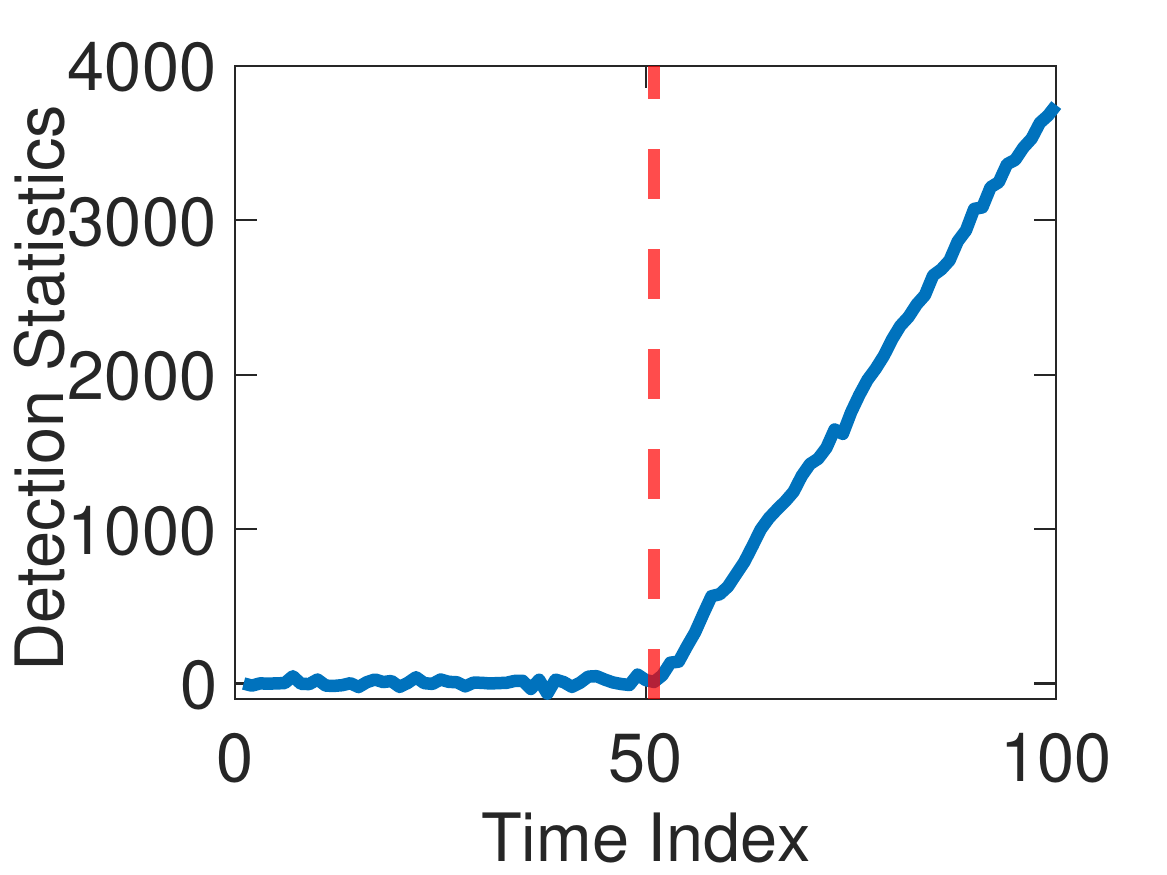} &\includegraphics[width=0.22\textwidth]{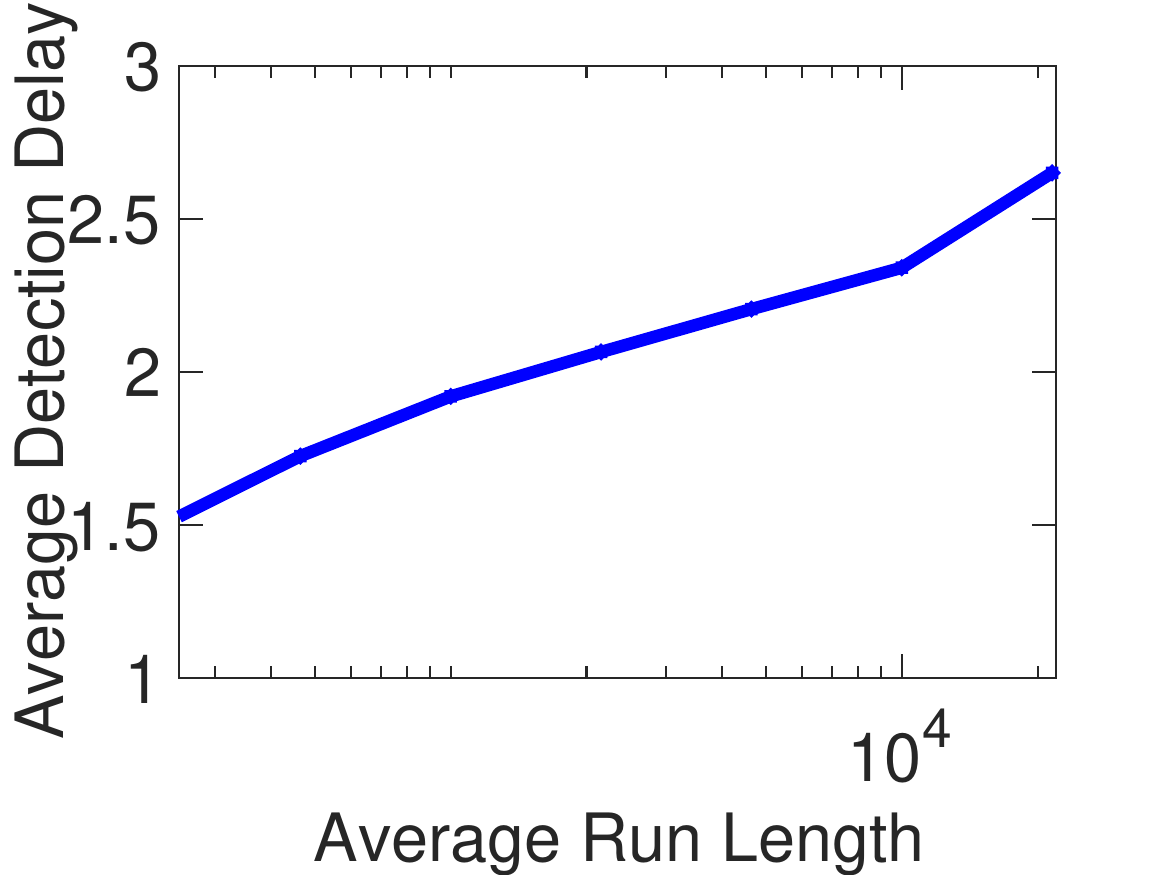}  \\
 \multicolumn{2}{c}{Case 1} & \multicolumn{2}{c}{Case 2}
    \end{tabular}
    \caption{Sample statistics trajectories and detection delay of the method $T_{\rm C}$ in \eqref{eq:stop2} for both Cases.}
    \label{fig:stat2}
\end{figure*}

In this section, we present numerical results to assess the performance of proposed detection frameworks. We emphasize that the examples presented are primarily illustrative. Given the theoretical focus of our work, the simulations are not extensive but aim to demonstrate the fundamental principles and potential applications of our approaches. The proposed method is implemented in MATLAB, and the optimization (via SDP method) is done through the CVX solver \cite{grant2009cvx}.

\subsection{Simulation Study}

In this simulation study, we evaluate two performance metrics: 1) the detection delay simulated by assuming the change happens at $\nu=1$; and 2) the average run length $\mathbb E_\infty[T]$ for the test (stopping time) $T$. {We consider a moderate network size of $n=50$, and choose $\bm\sigma^\pre$ and $\bm\sigma^\post$ such that $\operatorname{Ham}({\bm{\sigma}^{\pre}},{\bm{\sigma}^{\post}}) = 2$, corresponding to a scenario where only a small number of nodes change their community labels. The distribution parameter is set as $p = a\log(n)/n$ for some $a>0$ and $\zeta=0.1$.} The private parameter is set as $\epsilon$. We consider two scenarios: (1) Case 1 ($a=5,\epsilon=1.5$) and (2) Case 2 ($a=6,\epsilon=1$). {These settings are chosen such that the sufficient condition for exact recovery in Theorem~\ref{thm:private_threshold_condition_one_time_instance} is satisfied in Case 1 but violated in Case 2. This allows us to compare detection performances under both a favorable scenario with exact recovery guarantees and a more challenging scenario where exact recovery after perturbation is not assured.} It is worth mentioning that the KL divergence between the post- and pre-change distributions after graph perturbation in Case 1 is larger, meaning it is easier to be detected. We also tried multi-window with $w=10$ (see Footnote \ref{myfootnote}) in the more challenging Case 2 to enhance the detection performance. 

We demonstrate the sample trajectories and the trade-off curve of delay vs. ARL, for the detection procedure $T_{\rm L}$ in \eqref{eq:stop_time} under LDP settings, in Fig. \ref{fig:stat} for Case 1 (left) and Case 2 (right). We observe that the detection statistics under Case 1 are more stable under the pre-change regime (mostly below zero) and become mostly positive after the change, yielding a relatively small delay. On the other hand, the detection statistic for Case 2 is less stable and exhibits several upward trends before the true change point. These could be caused by the stringent privacy constraint, as a smaller $\epsilon$ value in this scenario will lead to significant alterations in data distributions, and the exact recovery sufficient condition \eqref{eq:recov_condi} is violated. 

For completeness, we also include the detection results of the procedure $T_{\rm C}$ in Fig. \ref{fig:stat2}. It is obvious that the performance of the detection procedure here is better than the procedure under LDP settings, as demonstrated by the detection statistics and small detection delay under both cases. This is expected and consistent with our intuition since the LDP constraint is more stringent than the CDP requirement. Under the LDP constraint, we have to privatize the data at the source, which could significantly reduce the divergence between post- and pre-change distributions, especially for small $\epsilon$ values. On the other hand, under the CDP setting, we calculate the detection statistics mostly under the true graph observations. Moreover, the Laplace noise added to the detection statistics is relatively small compared with the recursive statistics $S_t$, thus the detection performance could be much better. We highlight that the difference in performance is caused by the distinct nature of LDP and CDP mechanisms, which do not necessarily imply which detection procedure is better than the other.

{We further comment on the non-private baseline for change detection, which applies the classic CUSUM test directly to the original data without any privacy protection. As commented in Remark \ref{rem:non-private}, the non-private CUSUM method provides the information-theoretic lower bound, $\log\gamma / I_0(1+o(1))$, on the detection delay and is guaranteed to incur a smaller delay.
In our two scenarios, since the KL divergence between post- and pre-change models $I_0$ is relatively large, the corresponding theoretical and empirical delays are both small (approximately one for the ARL values smaller than $10^4$). Therefore, we omit this baseline from the plots. Nonetheless, as seen in Fig. \ref{fig:stat} and Fig. \ref{fig:stat2}, the detection delays under both LDP and CDP settings (particularly in Case 1) are generally less than 4, indicating highly efficient detection performance even with privacy constraints.}

\begin{figure}[ht!]
    \centering
 \begin{tabular}{cc}
   \includegraphics[width=0.45\linewidth]{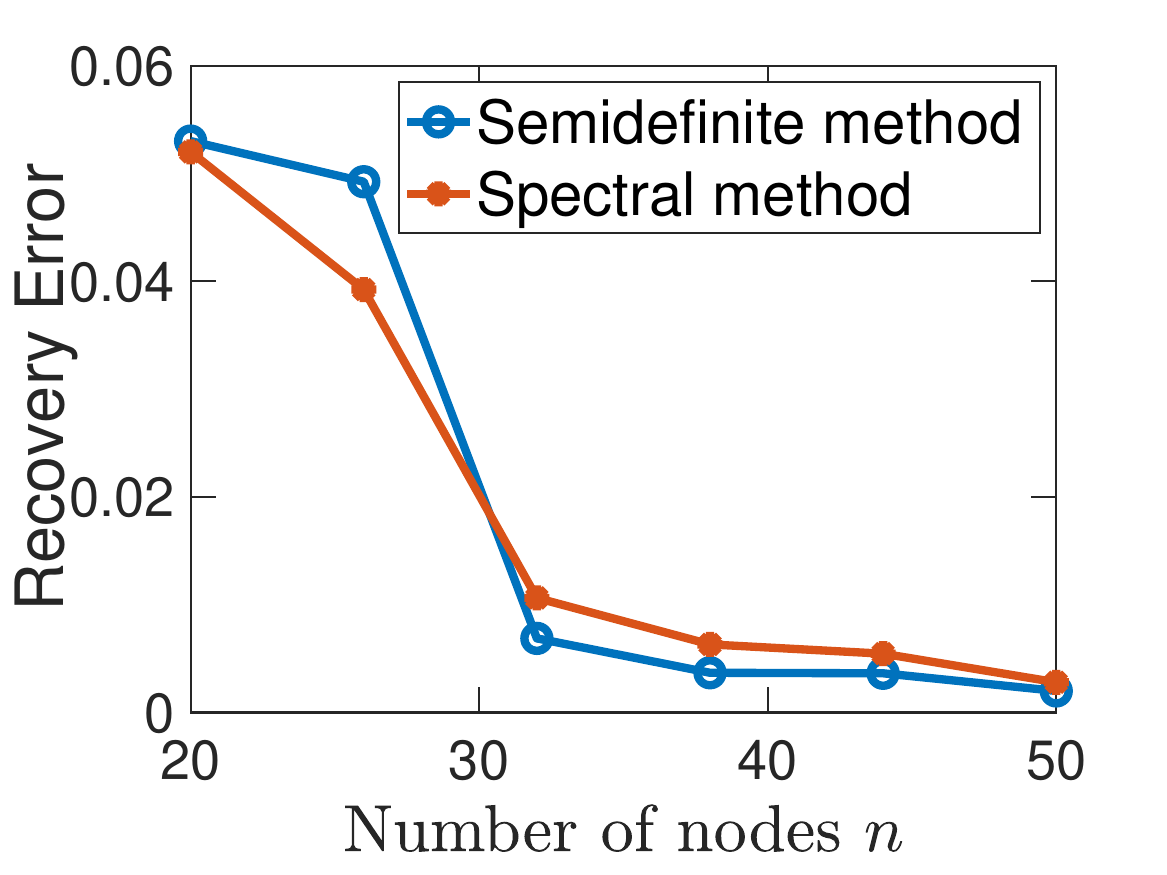}   & \includegraphics[width=0.45\linewidth]{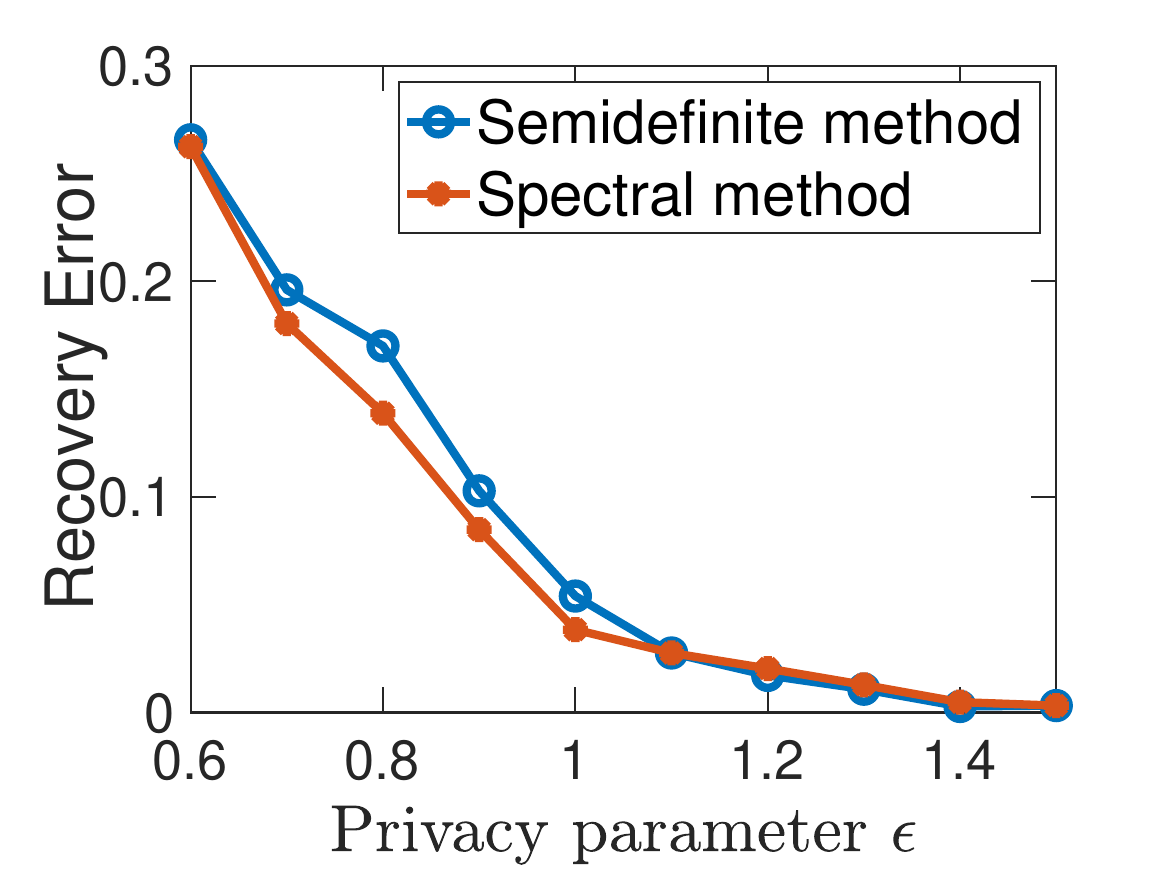} 
 \end{tabular}   
    \caption{Comparison of community label recovery errors for the semidefinite relaxation method and the spectral method. Each plot shows the average recovery error over 50 repetitions. The distributional parameters are fixed at $p=0.8$ and $\zeta=0.1$. In the left plot $\epsilon=1.5$ is held constant, while in the right plot, the number of nodes is fixed at $n=50$. The true community label vector $\bm\sigma$ consists of half $+1$ entries and half $-1$ entries.}
    \label{fig:recovery}
    \vspace{-0.15in}
\end{figure}

\begin{remark}[Recovery Error]
As a sanity check, we evaluate the community label recovery error of the semidefinite relaxation method and compare it with the spectral method proposed in \cite{dhara2022spectral}. Specifically, we apply the privacy mechanism described in Section~\ref{sec:private} via graph perturbation, then estimate the underlying community label vector using both the semidefinite relaxation formulation in \eqref{eqn:SDP_relaxation_asymmetric} and the spectral method. The recovery error is measured by the normalized Hamming distance,  $\operatorname{Ham}({\bm{\sigma}},{\hat{\bm{\sigma}}})/n$, where $\bm{\sigma}$ is the true community label and $\hat{\bm{\sigma}}$ is the estimated value. As shown in Fig.~\ref{fig:recovery}, both methods exhibit comparable recovery performance across various scenarios, with node size $n$ from 20 to 50 and privacy budget $\epsilon$ from 0.6 to 1.5. The parameter values are selected such that the exact recovery condition in \eqref{eq:recov_condi} is satisfied. Based on this observation, we adopt the semidefinite relaxation method in our work. It is worthwhile mentioning that our proposed detection frameworks are flexible, and they can readily incorporate the spectral method as an alternative for estimating post-change community structures as well.
\end{remark}

\subsection{Real Case Study}

We also implement the detection algorithm on a real-world agricultural trade dataset\footnote{This dataset is available at {\it https://www.fao.org/faostat/en/\#data/TM}.} and a U.S. air transportation network dataset\footnote{This dataset is available at \href{https://www.transtats.bts.gov/DL_SelectFields.aspx?gnoyr_VQ=GDL&QO_fu146_anzr=Nv4}{here}.}.

\paragraph{Agricultural Trade Dataset} The dataset includes the quantity and value of all food and agricultural products imported/exported annually by all the countries in the world. We select annual data from the years 1991 to 2015 and $n = 50$ countries with the most significant importing/exporting relationship. For each country pair $(i,j)$, we let $A_{i, j}= 0$ if there is no import-export relation between them, $A_{i, j} = 1$ if the total trade column of the top $4$ agricultural products exceeds a pre-specified threshold, and $A_{i, j}=-1$ otherwise. Consequently, each node corresponds to a country, and each edge represents the export-import relationship between countries. We use the data of the year 1991 to estimate the pre-change community label, parameters $p$, and $\zeta$. 

\begin{figure*}[ht!]
    \centering
    \begin{tabular}{cccc}
     \includegraphics[width=0.22\textwidth]{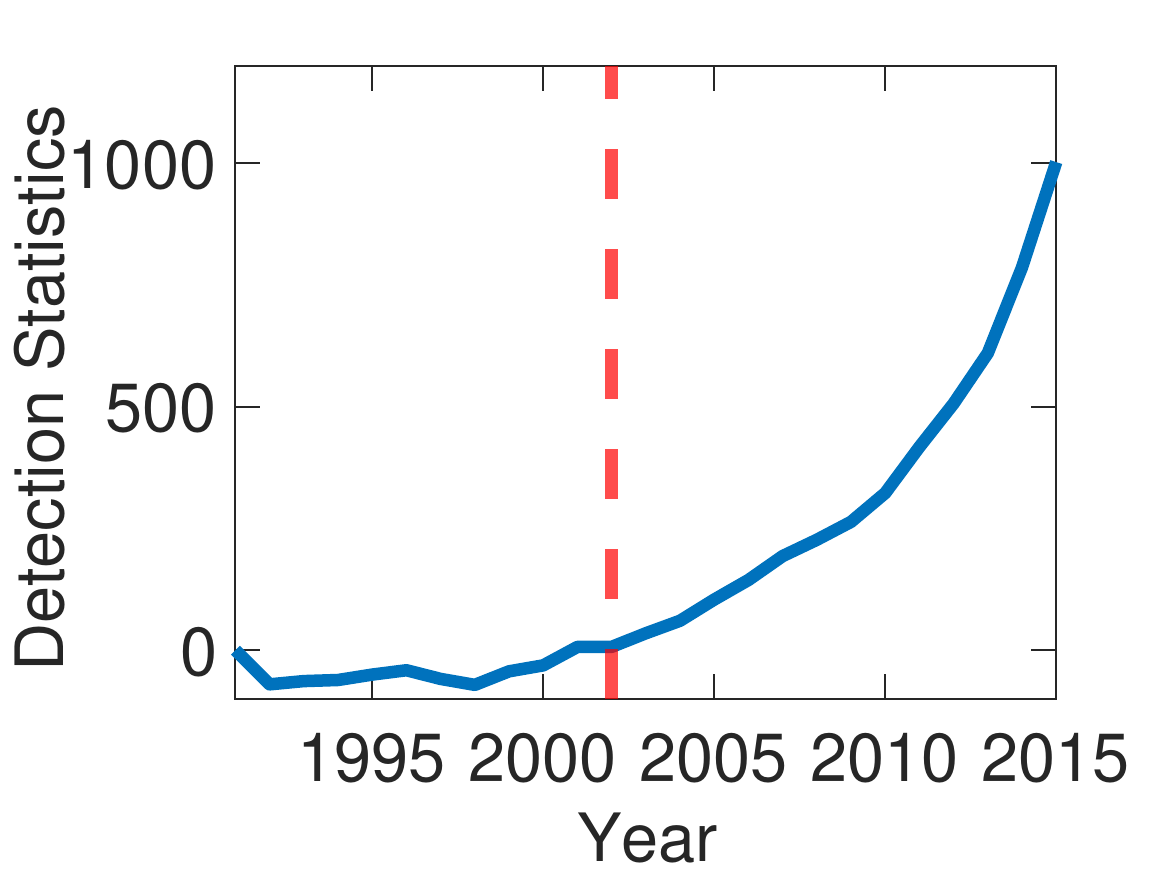} & \includegraphics[width=0.22\textwidth]{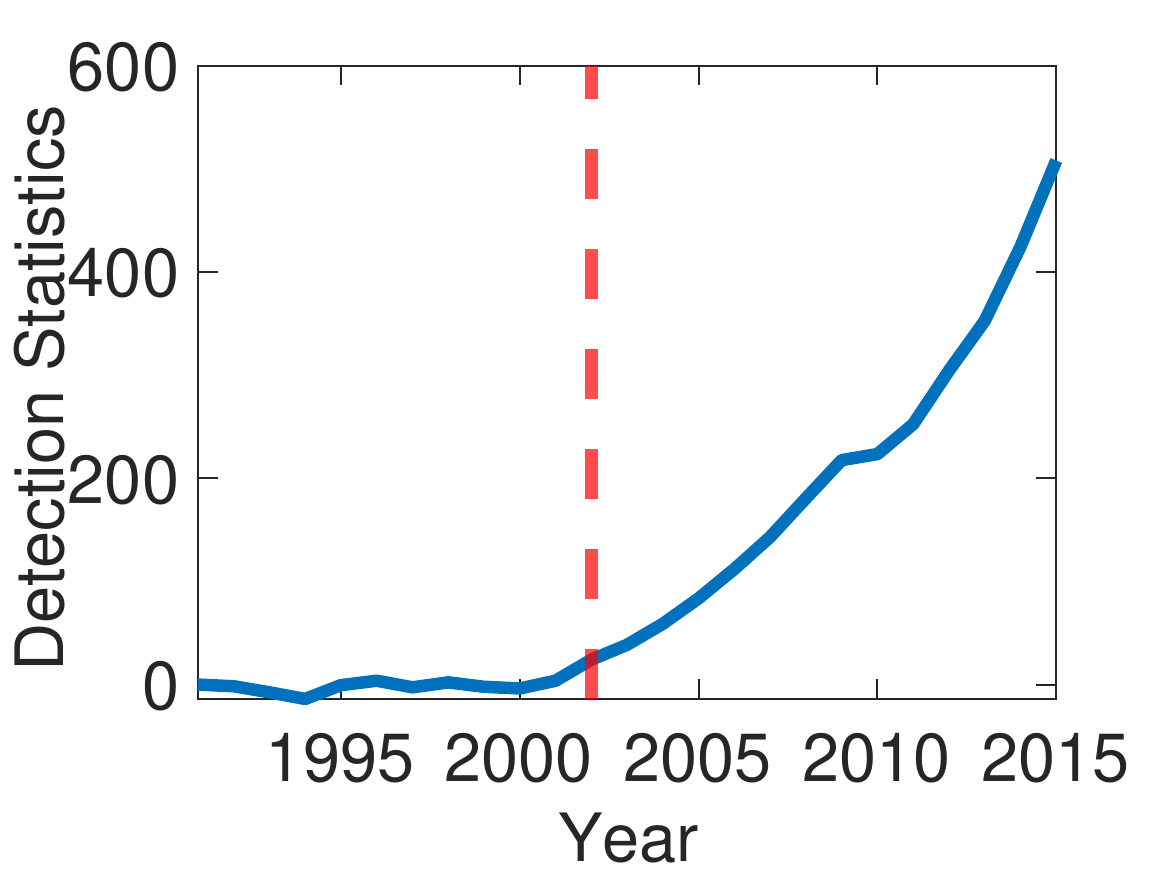} & \includegraphics[width=0.22\textwidth]{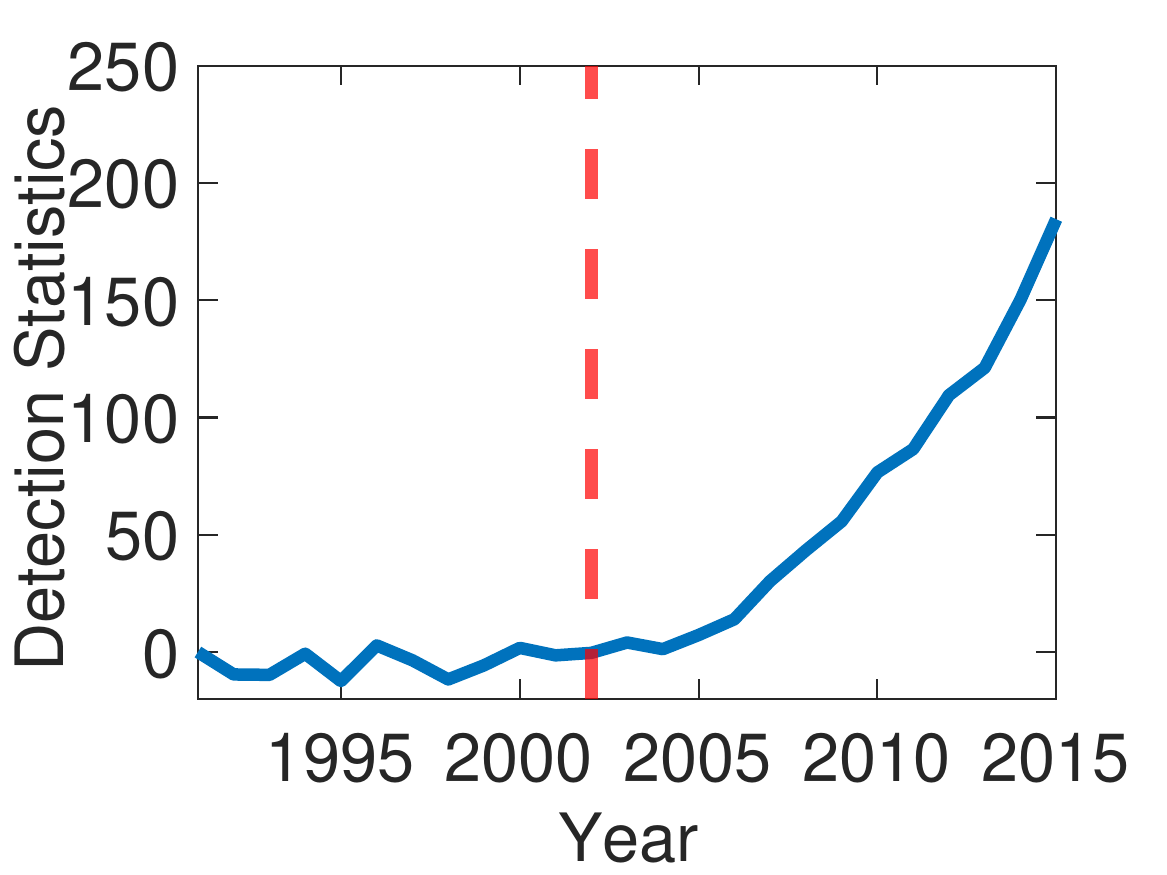} & \includegraphics[width=0.22\textwidth]{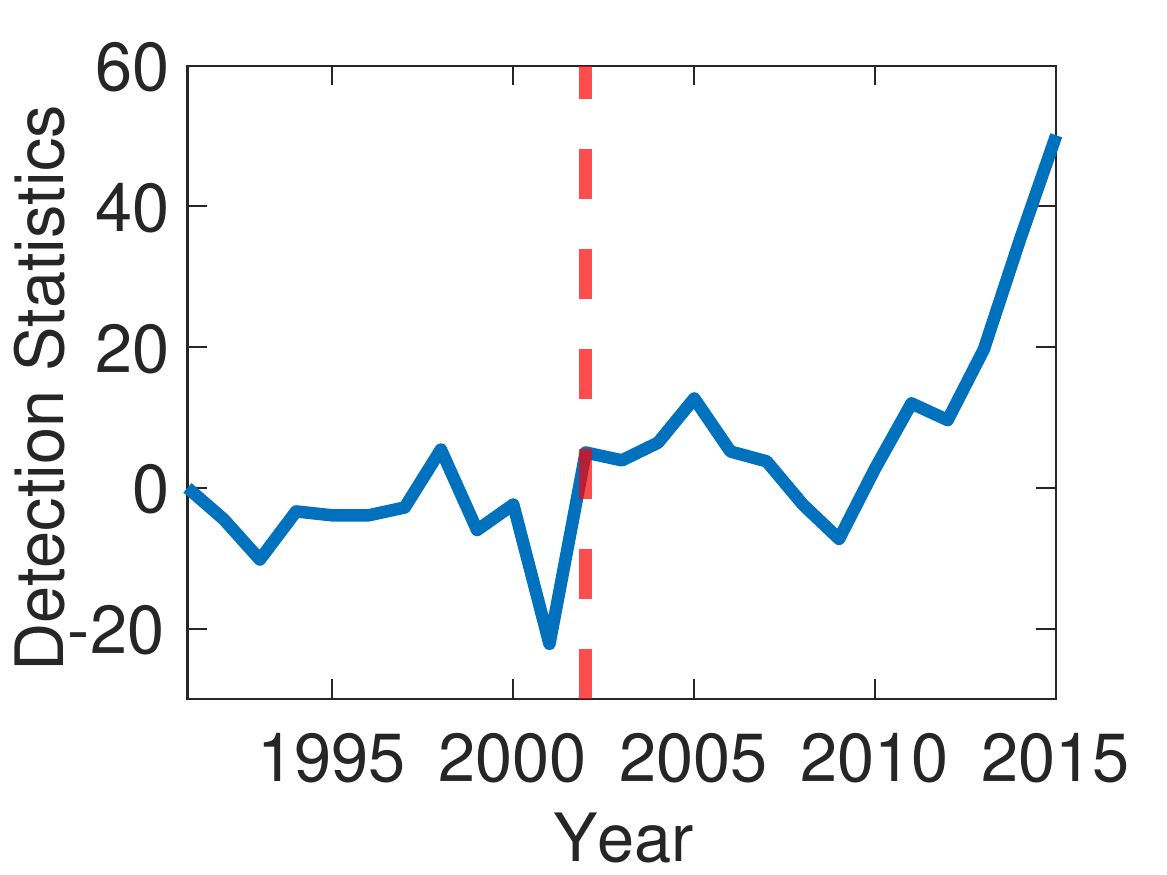} \\
       Raw Network ($\epsilon=+\infty$) & $\epsilon=1.5$ & $\epsilon=1$ &  $\epsilon=0.8$ 
    \end{tabular}   
    \caption{Trajectory of the detection statistics under LDP on the agricultural trade dataset, for the raw network (no perturbation), the privatized/perturbed network with $\epsilon=1.5$, 1, and 0.8, representing different privacy levels.}
    \label{fig:real}
\end{figure*}

Since private detection under LDP settings is more challenging than in CDP for the same privacy parameter $\epsilon$, we visualize the detection statistics from \eqref{eq:stat-wlcusum} under LDP in Fig. \ref{fig:real}. The detection statistics successfully identify the change-point in 2002, as noted in previous studies \cite{wang2023multilayer}. This year 2002 marked a significant turning point in global commodity prices, which likely influenced trade patterns of various countries \cite{wang2023multilayer}. 
Furthermore, from the trajectory of detection statistics after applying randomized perturbation with {\it different levels of $\epsilon$ values}, we can see the {\it tradeoff between privacy and utility} (detection effectiveness): stronger privacy constraints (lower $\epsilon$ values) make it more challenging to detect the change.

\paragraph{U.S. Air Transportation Network Dataset} We use monthly data from February 2018 to July 2022 (54 months). Each node represents an airport, and each edge encodes the presence of direct flights between two airports. We select $n=50$ airports with the highest combined numbers of departing and arriving flights. 
For each airport pair $(i,j)$, we let $A_{i, j}= 0$ if there is no direct flight recorded, $A_{i, j} = 1$ if the number of airlines operating direct flights between them exceeds a pre-specified threshold, and $A_{i, j}=-1$ otherwise. Consequently, the resulting graph represents the density of airline connectivity among major U.S. airports. 
Data from February 2016 to June 2018 are used as the historical data for estimating the pre-change community structure and parameters, with the rest being the dataset for conducting change-point detection. 

In this real data example, it is important to note that the distribution parameters $p$ and $\zeta$ may also change over time. Therefore, in addition to computing detection statistics under the assumption of constant $p$ and $\zeta$, we also implement a version that allows these parameters to change. Specifically, we estimate their post-change values jointly with $\bm{\hat\sigma}_{t-1}$ using the maximum likelihood approach detailed in Appendix~\ref{sec:llr}.

\begin{figure}[ht!]
    \centering
    \begin{tabular}{cc}
     \includegraphics[width=0.22\textwidth]{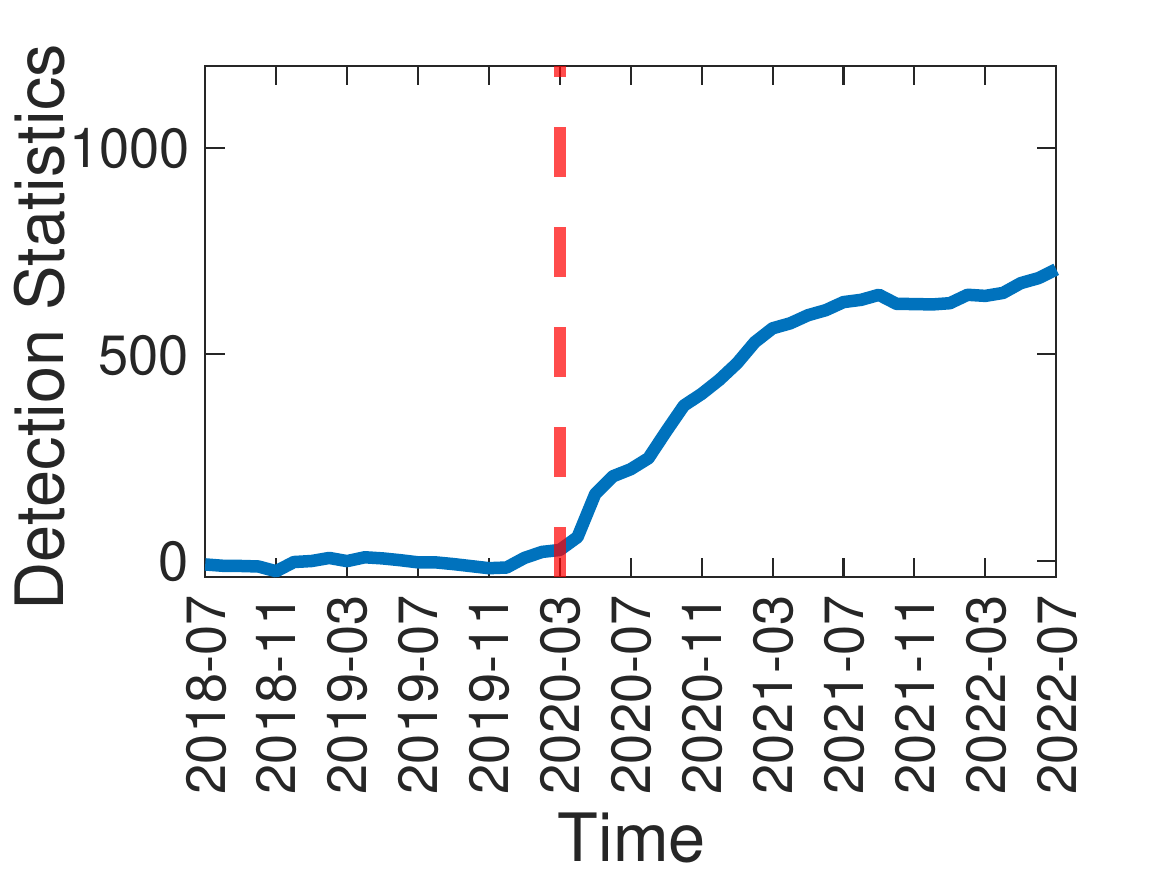} & \includegraphics[width=0.22\textwidth]{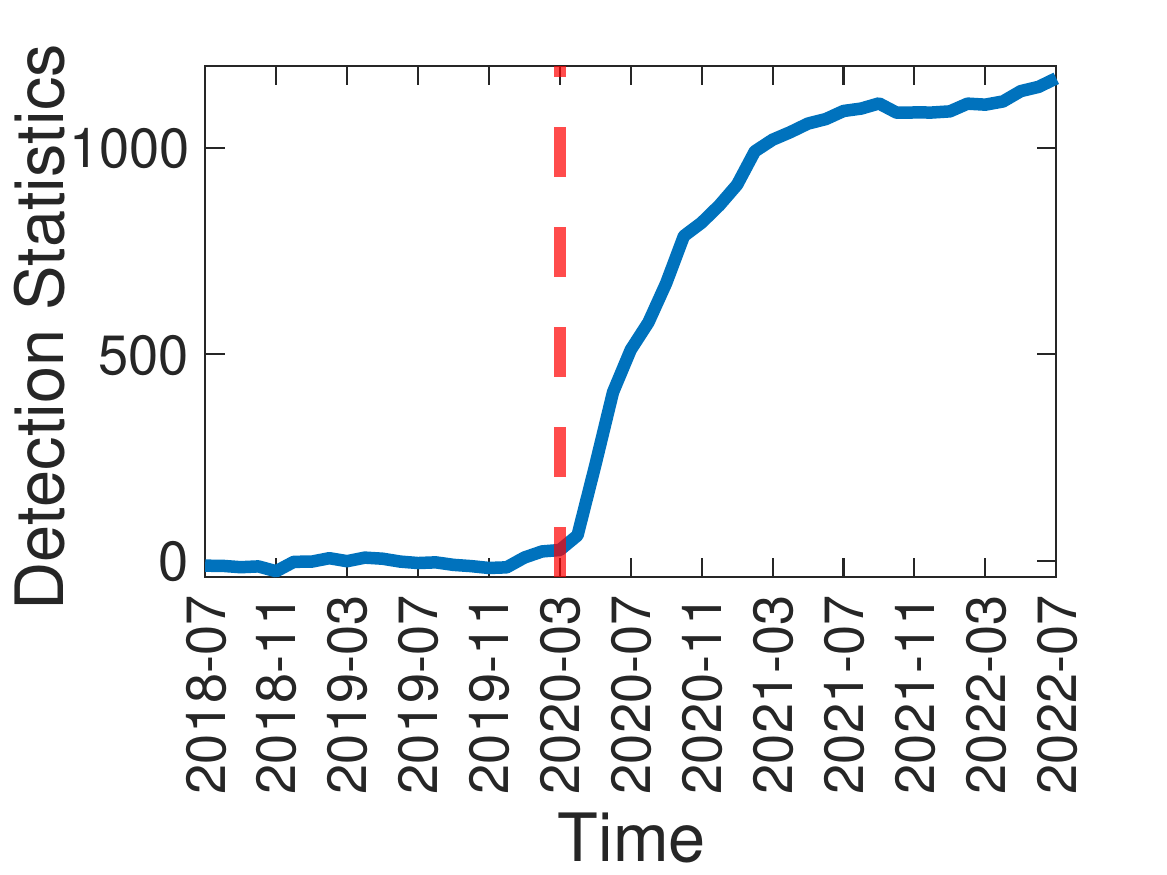} \\
     \makecell{Raw Network ($\epsilon=+\infty$); \\ Assume unchanged $p,\zeta$}   & \makecell{Raw Network ($\epsilon=+\infty$); \\ Assume changed $p,\zeta$}  \\
     \includegraphics[width=0.22\textwidth]{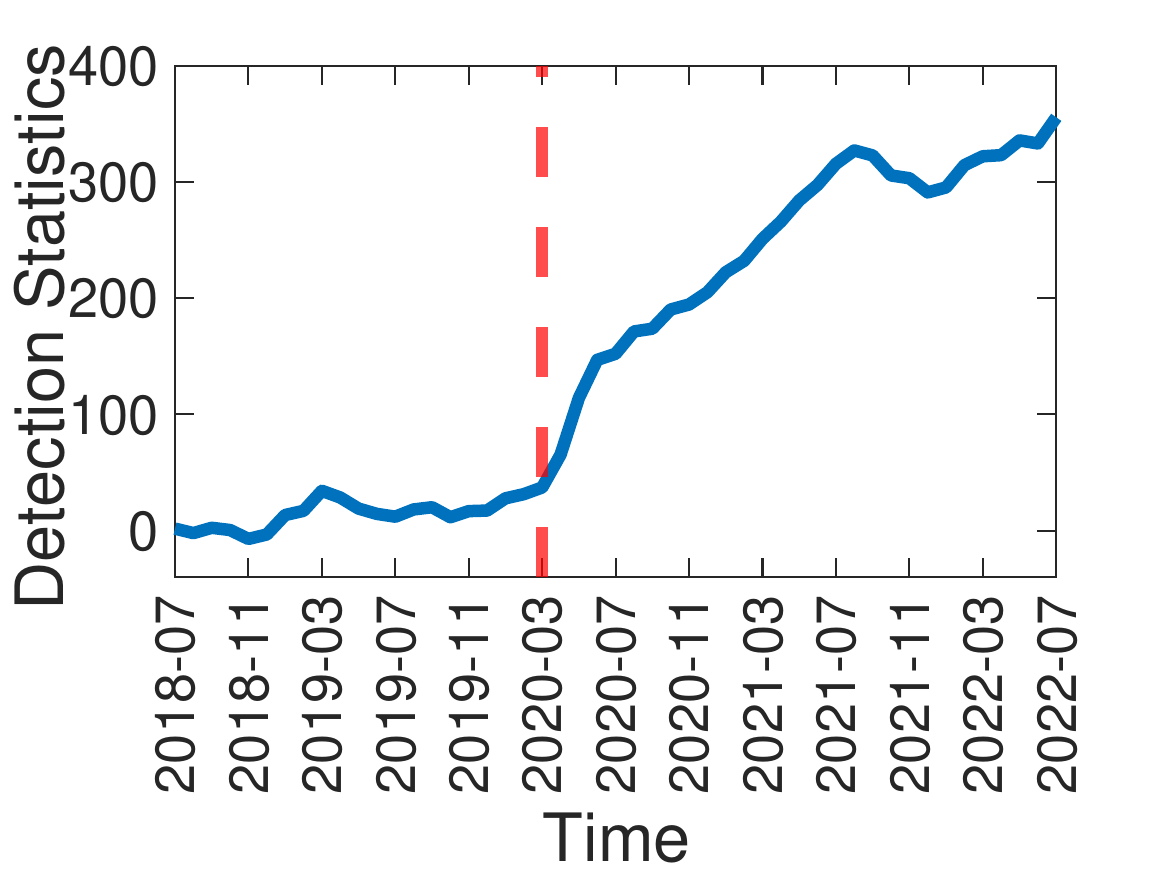} & \includegraphics[width=0.22\textwidth]{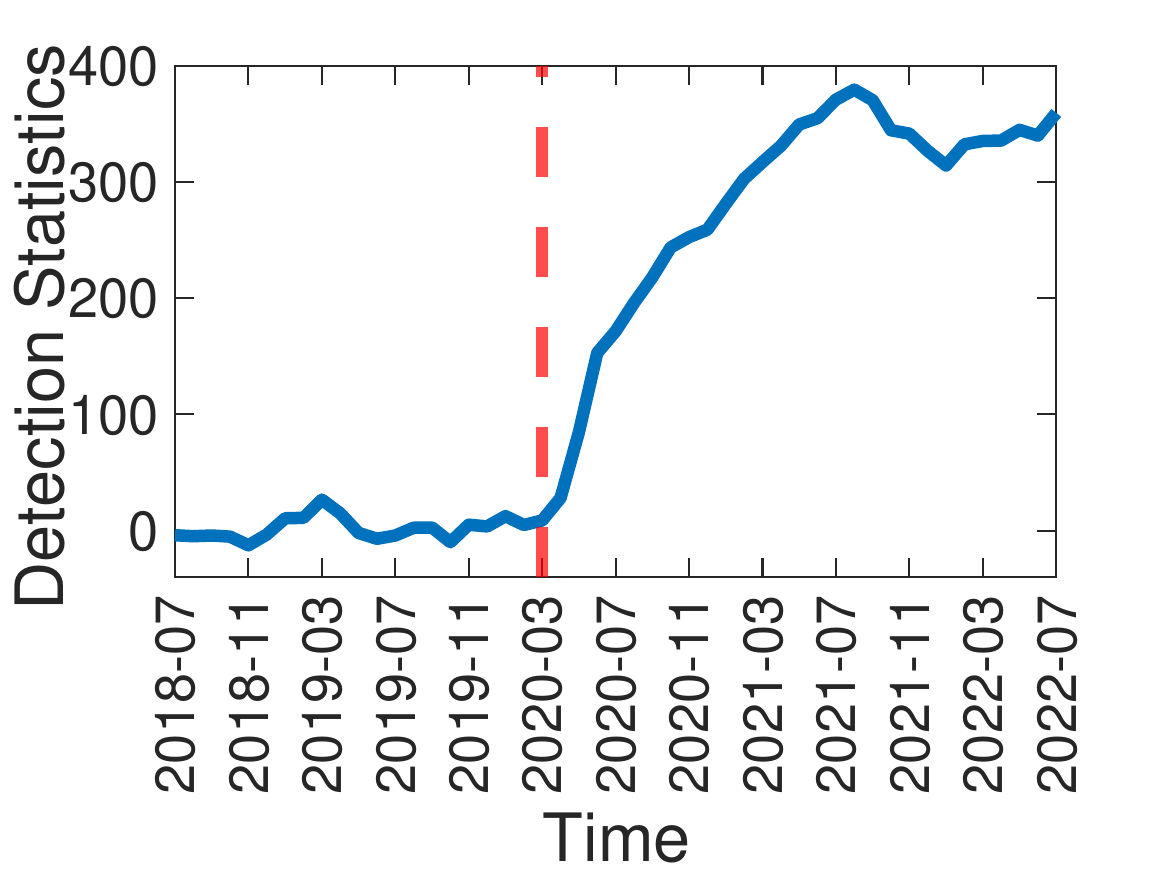} \\
       \makecell{ $\epsilon=1.5$; \\ Assume unchanged $p,\zeta$} &  \makecell{$\epsilon=0.8$; \\ Assume changed $p,\zeta$} 
    \end{tabular}   
    \caption{Trajectory of the detection statistics under LDP on the U.S. air transportation dataset, for the raw network (no perturbation, corresponding to $\epsilon=\infty$) and the privatized/perturbed network (with $\epsilon=1.5$). For each case, we report two versions of the detection statistic: one assuming the parameters $p$ and $\zeta$ remain unchanged after the change-point, and another allowing them to change, in which case the post-change values of $p$ and $\zeta$ are also estimated during detection.}
    \label{fig:real-air}
\end{figure}

We evaluate the more challenging private change detection under local differential privacy (LDP) constraints. Fig.~\ref{fig:real-air} shows the trajectory of the detection statistics. The statistic successfully identifies a change-point in March 2020, coinciding with the disruption in air travel caused by the COVID-19 outbreak, also consistent with findings in previous studies~\cite{wang2023multilayer}. 
From the trajectory of detection statistics after applying randomized perturbation, we can see that the detection becomes less effective due to privacy requirements, as evidenced in the slower increase in detection statistics. Furthermore, the detection statistics that take into account the potential change in $p$ and $\zeta$ improve detection performance in both raw and privatized settings, as reflected by the faster growth of the corresponding detection statistics.

\section{Conclusion and Discussion}

{This work has studied online change detection of dynamic community structures under the censored block model. Detection algorithms have been presented under two differential privacy settings. The proposed methods are applicable to a wide range of network types and sizes. Beyond network data, the detection framework can also be extended to other data types, such as spatio-temporal data, where spatial dependencies are captured by an underlying connectivity graph.}
Building on these findings, we can identify several avenues for future research. Firstly, we may consider the detection of general changes, including changes in both the community label vector $\bm\sigma$ and the parameters $p,\zeta$ of the CBM model. Changes in $p$ and $\zeta$ will represent practical scenarios in which the local connectivity within a community is strengthened or weakened after the change. Secondly, we leave the more involved discussion on the impact of the multi-view stochastic model \cite{zhang2024community} on our joint detection and estimation scheme, as well as the corresponding impact of the window size $w$, for future investigation. We conjecture that there is an optimal tradeoff between the detection performance and computational/memory efficiency by carefully choosing the window size, especially in the non-asymptotic regime where the network size $n$ is limited. Finally, this work opens numerous opportunities for exploring more advanced differential privacy mechanisms to further enhance detection performance.


\bibliographystyle{plain}
\bibliography{myreferences}


\appendices

\section{More Technical Details and Discussions}

\subsection{Tradeoff between Privacy and Separation Condition for Graph Perturbation}

From Theorem \ref{thm:private_threshold_condition_one_time_instance}, we have a sufficient condition for exact recovery being 
\begin{align*}
    a \big(\sqrt{1-\zeta} - \sqrt{\zeta} \big)^{2} > \frac{1}{\alpha} \times \left(\frac{e^{\epsilon} + 1}{e^{\epsilon} - 1} \right), 
\end{align*}
where $\alpha=1-\frac{1}{\sqrt{n}}$. 
  As can be shown in Fig. \ref{fig:phase_transition_mechanisms_app}, the analysis of the lower bounds on the parameter \( a \) for \( \alpha \), specifically \( \alpha = 1 - \frac{1}{\sqrt{n}} = 1 - o(1) \).  This functional form is desirable as it allows for more flexibility and potentially less stringent requirements on the parameter \( a \) while maintaining the desired privacy level defined by \( \epsilon = c \log(n) \). Thus, in scenarios where achieving an optimal balance between privacy and parameter constraints is crucial, opting for \( \alpha = 1 - o(1) \) would be a more strategic choice.

\begin{figure}[ht!]
\centering
	\centering
	{\includegraphics[width=0.75\columnwidth]{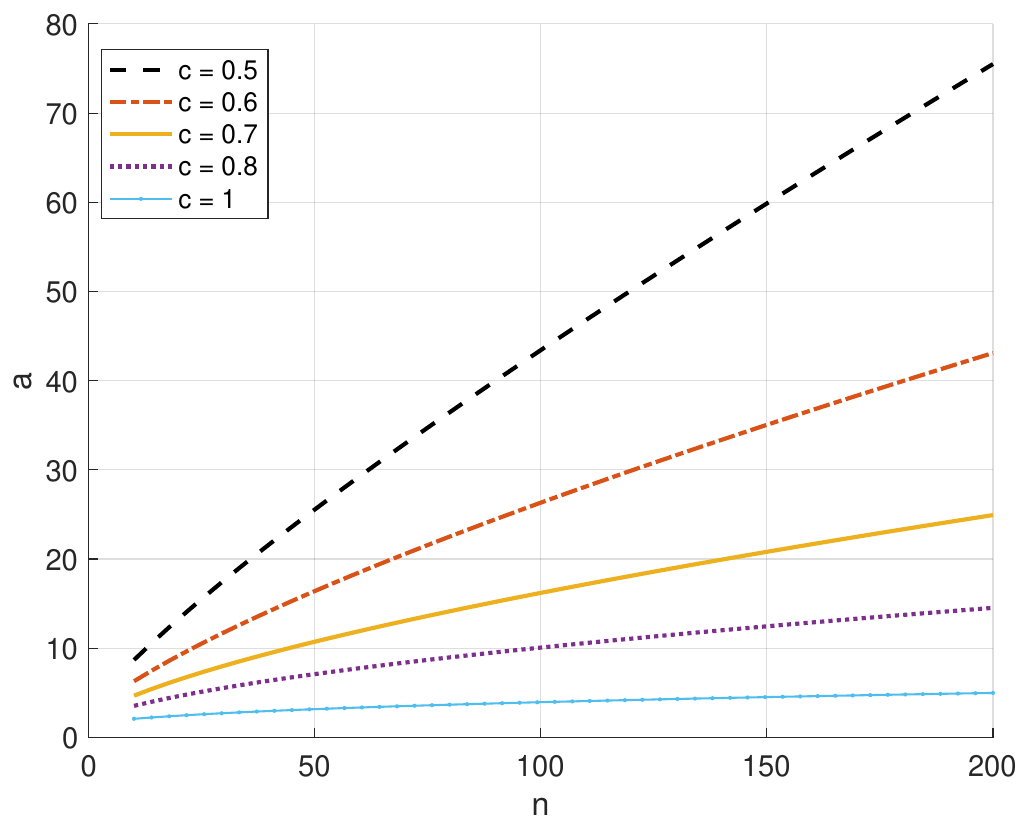}}	
	\caption{\small{\textit{Impact of the Privacy Scaling Coefficient \( c \) on the Required Lower Bound of \( a \)}: This plot illustrates the relationship between the lower bound of the parameter \( a \) and the number of samples \( n \), under the privacy condition \( \alpha = 1 - \frac{1}{\sqrt{n}} \) and the privacy budget \( \epsilon = c \log(n) \), where \( c > 0 \). The analysis reveals that an increase in the value of \( c \) leads to a reduction in the minimum required value of \( a \), indicating a trade-off between the privacy scaling coefficient and the parameter \( a \).}}
    \label{fig:phase_transition_mechanisms_app}
\end{figure}


\subsection{Derivation of the Log-Likelihood and the KL Divergence}\label{sec:llr}

For a given graph $G_t$ at time $t$ with adjacency matrix $\bA_t$, denote $E_{t,1}:=\#\{i<j:\sigma_i A_{t,i,j}\sigma_j=1\}$ as the number of edges within the same community with edge label $1$, or in different communities with edge label $-1$; similarly $E_{t,-1}:=\#\{i<j:\sigma_iA_{t,i,j}\sigma_j=-1\}$. Let $E_t = E_{t,1} + E_{t,-1}$ be the total number of edges at time $t$. Then we have the log-likelihood function is
\[
p(\mathbf{A};\bm\sigma,p,\zeta)= [p(1-\zeta)]^{E_{t,1}}[p\zeta]^{E_{t,-1}}(1-p)^{{n \choose 2} - E_t}.
\]
Note that 
\[
\begin{aligned}
& \frac12\bm\sigma^T \mathbf{A}_t \bm\sigma = E_{t,1} -  E_{t,-1} = 2 E_{t,1} - E_t \\ 
 \Rightarrow &   E_{t,1} = \frac14 \bm\sigma^T \mathbf{A}_t \bm\sigma + \frac12 E_t. 
\end{aligned}
\]
Substitute this into the log-likelihood we obtain
\begin{align}\label{eq:app_ll}
  & \log p(\mathbf{A}_{t};\bm\sigma,p,\zeta)  = \left(\frac14 \log\frac{1-\zeta}{\zeta}\right)\bm{\sigma}^T \mathbf{A}_t \bm{\sigma} \nonumber \\ 
 & +  {n \choose 2} \log(1-p) + E_t \log \left(\frac{p}{1-p}\sqrt{\zeta(1-\zeta)}\right).  
\end{align}

\paragraph{Maximum Likelihood Estimator (MLE)}\label{app:mle}
By the log-likelihood function \eqref{eq:app_ll}, we have that the MLE for $\bm\sigma$ is $\hat{\bm{\sigma}}_{\text{MLE}} = \arg \max_{\bm{\sigma}} \{\bm{\sigma}^{T}\mathbf{A} \bm{\sigma}\}$. And the MLE for $p$ and $\zeta$ are $\hat p_{\text{MLE}}= E_t/{n \choose 2}$ and $\hat \zeta_{\text{MLE}}=\frac12 - \frac{\hat{\bm{\sigma}}_{\text{MLE}}^{T}\mathbf{A} \hat{\bm{\sigma}}_{\text{MLE}}}{4E_t}$, respectively.

\paragraph{KL Divergence} Based on \eqref{eq:app_ll}, the KL divergence between $\CBM(\bm\sigma,p,\zeta)$ and $\CBM(\bm\sigma',p,\zeta)$ can be calculated as
\begin{align}\label{eq:app_KL}
    I & = \bE_{\bm\sigma',p,\zeta}[\log p(\bA;\bm\sigma',p,\zeta) - \log p(\bA;\bm\sigma,p,\zeta) ] \nonumber \\
    & = \left(\frac12 \log\frac{1-\zeta}{\zeta}\right)p(1-2\zeta)\left[{n\choose 2} - C_{\bm\sigma,\bm\sigma'}\right],
\end{align}    
where $C_{\bm\sigma,\bm\sigma'}:=\sum_{i<j} {\sigma}_i {\sigma}_j {\sigma}_i' {\sigma}_j'$ represent the similarity level between the community label vectors $\bm\sigma$ and $\bm\sigma'$ and it satisfies $|C_{\bm\sigma,\bm\sigma'}|\leq {n \choose 2}$.

\subsection{Analysis of the Graph Perturbation Mechanism}\label{sec:pert_calc}

In this subsection, we provide analysis for the Ternary Randomized Response Mechanism used in Section \ref{sec:ldp-method}. 
In this mechanism, when \( A_{i,j} = x \) where \( x \in \{0, +1, -1\} \), the mechanism outputs \( \tilde{A}_{i,j} = y \), where \( y \in \{0, +1, -1\} \) with probability \( q_{x}(y) \triangleq \operatorname{Pr}(\tilde{\bA}_{i,j} = y | \bA_{i,j} = x) \). We set $q_{x}(y)=c_2$, $\forall x\neq y$, and $q_{x}(y)=c_1=1-2c_2$, $\forall x=y$. We denote $\tilde{p}$ as the distribution of each edge $\tilde{A}_{i,j}$ for the pair $(i,j)$ such that $\bm\sigma_i=\bm\sigma_j$ (in the same community), and denote $\tilde{q}$ as the distribution of each edge $\tilde{A}_{i,j}$ for the pair $(i,j)$ such that $\bm\sigma_i\neq \bm\sigma_j$ (in different communities). We have the following: 
\begin{align*}
    \tilde{p}({+1}) &  = c_{1} p (1 -\zeta) + c_{2} p \zeta + c_{2} (1-p), \nonumber \\ 
    \tilde{p}({-1}) & = c_{2} p (1-\zeta) + c_{1} p \zeta + c_{2} (1-p), \nonumber \\ 
  \tilde{p}({0}) & = c_{2} p (1-\zeta) + c_{2} p \zeta + c_{1} (1-p), \nonumber \\ 
   \tilde{q}({+1}) & =   c_{1} p \zeta + c_{2} p (1 - \zeta) + c_{2} (1-p), \nonumber \\ 
   \tilde{q}({-1}) &  = c_{2} p \zeta + c_{1} p (1-\zeta) + c_{2} (1-p), \nonumber  \\ 
    \tilde{q}({0}) & = c_{2} p (1-\zeta) + c_{2} p \zeta + c_{1} (1-p). 
\end{align*}

We note that after the above ternary-randomized mechanism, the distribution of $\tilde{\bA}$ is still a CBM model. More specifically, for the original CBM model $\CBM(\bm\sigma,p,\zeta)$, the perturbed $\tilde\bA$ follows the CBM model $\CBM(\bm\sigma,\tilde p,\tilde \zeta)$  with the same community label vector $\bm\sigma$ and new parameters
\begin{align}\label{eq:para_perturb}
\tilde p &= 1-c_1(1- p) - c_2 p = 2c_2 + p(1-3c_2), \nonumber \\
\tilde\zeta &=  \frac{c_2(1-p\zeta) + c_1p\zeta}{1-c_1(1- p) - c_2 p} = \frac{c_2 + p\zeta(1-3c_2)}{2c_2 + p(1-3c_2)}.
\end{align}

We further analyze the change in the KL divergence between the pre- and post-change distributions after the perturbation. The KL divergence for the raw data distribution (before perturbation) can be derived as (see \eqref{eq:app_KL}), 
\begin{align}
&I(\bm{\sigma}^{\text{pre}};\bm{\sigma}^{\text{post}})=\frac12 \log\frac{1-\zeta}{\zeta} \cdot p(1-2\zeta)\left[{n\choose 2} - C_{\bm{\sigma}^{\text{pre}},\bm{\sigma}^{\text{post}}}\right].
\end{align}
After the perturbation, we have similarly
\begin{align}
&\tilde I(\bm{\sigma}^{\text{pre}};\bm{\sigma}^{\text{post}}) =\frac12 \log\frac{1-\tilde\zeta}{\tilde\zeta}\cdot \tilde p(1-2\tilde\zeta)\left[{n\choose 2} - C_{\bm{\sigma}^{\text{pre}},\bm{\sigma}^{\text{post}}}\right].
\end{align}
Substitute the parameter $\tilde p$ and $\tilde \zeta$ in \eqref{eq:para_perturb}, and using the setting that $c_1=1-2c_2$, we have 
\begin{align}\label{eq:tentative}
&\left(\log\frac{1-\tilde\zeta}{\tilde\zeta}\right)\tilde p(1-2\tilde\zeta) \nonumber \\ 
 = &\left(\log\frac{c_2 + p(1-\zeta)(1-3c_2)}{c_2 + p\zeta(1-3c_2)}\right)\cdot p(1-2\zeta)(1-3c_2).    
\end{align}

We note that $\log\frac{c_2 + p(1-\zeta)(1-3c_2)}{c_2 + p\zeta(1-3c_2)}$ is a positive and decreasing function of $c_2$ when $\zeta<\frac12$ and a negative and increasing function when $\zeta>\frac12$, the same is the function $p(1-2\zeta)(1-3c_2)$. Moreover, when $c_2=0$ we have the KL divergence remaining unchanged. Combining these together, we have \eqref{eq:tentative} is a nonnegative and decreasing function of $c_2$, with the largest value obtained when $c_2=0$ and the largest value equal to $I(\bm{\sigma}^{\text{pre}};\bm{\sigma}^{\text{post}})$. This demonstrates the {\it tradeoff between detection and privacy}. This being said, for a given privacy constraint, we may select the smallest value of $c_2$ value that satisfies this constraint; such choice ensures the best detection performance after the perturbation. 

Note that, under $\epsilon$-edge LDP, we require $c_2/c_1\in[e^{-\epsilon},e^{\epsilon}]$, which yields the smallest possible value for $c_2$ being $c_2=\frac{1}{e^\epsilon +2}$, which is the value we choose in \ref{sec:ldp-method}.

\section{Fundamentals of Technical Analysis}

\subsection{Ancillary Lemmas}
We provide auxiliary results and definitions that are used throughout the proofs.

\begin{lemma}[\cite{hajek2016achieving}] \label{lemma_tail_bound_censored} Let $\{X_{i}\}_{i=1}^{m}$ be a collection of independent  random variables, where $p=a \frac{\log n}{n}$, $m = \rho n + o(n)$ for some $\rho>0$ as $n\to\infty$, and each $X_{i} \overset{i.i.d.} \sim p (1-\zeta) \delta_{+1} + p \zeta \delta_{-1} + (1-p) \delta_{0}$. Then, 
\begin{align*}
\operatorname{Pr} \bigg(\sum_{i=1}^{m} X_{i} \leq  k_{n} \bigg) \leq n^{- \rho  a  (\sqrt{1 - \zeta} - \sqrt{\zeta})^{2} + o(1)},
\end{align*}
where $k_{n} = (1 + o(1)) \frac{\log(n)}{\log(\log(n))}$. 
\end{lemma}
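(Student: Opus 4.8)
The plan is to apply a standard Chernoff (exponential Markov) bound to the lower tail and then control the resulting exponent. For any $t \ge 0$, Markov's inequality applied to $e^{-t\sum_i X_i}$ together with independence of the $X_i$ gives
\[
\operatorname{Pr}\Bigl(\sum_{i=1}^m X_i \le k_n\Bigr) \le e^{t k_n}\,\bigl(\mathbb{E}[e^{-tX_1}]\bigr)^m.
\]
Since this holds for \emph{every} $t\ge 0$, I do not need the exactly optimal $t$; it suffices to substitute a judiciously chosen value, which sidesteps solving a transcendental optimization.

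First I would compute the moment generating function $\mathbb{E}[e^{-tX_1}] = p(1-\zeta)e^{-t} + p\zeta e^{t} + (1-p) = 1 + p\bigl[(1-\zeta)e^{-t} + \zeta e^{t} - 1\bigr]$. Because $k_n/m = o(1)$, the factor $e^{tk_n}$ is lower order and the leading behavior is governed by minimizing $\mathbb{E}[e^{-tX_1}]$ over $t$. Setting the derivative to zero yields $(1-\zeta)e^{-t} = \zeta e^{t}$, i.e. $t^* = \tfrac12\log\frac{1-\zeta}{\zeta}$. Substituting $e^{\pm t^*} = ((1-\zeta)/\zeta)^{\pm 1/2}$ gives $(1-\zeta)e^{-t^*} = \zeta e^{t^*} = \sqrt{\zeta(1-\zeta)}$, so
\[
\mathbb{E}[e^{-t^*X_1}] = 1 + p\bigl(2\sqrt{\zeta(1-\zeta)} - 1\bigr) = 1 - p\bigl(\sqrt{1-\zeta}-\sqrt{\zeta}\bigr)^2,
\]
where the last identity uses $1 - 2\sqrt{\zeta(1-\zeta)} = (\sqrt{1-\zeta}-\sqrt{\zeta})^2$.

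Next I would take logarithms and insert the scalings. Writing $c := (\sqrt{1-\zeta}-\sqrt{\zeta})^2$ and $p = a\log n/n$, the expansion $\log(1-pc) = -pc\,(1+o(1))$ (valid since $pc \to 0$) combined with $m = \rho n + o(n)$ yields $m\log\mathbb{E}[e^{-t^*X_1}] = -\rho a c\,\log n + o(\log n)$. For the remaining factor, $t^* k_n = \tfrac12\log\frac{1-\zeta}{\zeta}\cdot(1+o(1))\tfrac{\log n}{\log\log n} = o(\log n)$, since $\zeta$ is a fixed constant. Combining, the exponent equals $-\rho a c\,\log n + o(\log n)$, and expressing the bound as a power of $n$ gives exactly $n^{-\rho a(\sqrt{1-\zeta}-\sqrt{\zeta})^2 + o(1)}$, as claimed.

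I expect the only delicate point to be the bookkeeping of lower-order terms: one must verify that both $e^{t^* k_n}$ and the quadratic-and-higher corrections in $\log(1-pc)$ contribute only $o(\log n)$ to the exponent, and are therefore absorbed into the $o(1)$ appearing in the final exponent of $n$. The key simplification is choosing $t^*$ to be the minimizer of the MGF rather than the exact optimizer of the full exponent $t k_n + m\log\mathbb{E}[e^{-tX_1}]$; because the constraint level $k_n/m = o(1)$, the shift between these two choices is negligible and costs nothing at the resolution $o(1)$ we are after.
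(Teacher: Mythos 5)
Your proposal is correct and follows essentially the same route as the paper (and the cited reference): a Chernoff bound on the lower tail, evaluation of the moment generating function at $t^{*}=\tfrac12\log\tfrac{1-\zeta}{\zeta}$ to obtain $\mathbb{E}[e^{-t^{*}X_1}]=1-p(\sqrt{1-\zeta}-\sqrt{\zeta})^{2}$, and absorption of both the $e^{t^{*}k_n}$ factor and the higher-order terms of $\log(1-pc)$ into the $o(1)$ in the exponent. Your observation that the exact MGF minimizer suffices (rather than the optimizer of the full exponent, which the paper writes as $t^{*}+o(1)$) is a legitimate simplification since any $t\ge 0$ yields a valid bound and the shift costs only $o(\log n)$.
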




We present a high probability result for upper bounding the spectral norm of $\mathbf{A} - \mathds{E}(\mathbf{A})$.
\begin{lemma} (Matrix concentration inequality for dense regime \cite{hajek2016achieving}) \label{lem:lem-spectral-concentrate}
For a matrix $\mathbf{A} \sim \CBM(\bm\sigma,p,\zeta)$ and $p = \Omega (\log(n)/n)$, we have for any $c>0$, there exists $c'>0$ such that
\begin{align*}
   \operatorname{Pr}\bigg( \|\mathbf{A} - \mathds{E}(\mathbf{A})\| \leq c' \sqrt{np}\bigg) \geq 1-n^{-c}.
\end{align*}
\end{lemma}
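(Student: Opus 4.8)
The plan is to prove the bound by the moment (trace) method followed by Markov's inequality, which is the standard route for the spectral norm of a symmetric matrix with independent, bounded, mean-zero entries in the dense regime. Write $\mathbf{M} := \mathbf{A} - \bE[\mathbf{A}]$. By construction $\mathbf{M}$ is symmetric with zero diagonal; its above-diagonal entries $\{M_{ij}\}_{i<j}$ are independent and mean-zero, each satisfies $|M_{ij}| \le 2$, and the per-entry variance obeys $\operatorname{Var}(M_{ij}) = \operatorname{Var}(A_{ij}) \le \bE[A_{ij}^2] = p$. The goal is $\operatorname{Pr}(\|\mathbf{M}\| > c'\sqrt{np}) \le n^{-c}$. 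Since $\|\mathbf{M}\|^{2k} \le \operatorname{tr}(\mathbf{M}^{2k})$ for every integer $k\ge 1$, Markov's inequality gives $\operatorname{Pr}(\|\mathbf{M}\| > t) \le t^{-2k}\,\bE[\operatorname{tr}(\mathbf{M}^{2k})]$, so it suffices to bound the $2k$-th trace moment for a well-chosen even power with $k \asymp \log n$.

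Next I would expand $\bE[\operatorname{tr}(\mathbf{M}^{2k})] = \sum \bE[M_{i_0 i_1}M_{i_1 i_2}\cdots M_{i_{2k-1} i_0}]$ over all closed walks $(i_0,i_1,\ldots,i_{2k-1},i_0)$ of length $2k$ on $[n]$, and organize the sum by the combinatorial type of each walk. Because the entries are independent and mean-zero, any walk that traverses some edge exactly once contributes zero; hence only walks in which every distinct edge is used at least twice survive. The dominant contribution comes from the \emph{tree-like} walks that use $k$ distinct edges, each exactly twice, and visit $k+1$ distinct vertices: there are at most the Catalan number $C_k \le 4^k$ shapes, at most $n^{k+1}$ vertex labelings, and each surviving term is bounded by $\prod_{\text{edges}}\bE[M_{ij}^2] \le p^k$. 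This yields a leading bound of order $n\,(4np)^k$. Substituting $t = c'\sqrt{np}$ gives $\operatorname{Pr}(\|\mathbf{M}\| > c'\sqrt{np}) \le n\,(4/(c')^2)^k$; taking $c'$ a large enough constant so that $4/(c')^2 < 1$, and then $k = \lceil (c+1)\log n /\log((c')^2/4)\rceil = \Theta(\log n)$, drives the right-hand side below $n^{-c}$.

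The main obstacle is controlling the non-tree-like walks, i.e., those that revisit vertices or use some edge more than twice. Each such \emph{excess} reduces the number of distinct vertices (costing roughly a factor $1/n$ per excess) while the entrywise bound $|M_{ij}| \le 2$ contributes only a bounded factor per extra traversal; the delicate accounting is to show that, summed over all multiplicity profiles and shapes, these terms form a convergent geometric-type series and stay dominated by the tree-like term $n\,(Cnp)^k$. This is exactly where the dense-regime hypothesis $p = \Omega(\log n/n)$, i.e. $np = \Omega(\log n)$, is essential: it forces a typical vertex degree to concentrate about its mean, so that no localized high-degree structure inflates the excess walks (the same phenomenon that necessitates truncation once $np = o(\log n)$), and it permits the choice $k = \Theta(\log n)$ needed for the $n^{1/(2k)}$ prefactor to be $O(1)$. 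I would carry out this bookkeeping following the closed-walk counting in \cite{hajek2016achieving} (Füredi--Komlós style), grouping walks by their number of distinct edges and multiplicity profile.

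As an alternative I could run an $\varepsilon$-net argument over the unit sphere, bounding the quadratic form $x^\top \mathbf{M} x$ for each fixed $x$ by Bernstein's inequality; however, the per-vector tail is only $\exp(-\Omega(\sqrt{np}))$ because the individual summands are $O(1)$, which is too weak to survive a union bound over the $e^{\Theta(n)}$ net points at the boundary $np = \Theta(\log n)$. Salvaging this route requires the Feige--Ofek separation into \emph{light} and \emph{heavy} pairs plus a bounded-discrepancy argument on the heavy part. I prefer the moment method above, since it delivers the stated polynomial failure probability $n^{-c}$ directly.
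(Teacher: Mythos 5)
First, note that the paper does not actually prove this lemma: it is imported verbatim from \cite{hajek2016achieving}, where (as in Lei--Rinaldo and Feige--Ofek, which that reference builds on) it is established by the $\varepsilon$-net argument combined with the light/heavy-pair decomposition and a bounded-discrepancy estimate --- precisely the route you considered and then set aside. So the only comparison available is between your proposal and the proof in the cited source, and they genuinely diverge.

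The divergence matters, because your preferred route has a real gap exactly at the parameter regime of this paper. You correctly identify that the whole difficulty is the non-tree-like (excess) walks, but the classical F\"uredi--Koml\'os bookkeeping you invoke does not close at the boundary $np=\Theta(\log n)$. Each unit of excess in a closed walk of length $2k$ saves one factor of $n$ in the vertex count but costs a combinatorial factor that is polynomial in $k$ (the number of ways to place the extra traversals), times a bounded factor from $|M_{ij}|\le 2$. The resulting ratio between consecutive terms of your series is of order $k^{O(1)}K^2/(np)$, and with $k=\Theta(\log n)$ forced by the target failure probability $n^{-c}$ and $np=\Theta(\log n)$ given by $p=a\log n/n$, this ratio is $\Theta\bigl((\log n)^{O(1)-1}\bigr)$, which does not yield a convergent geometric series: the excess walks dominate rather than being dominated. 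This is not a cosmetic issue --- the sharpest classical moment-method results of this type (e.g., Vu's) require $np\gtrsim\log^4 n$ precisely for this reason, so the hypothesis $np=\Omega(\log n)$ is \emph{not} sufficient for the bookkeeping you describe. To actually prove the lemma at $np=\Theta(\log n)$ one needs either (i) the Feige--Ofek decomposition into light pairs (handled by Bernstein plus a net, where the summands are now small enough) and heavy pairs (handled by a combinatorial discrepancy/degree-concentration argument), which is the argument behind the cited result; or (ii) a genuinely refined moment bound of Bandeira--van Handel/Seginer type, $\mathbb{E}\|\mathbf{M}\|\lesssim \max_i(\sum_j \mathbb{E}M_{ij}^2)^{1/2}+\max_{ij}\|M_{ij}\|_\infty\sqrt{\log n}=O(\sqrt{np})$, followed by concentration of $\|\mathbf{M}\|$ around its mean. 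Your leading-order (tree-walk) computation and the reduction via $\Pr(\|\mathbf{M}\|>t)\le t^{-2k}\mathbb{E}[\operatorname{tr}(\mathbf{M}^{2k})]$ are fine; it is the claim that the remaining terms form a convergent series under $np=\Omega(\log n)$ alone that would fail.
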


\subsection{Deriving Recovery Condition for Exact Recovery}\label{app:recovery}

In this subsection, we review the derivation of necessary conditions, $a (\sqrt{1 - \zeta} - \sqrt{\zeta})^{2} > 1$, for exact community recovery using SDP estimate \eqref{eqn:SDP_relaxation_asymmetric}, under non-private scenarios, as shown in \cite{hajek2016achieving}. This derivation forms the foundational basis for our subsequent proofs. We also shed light on the necessary condition when multiple graphs are used. 

First of all, for the semidefinite program in \eqref{eqn:SDP_relaxation_asymmetric}, the Lagrangian function is written as follows,
\begin{align*}
    & \mathcal{L}({\mathbf{A}}, \mathbf{Y}, \mathbf{S}, \mathbf{D}) = \operatorname{tr}({\mathbf{A}} \mathbf{Y}) + \operatorname{tr}(\mathbf{S} \mathbf{Y}) - \operatorname{tr}(\mathbf{D} (\mathbf{Y}-\mathbf{I})),
\end{align*}
where $\mathbf{S} \succcurlyeq \mathbf{0}$ and $\mathbf{D} = \operatorname{diag}(d_{i})$. Then, 
\begin{align*}
    \nabla_{\mathbf{Y}} \mathcal{L} = {\mathbf{A}} + \mathbf{S} - \mathbf{D} = \mathbf{0}. 
\end{align*}
In order to satisfy the first order stationary condition, we must satisfy the (deterministic) condition
\begin{align*}
    \mathbf{S}^{*} = \mathbf{D}^{*} - {\mathbf{A}},  
\end{align*}
 From the KKT conditions, we have the following, 
\begin{align}
     & \mathbf{D}^{*} \mathbf{Y}^{*} = \mathbf{D}^{*} \mathbf{I} \Rightarrow \operatorname{tr}(\mathbf{D}^{*} (\mathbf{Y}^{*} -\mathbf{I})) = 0, \nonumber \\
     &\mathbf{S}^{*} {\bm{\sigma}}^{*} = \mathbf{0} \Rightarrow  \operatorname{tr}(\mathbf{S}^{*} \mathbf{Y}^{*}) = 0 , \label{eqn:null_space_of_S}
\end{align}
where in Eqn. \eqref{eqn:null_space_of_S}, ${\bm{\sigma}}^{*}$ is the null space of $\mathbf{S}^{*}$. In order to ensure that $\mathbf{Y}^{*}$ is the unique solution, we require that $\lambda_{2}(\mathbf{S}^{*}) > 0$, i.e., the second smallest eigenvalue of $\mathbf{S}$ is positive. This comes from the rank-nullity theorem, i.e., 
\begin{align*}
    \text{rank}(\mathbf{S}^{*}) + \text{null}(\mathbf{S}^{*}) = n \Rightarrow \text{rank}(\mathbf{S}^{*}) = n -1. 
\end{align*}
To this end, we have the following,
\begin{align*}
      \operatorname{tr}({\mathbf{A}} \mathbf{Y}) & \leq \mathcal{L}({\mathbf{A}}, \mathbf{Y}^{*}, \mathbf{S}^{*}, \mathbf{D}^{*})  \nonumber \\
     & =  \operatorname{tr}((\mathbf{S}^{*} - \mathbf{D}^{*} + {\mathbf{A}}) \mathbf{Y} ) +  \operatorname{tr}(\mathbf{D}^{*} \mathbf{I})  \nonumber\\
     & =  \operatorname{tr}(\mathbf{D}^{*} \mathbf{I})  = \operatorname{tr}(\mathbf{D}^{*} \mathbf{Y}^{*}) \nonumber \\ 
     & = \operatorname{tr}((\mathbf{S}^{*} + {\mathbf{A}}) \mathbf{Y}^{*}) = \operatorname{tr}({\mathbf{A}} \mathbf{Y}^{*}). 
\end{align*}
Now, our goal is to prove that w.h.p. $\mathbf{S}^{*} \succcurlyeq 0$ with $\lambda_{2}(\mathbf{S}^{*}) > 0$. More specifically, we want to show that 
\begin{align*}
    \operatorname{Pr} \left[\inf_{\mathbf{x}: \|\mathbf{x}\| = 1, \mathbf{x} \perp \bm{\sigma}^{*}} \mathbf{x}^{T} \mathbf{S}^{*} \mathbf{x} > 0 \right] \geq 1 - o(1).
\end{align*}
\noindent Before we proceed, we note that 
\begin{align*}
    \mathds{E}[{\mathbf{A}}] & = (1- 2 \zeta) p (\mathbf{Y}^{*} - \mathbf{I}).
\end{align*}

 Now, for any $\mathbf{x}$ such that $\|\mathbf{x}\| = 1, \mathbf{x} \perp \bm{\sigma}^{*}$ (i.e., $\mathbf{x}^{T} \bm{\sigma}^{*}$ = 0), we proceed with the following set of steps,
   \begin{align}
    & \mathbf{x}^{T} \mathbf{S}^{*} \mathbf{x}   = \mathbf{x}^{T} \mathbf{D}^{*} \mathbf{x} -  \mathbf{x}^{T} \mathds{E}[{\mathbf{A}}] \mathbf{x} - \mathbf{x}^{T} ({\mathbf{A}} - \mathds{E}[{\mathbf{A}}]) \mathbf{x} \nonumber \\ 
      &  \geq  \mathbf{x}^{T} \mathbf{D}^{*} \mathbf{x} - (1- 2 \zeta) p \mathbf{x}^{T} \mathbf{Y}^{*} \mathbf{x} + (1 - 2 \zeta) p -   \|{\mathbf{A}} - \mathds{E}[{\mathbf{A}}] \| \nonumber \\ 
    & \overset{(a)} =   \mathbf{x}^{T} \mathbf{D}^{*} \mathbf{x} + (1 - 2 \zeta) p -   \|{\mathbf{A}} - \mathds{E}[{\mathbf{A}}] \| \nonumber \\ 
      & \geq \min_{i \in [n]} d_{i}^{*} + (1 - 2 \zeta) p -   \|{\mathbf{A}} - \mathds{E}[{\mathbf{A}}] \| \nonumber \\ 
      & \overset{(b)} \geq \min_{i \in [n]} d_{i}^{*} + (1 - 2 \zeta) p -  c' \sqrt{\log(n)}, \label{eq:single-condition}
      \end{align}       
where step (a) holds from the unique optimality condition in \eqref{eqn:null_space_of_S}. Note that $d_{i}$ is equal in distribution to $\sum_{i=1}^{n-1} X_{i}$, where $X_{i} \overset{i.i.d.}\sim p (1-\zeta) \delta_{+1} + p \zeta \delta_{-1} + (1-p) \delta_{0}$. Step (b) follows that $\|{\mathbf{A}} - \mathds{E}[{\mathbf{A}}] \| \leq c' \sqrt{\log(n)}$ with high probability by Lemma \ref{lem:lem-spectral-concentrate} when $p=a\frac{\log n}{n}$. Using Lemma \ref{lemma_tail_bound_censored} and the subsequent application of the union bound yields that with probability at least $1-n^{1-a(\sqrt{1 - {\zeta}}  - \sqrt{{\zeta}})^{2}+o(1)}$, we have 
\[
\min_i d_i^\ast \geq \frac{\log n}{\log \log n},
\]
and $\frac{\log n}{\log \log n} + (1-2\zeta)p - c'\sqrt{a\log n}>0$ for $n$ sufficiently large. Therefore if $a(\sqrt{1 - {\zeta}}  - \sqrt{{\zeta}})^{2}>1$, the model satisfies the exact recovery condition.

\begin{remark}[The Synergy of Combining Multiple Adjacency Matrices]
    In the case when we estimate $\bm\sigma$ via multiple observed adjacancy matrices, i.e., given $\bA_1,\ldots,\bA_w\sim \CBM(\bm\sigma,p,\zeta)$, we have the MLE for $\bm\sigma$ resulted from SDP is
\[
\begin{aligned}
\hat{\mathbf{Y}} = \max_{\mathbf{Y}}   & \ \text{tr}\left(\sum_{t=1}^{w} \bA_{t} \mathbf{Y}\right) \  \text{s.t.} \   \mathbf{Y} \succcurlyeq \mathbf{0}, Y_{i, i} = 1, \forall i \in [n].   
\end{aligned}
\]
The Lagrangian function is similarly written as 
\begin{align*}
    & \mathcal{L}(\bA_1,\ldots,\bA_w, \mathbf{Y}, \mathbf{S}, \mathbf{D}) = \operatorname{tr}\left(\sum_{t=1}^{w} \bA_{t} \mathbf{Y}\right) + \operatorname{tr}(\mathbf{S} \mathbf{Y}) \nonumber \\ 
    & \hspace{120pt} - \operatorname{tr}(\mathbf{D} (\mathbf{Y}-\mathbf{I})),
\end{align*}
where $\mathbf{S} \succcurlyeq \mathbf{0}$ and $\mathbf{D} = \operatorname{diag}(d_{i})$. Then, 
\begin{align*}
    \nabla_{\mathbf{Y}} \mathcal{L} = \sum_{t=1}^{w} \bA_{t} + \mathbf{S} - \mathbf{D} = \mathbf{0}. 
\end{align*}
In order to satisfy the first order stationary condition, we must satisfy the following (deterministic) condition 
\begin{align*}
    \mathbf{S}^{*} = \mathbf{D}^{*} - \sum_{t=1}^{w} \bA_{t}.  
\end{align*}
Similar to the single observation case, we have that when $\bS^\ast \succcurlyeq \mathbf{0}$, $\lambda_2(\bS^\ast)>0$, and $\bS^\ast \bm\sigma^\ast = 0$, 
then $\hat{\mathbf{Y}} = \mathbf{Y}^\ast=\bm\sigma^\ast(\bm\sigma^\ast)^T$ is the unique solution. We let 
\[
\bD^\ast = \text{diag}\{d_i^\ast\}, \quad d_i^\ast = \sum_{t=1}^w\sum_{j=1}^n \bA_{t}(i,j)\sigma_i^\ast\sigma_j^\ast.
\]
By definition we have $d_i^\ast \sigma_i^\ast = \sum_{t=1}^w\sum_{j=1}^n \bA_{t}(i,j)\sigma_j^\ast$, i.e., $\bD^\ast\bm\sigma^\ast =(\sum_{t=1}^{w} \bA_{t})\bm\sigma^\ast$, thus $\bS^\ast \bm\sigma^\ast = 0$.
It remains to show 
\begin{align*}
    \operatorname{Pr} \left[\inf_{\mathbf{x}: \|\mathbf{x}\| = 1, \mathbf{x} \perp \bm{\sigma}^{*}} \mathbf{x}^{T} \mathbf{S}^{*} \mathbf{x} > 0 \right] \geq 1 - o(1).
\end{align*}
\noindent Before we proceed, we note that 
\begin{align*}
    \mathds{E}[\sum_{t=1}^{w} \bA_{t}] & = w(1- 2 \zeta) p (\mathbf{Y}^{*} - \mathbf{I}).
\end{align*}

 Now, for any $\mathbf{x}$ such that $\|\mathbf{x}\| = 1, \mathbf{x} \perp \bm{\sigma}^{*}$ (i.e., $\mathbf{x}^{T} \bm{\sigma}^{*}$ = 0), we proceed similarly as \eqref{eq:single-condition},
\begin{align}
     & \mathbf{x}^{T} \mathbf{S}^{*} \mathbf{x}   = \mathbf{x}^{T} \mathbf{D}^{*} \mathbf{x} -  \mathbf{x}^{T} \mathds{E}[\sum_{t=1}^{w} \bA_{t}] \mathbf{x} \nonumber\\
     & \hspace{50pt} - \mathbf{x}^{T} (\sum_{t=1}^{w} \bA_{t} - \mathds{E}[\sum_{t=1}^{w} \bA_{t}]) \mathbf{x} \nonumber \\ 
      & \geq \min_{i \in [n]} d_{i}^{*} + w (1 - 2 \zeta) p -   \|\sum_{t=1}^{w} \bA_{t} - \mathds{E}[\sum_{t=1}^{w} \bA_{t}] \|. \label{eq:random}
      \end{align}
Note that each $d_{i}^\ast$ is equal in distribution to $\sum_{j=1}^{w(n-1)} X_{j}$, where $X_{j} \overset{i.i.d.}\sim p (1-\zeta) \delta_{+1} + p \zeta \delta_{-1} + (1-p) \delta_{0}$. By Lemma \ref{lemma_tail_bound_censored} and the union bound, we have with probability at least $1-n^{1-wa(\sqrt{1 - {\zeta}}  - \sqrt{{\zeta}})^{2}+o(1)}$, $\min_i d_i^\ast \geq \frac{\log n}{\log \log n}$.
Second, we have 
\[
\|\sum_{t=1}^{w} \bA_{t} - \mathds{E}[\sum_{t=1}^{w} \bA_{t}] \| \leq \sum_{t=1}^{w} \| \bA_{t} - \mathds{E}[\bA_{t}] \|, 
\]
by Lemma \ref{lem:lem-spectral-concentrate} we have $\operatorname{Pr} \{ \| \bA_{t} - \mathds{E}[\bA_{t}] \| \leq c'\sqrt{a\log n} \} \geq 1-n^c$ for $p=a\frac{\log n }{n}$, thus we have with probability at least $1-wn^{-c}$,
\[
\|\sum_{t=1}^{w} \bA_{t} - \mathds{E}[\sum_{t=1}^{w} \bA_{t}] \| \leq wc'\sqrt{a\log n}.
\]

Combing these together, with probability at least $1-n^{1-wa(\sqrt{1 - {\zeta}}  - \sqrt{{\zeta}})^{2}+o(1)} -wn^{-c}$, we have \eqref{eq:random} $\geq \frac{\log n}{\log \log n} + w(1-2\zeta)p - wc'\sqrt{a\log n}>0$ for $n$ sufficiently large and $w=o(\sqrt{\log n})$. And therefore the model satisfies the exact recovery condition if $wa(\sqrt{1 - {\zeta}}  - \sqrt{{\zeta}})^{2}>1$ and $w=o(\sqrt{\log n})$.

\end{remark}

\section{Proofs for Section \ref{sec:conditions}} \label{app:proofA}

\subsection{Proof of Theorem \ref{thm:private_threshold_condition_one_time_instance}}\label{app:proofIV1}
In this subsection, we derive the condition for exact recovery under the privacy mechanism.

\begin{lemma} \label{lemma_tail_bound_censored_with_graph_perturbation} (Chernoff Bound for the Perturbed Graph via Randomized Response) Let $\{X_{i}\}_{i=1}^{m}$ be a collection of independent random variables, where $m = n + o(n)$ and each $X_{i} \overset{i.i.d.} \sim \tilde{p}_{+1} \delta_{+1} + \tilde{p}_{-1}  \delta_{-1} + \tilde{p}_{0} \delta_{0}$. Then, 
\begin{align*}
\operatorname{Pr} \bigg(\sum_{i=1}^{m} X_{i} \leq  k_{n} \bigg) \leq \exp(- m \ell(k_{n}/m)),
\end{align*}
where $k_{n} = (1 + o(1)) \frac{\log(n)}{\log(\log(n))}$ and $\ell(k_{n}/m) = \sup_{t \geq 0} - tx - \log \mathds{E} \left[ e^{- t X_{1}} \right] $.
\end{lemma}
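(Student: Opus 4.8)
The plan is to apply the classical Cramér--Chernoff (exponential Markov) method for the \emph{lower} tail of a sum of i.i.d.\ random variables. First I would fix an arbitrary $t \geq 0$ and use that $u \mapsto e^{-tu}$ is nonincreasing, so that the event $\{\sum_i X_i \leq k_n\}$ is contained in $\{e^{-t\sum_i X_i} \geq e^{-t k_n}\}$. Applying Markov's inequality to the nonnegative random variable $e^{-t\sum_i X_i}$ then yields
\[
\operatorname{Pr}\bigg(\sum_{i=1}^m X_i \leq k_n\bigg) \leq e^{t k_n}\, \mathds{E}\bigg[e^{-t\sum_{i=1}^m X_i}\bigg].
\]

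Next I would exploit independence and identical distribution of the $X_i$, so that the expectation of the product factorizes,
\[
\mathds{E}\bigg[e^{-t\sum_{i=1}^m X_i}\bigg] = \prod_{i=1}^m \mathds{E}\big[e^{-t X_i}\big] = \big(\mathds{E}[e^{-t X_1}]\big)^m.
\]
Combining the two displays and writing $x = k_n/m$, the bound becomes $\exp\!\big(t k_n + m\log \mathds{E}[e^{-t X_1}]\big) = \exp\!\big({-m}(-tx - \log \mathds{E}[e^{-t X_1}])\big)$. Since $t \geq 0$ was arbitrary, I would take the infimum of the right-hand side over $t \geq 0$, equivalently the supremum in the exponent, to obtain
\[
\operatorname{Pr}\bigg(\sum_{i=1}^m X_i \leq k_n\bigg) \leq \exp\bigg({-m}\,\sup_{t \geq 0}\big(-tx - \log \mathds{E}[e^{-t X_1}]\big)\bigg) = \exp\big({-m}\,\ell(k_n/m)\big),
\]
which is precisely the claimed bound.

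I do not expect a genuine obstacle here: the statement is the textbook lower-tail Chernoff inequality, and its proof is a direct exponential-moment argument that uses neither the specific ternary form $\tilde p_{+1}\delta_{+1} + \tilde p_{-1}\delta_{-1} + \tilde p_0\delta_0$ of the perturbed edge distribution nor the scaling $m = n + o(n)$. Those ingredients enter only \emph{afterward}, when this generic bound is specialized in the proof of Theorem~\ref{thm:private_threshold_condition_one_time_instance}: one substitutes the explicit log moment generating function $\log \mathds{E}[e^{-t X_1}] = \log\!\big(1 + \tilde p_{+1}(e^{-t}-1) + \tilde p_{-1}(e^{t}-1)\big)$, solves the stationarity condition to identify the optimizing $t^{*} = \tfrac12 \log\frac{1-\tilde\zeta}{\tilde\zeta} + o(1)$, and Taylor-expands around $x = o(1)$ to read off the exponent in terms of $\tilde p(\sqrt{1-\tilde\zeta} - \sqrt{\tilde\zeta})^2$. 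The only mild care the proof itself demands is the sign convention: because we are controlling a lower tail, the free parameter $t$ must be restricted to $t \geq 0$, which is exactly why the supremum defining $\ell$ is taken over $t \geq 0$ rather than over all of $\mathbb{R}$.
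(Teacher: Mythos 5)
Your proof is correct and matches the paper's treatment: the lemma is the standard lower-tail Cram\'er--Chernoff bound, and the paper invokes it without spelling out the Markov-inequality and factorization steps, proceeding directly to the computation of $\mathds{E}[e^{-tX_1}]$ and the optimizing $t^{*}$ exactly as you describe in your closing paragraph. You also correctly read the implicit substitution $x = k_n/m$ in the definition of $\ell$ and the reason the supremum is restricted to $t \geq 0$.
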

 The term $\mathds{E} \left[ e^{- t X_{1}} \right]$ is obtained as
\begin{align}
    \mathds{E} \left[ e^{- t X_{1}} \right] & = 1 + \tilde{p} \left[ e^{-t} (1 - \tilde{\zeta})  + e^{t} \tilde{\zeta} -1 \right]. \label{eqn:moment_generating_function}
\end{align}
Note that the above equation is a concave function of the parameter $t$. The optimum $t^{*}$ that maximizes \eqref{eqn:moment_generating_function} is given as follows:
\begin{align*}
    - x +  \frac{\tilde{p} \left[e^{-t^{*}} (1 - \tilde{\zeta}) - e^{t^{*}} \tilde{\zeta} \right]}{1 + \tilde{p} \left[ e^{-t^{*}} (1 - \tilde{\zeta}) + e^{t^{*}} \tilde{\zeta} -1 \right]} = 0.
\end{align*}

For $x = {k_{n}}/{m}$, the optimal parameter $t^{*}$ for the Chernoff bound is obtained as
\begin{align*}
    t^{*} & = \frac{1}{2} \log \frac{1 - \tilde{\zeta}}{\tilde{\zeta}} + o(1).
\end{align*}
Plugging the value of $t^{*}$, it gives us the following: 
\begin{align*}
   & \ell(k_{n}/m)  = - t^{*} \frac{k_{n}}{m} - \log     \mathds{E} \left[ e^{- t^{*} X_{1}} \right] \nonumber \\ 
        &  = - \frac{k_{n}}{2 m}  \log \frac{1 - \tilde{\zeta}}{\tilde{\zeta}} - \log \left[ 1 - \tilde{p} \bigg(\sqrt{1 - \tilde{\zeta}}  - \sqrt{\tilde{\zeta}}\bigg)^{2} \right] \\
        & \quad + o(k_{n}/m) \nonumber \\ 
 & \overset{(a)} = \tilde{p} \bigg(\sqrt{1 - \tilde{\zeta}}  - \sqrt{\tilde{\zeta}}\bigg)^{2} + o(\log(n)/n), \nonumber
\end{align*}
where step $(a)$ follows due to the Taylor expansion of $\log (1 - x)$ around $x = 0$, and $p = a \log(n)/n$ where the first term in step (a) behaves as $o(\log(n)/n)$. Substitute $\tilde p = \frac{2 + p(e^\epsilon-1)}{e^\epsilon+2}$ and $\tilde\zeta =  \frac{1+p\zeta(e^\epsilon-1)}{2 + p(e^\epsilon-1)}$, we obtain
\begin{align*}
   & \ell(k_{n}/m)  \geq \left[ \frac{e^{\epsilon} - 1}{e^{\epsilon} + 1} \times a +  \frac{2}{e^{\epsilon}+2} \times \frac{n}{\log(n)} \right] 
 \nonumber \\ 
 & \hspace{0.5in} \times \frac{p (e^{\epsilon}-1)}{2 + p (e^{\epsilon}-1)} \times \bigg(\sqrt{1-\zeta} - \sqrt{\zeta} \bigg)^{2} \log(n)/n \nonumber \\
 & \hspace{0.5in}  + o(\log(n)/n) \nonumber \\ 
    & \overset{(b)} \geq \left[ \frac{e^{\epsilon} - 1}{e^{\epsilon} + 1} \times a +  \frac{2}{e^{\epsilon}+2} \times \frac{n}{\log(n)} \right] 
 \nonumber \\ 
 & \hspace{0.2in} \times \alpha \times \bigg(\sqrt{1-\zeta} - \sqrt{\zeta} \bigg)^{2} \log(n)/n  + o(\log(n)/n), 
\end{align*}
where step (b) holds true whenever $a > \frac{2 (n^{3/2} - n)}{(n^{c} - 1) \log(n)}$, where $\alpha = 1 - \frac{1}{\sqrt{n}}$, $\epsilon = c \log(n)$ and $c > 0$. After applying the union bound, we get the following sufficient conditions for exact recovery is: 
\begin{align}
    & a \big(\sqrt{1-\zeta} - \sqrt{\zeta} \big)^{2} > \frac{e^{\epsilon} + 1}{e^{\epsilon} - 1} \nonumber \\
   & \hspace{0.3in} \times \bigg[ 1/\alpha - \frac{2}{e^{\epsilon} + 2} \times \frac{n}{\log(n)} \times \big(\sqrt{1-\zeta} - \sqrt{\zeta} \big)^{2}  \bigg].
\end{align}
From the above condition, we conclude that the privacy leakage $\epsilon$ should behave as $\Omega(\log(n))$. A more stringent condition yields 
\begin{align*}
    a \big(\sqrt{1-\zeta} - \sqrt{\zeta} \big)^{2} > \frac{1}{\alpha} \times \left(\frac{e^{\epsilon} + 1}{e^{\epsilon} - 1} \right), 
\end{align*}
thereby we prove \eqref{eq:recov_condi} in Theorem \ref{thm:private_threshold_condition_one_time_instance}.

\subsection{Proof of Theorem \ref{thm:converse}} \label{app:proofIV4}

The proof technique is inspired by the packing argument \cite{vadhan2017complexity} in which we provide a lower bound for estimation error under $\epsilon$-edge DP.

We establish a rigorous lower bound for all differentially private community recovery algorithms operating on graphs generated from the Censored Block Models (CBMs). Our methodology closely follows the frameworks outlined in \cite{vadhan2017complexity, chen2023private}, focusing on the notion of edge DP. Furthermore, we precisely define the classification error rate as
\begin{align}
  & {\operatorname{err}} \operatorname{rate} (\hat{\bm{\sigma}}(\mathbf{A}), {\bm{\sigma}^{*}}) \nonumber \\ 
  & \hspace{0.1in} =  \frac{1}{n} \times \min \{\operatorname{Ham}(\hat{\bm{\sigma}}(\mathbf{A}), {\bm{\sigma}^{*}}), \operatorname{Ham}(- \hat{\bm{\sigma}}(\mathbf{A}), {\bm{\sigma}^{*}}) \}.
\end{align}



Let us consider a series of pairwise disjoint sets \(\mathcal{S}_{i}, i\in[m]\). Each set \(\mathcal{S}_{i}\) contains vectors \(\bm{u} \in \{\pm 1\}^n\), where \(n\) is the vector dimension. A vector \(\bm{u}\) is included in \(\mathcal{S}_{i}\) if the error rate 
\(\operatorname{err}\operatorname{rate}(\bm{u}, \bm{\sigma}^{i})\) with a fixed vector \(\bm{\sigma}^{i}\) does not exceed the threshold \(\beta\). This is formally expressed as: 
\begin{align}
\mathcal{S}_{i} = \{ \bm{u} \in \{\pm 1\}^n : \operatorname{err}\operatorname{rate}(\bm{u}, \bm{\sigma}^{i} ) \leq \beta \}, \quad i = 1, 2, \ldots, m,
\end{align}
where $\mathcal{S}_{i}$'s are pairwise disjoint sets.

We next derive the necessary conditions for 
\begin{align}
    {\operatorname{Pr}}(\hat{\bm{\sigma}}(\mathbf{A}) \in \mathcal{S}_{i}) \geq 1 - \eta \label{eqn:utility_condition_recovery}
\end{align}
as a function of the CBM parameters and the privacy budget. Note that the randomness here is taken over the randomness of graph $G$ that is generated from $\operatorname{CBM} (\bm\sigma^{i},p,\zeta)$. Without loss of generality, let us consider a graph $\mathbf{A} \sim \operatorname{CBM} (\bm\sigma^{1},p,\zeta)$ that is generated from the ground truth labeling vector $\bm{\sigma}^{*} = \bm{\sigma}^{1}$. Further, for this case, we want to derive the necessary conditions for any $\epsilon$-edge DP recovery algorithms that
\begin{align}
        & {\operatorname{Pr}}(\hat{\bm{\sigma}}(\mathbf{A}) \in \mathcal{S}_{1}) \geq 1 - \eta 
        \Rightarrow  \sum_{i = 2}^{m}   {\operatorname{Pr}}(\hat{\bm{\sigma}}(\mathbf{A}) \in \mathcal{S}_{i}) \leq \eta, \label{eqn:equation_necessary_conditions}
\end{align}
where $\mathbf{A} \sim \operatorname{CBM} (\bm\sigma^{1},p,\zeta)$.

We next individually lower bound each term in Eqn. \eqref{eqn:equation_necessary_conditions}. To do so, we first invoke the group privacy property of DP \cite{dwork2014algorithmic} and show that for any two adjacency matrices $\mathbf{A}$ and $\mathbf{A}'$, we have
\begin{align}
    \operatorname{Pr}(\hat{\bm{\sigma}}(\mathbf{A}) \in \mathcal{S}) \leq e^{\epsilon \operatorname{Ham}(\mathbf{A}, \mathbf{A}')}     \operatorname{Pr}(\hat{\bm{\sigma}}(\mathbf{A'}) \in \mathcal{S}), 
\end{align}
for any measurable set $\mathcal{S} \subseteq \{\pm 1\}^{n}$. For each $i = 1,2, \cdots, m$, taking the expectation with respect to the coupling distribution $\Pi(\mathbf{A}, \mathbf{A}')$ between $\mathbf{A}$ and $\mathbf{A}'$ and setting $\mathcal{S} = \mathcal{S}_{i}$, yields the following:
\begin{align}
     & \mathds{E}_{\mathbf{A}, \mathbf{A}' \sim \Pi(\mathbf{A}, \mathbf{A}')} \left[ \operatorname{Pr}(\hat{\bm{\sigma}}(\mathbf{A}) \in \mathcal{S}_{i}) \right] \nonumber \\
     & \hspace{0.2in} \leq \mathds{E}_{\mathbf{A}, \mathbf{A}' \sim \Pi(\mathbf{A}, \mathbf{A}')} \left[ e^{\epsilon \operatorname{Ham}(\mathbf{A}, \mathbf{A}')}     \operatorname{Pr}(\hat{\bm{\sigma}}(\mathbf{A'}) \in \mathcal{S}_{i})\right] \nonumber \\
     & \Rightarrow  \operatorname{Pr}(\hat{\bm{\sigma}}(\mathbf{A}) \in \mathcal{S}_{i})  \nonumber \\
     & \hspace{0.2in} \leq \mathds{E}_{\mathbf{A}, \mathbf{A}' \sim \Pi(\mathbf{A}, \mathbf{A}')} \left[ e^{\epsilon \operatorname{Ham}(\mathbf{A}, \mathbf{A}')}     \operatorname{Pr}(\hat{\bm{\sigma}}(\mathbf{A'}) \in \mathcal{S}_{i})\right] \nonumber \\ 
 & \overset{(a)} \leq \left( \mathds{E}_{\mathbf{A}, \mathbf{A}' \sim \Pi(\mathbf{A}, \mathbf{A}')} \left[ e^{2 \epsilon \operatorname{Ham}(\mathbf{A}, \mathbf{A}')} \right] \right)^{1/2} \nonumber \\ 
 & \hspace{1in} \times \left(\mathds{E}_{\mathbf{A}, \mathbf{A}' \sim \Pi(\mathbf{A}, \mathbf{A}')} \left[ \operatorname{Pr}^{2}(\hat{\bm{\sigma}}(\mathbf{A'}) \in \mathcal{S}_{i}) \right] \right)^{1/2} \nonumber \\ 
  & \leq \left( \mathds{E}_{\mathbf{A}, \mathbf{A}' \sim \Pi(\mathbf{A}, \mathbf{A}')} \left[ e^{2 \epsilon \operatorname{Ham}(\mathbf{A}, \mathbf{A}')} \right] \right)^{1/2} \nonumber \\
  & \hspace{1in}\times \left( \operatorname{Pr}(\hat{\bm{\sigma}}(\mathbf{A'}) \in \mathcal{S}_{i}) \right)^{1/2}, \nonumber \\ 
    &\Rightarrow  1-\eta  \overset{(b)} \leq \left( \mathds{E}_{\mathbf{A}, \mathbf{A}' \sim \Pi(\mathbf{A}, \mathbf{A}')} \left[ e^{2 \epsilon \operatorname{Ham}(\mathbf{A}, \mathbf{A}')} \right] \right)^{1/2} \nonumber \\
    & \hspace{1in} \times \left( \operatorname{Pr}(\hat{\bm{\sigma}}(\mathbf{A'}) \in \mathcal{S}_{i}) \right)^{1/2}, \nonumber \\ 
     & \Rightarrow  (1-\eta)^{2}  \leq \left( \mathds{E}_{\mathbf{A}, \mathbf{A}' \sim \Pi(\mathbf{A}, \mathbf{A}')} \left[ e^{2 \epsilon \operatorname{Ham}(\mathbf{A}, \mathbf{A}')} \right] \right) \nonumber \\
     & \hspace{1in} \times  \operatorname{Pr}(\hat{\bm{\sigma}}(\mathbf{A'}) \in \mathcal{S}_{i}) , 
  \label{eqn:privacy_utility_condition}
\end{align}
where step (a) follows from applying Cauchy-Schwartz inequality. In step (b), we invoked the condition in Eqn. \eqref{eqn:utility_condition_recovery}.


We first note that the Hamming distance between two labeling vectors $\bm{\sigma}$ and $\bm{\sigma}'$ (each of size $n$) directly determines how many rows in the adjacency matrices $\mathbf{A}$ and $\mathbf{A}'$ are generated from the same versus different distributions. More precisely, we have two cases: 
case $(1)$: $\operatorname{Ham}(\bm{\sigma}, \bm{\sigma}')$ rows in $\mathbf{A}$ and $\mathbf{A}'$ have elements generated from different distributions, and case $(2)$: $n - \operatorname{Ham}(\bm{\sigma}, \bm{\sigma}')$ rows have elements from the same distribution. \\

\indent Case $(1)$: In this case, the probability the corresponding elements in the two matrices $\mathbf{A}$ and $\mathbf{A}'$ are different is $\bar{q} =1 - ((1-\zeta) \zeta p^{2} + (1-\zeta) \zeta p^{2} + (1-p)^{2})$. \\ 

\indent Case $(2)$:  In this case, the probability the corresponding elements in the two matrices $\mathbf{A}$ and $\mathbf{A}'$ are same is  $\bar{p} =1 - ((1-\zeta)^{2} p^{2} + \zeta^{2} p^{2} + (1-p)^{2})$.

We next focus in calculating the term $M_{\operatorname{Ham}(\mathbf{A}, \mathbf{A}')}(2 \epsilon) \triangleq \mathds{E}_{\mathbf{A}, \mathbf{A}' \sim \Pi(\mathbf{A}, \mathbf{A}')} \left[ e^{2 \epsilon \operatorname{Ham}(\mathbf{A}, \mathbf{A}')} \right] $ with the following set of steps: 
\begin{align}
    M_{\operatorname{Ham}(\mathbf{A}, \mathbf{A}')}(2 \epsilon) & = \left(M_{\operatorname{Ham}(\mathbf{A}, \mathbf{A}'): \text{same dist.}}(2 \epsilon) \right)^{(n - \operatorname{Ham}(\bm{\sigma}, \bm{\sigma}'))}  \nonumber \\ 
    & \hspace{0.1in} \times  \left(M_{\operatorname{Ham}(\mathbf{A}, \mathbf{A}'): \text{different dist.}}(2 \epsilon) \right)^{\operatorname{Ham}(\bm{\sigma}, \bm{\sigma}')}, \label{eqn:MGF_expression}
\end{align}
where, 
\begin{align}
    M_{\operatorname{Ham}(\mathbf{A}, \mathbf{A}'): \text{same dist.}}(2 \epsilon) & = e^{2 \epsilon} \bar{p} + (1 - \bar{p}), \\
    M_{\operatorname{Ham}(\mathbf{A}, \mathbf{A}'): \text{different dist.}}(2 \epsilon) & =  e^{2 \epsilon} \bar{q} + (1 - \bar{q}).
\end{align}
We then can readily show that,
\begin{align}
    & M_{\operatorname{Ham}(\mathbf{A}, \mathbf{A}')}(2 \epsilon)  \nonumber\\
    & \leq   \left(M_{\operatorname{Ham}(\mathbf{A}, \mathbf{A}'): \text{same dist.}}(2 \epsilon) \right)^{(n - \operatorname{Ham}(\bm{\sigma}, \bm{\sigma}')) \cdot \operatorname{Ham}(\bm{\sigma}, \bm{\sigma}')} \nonumber \\ 
    & = (1 - p' + p' e^{2\epsilon})^{(n - \operatorname{Ham}(\bm{\sigma}, \bm{\sigma}')) \cdot \operatorname{Ham}(\bm{\sigma}, \bm{\sigma}')}, 
    \label{eqn:MGF_expression_upper_bound}
\end{align}
where $p' = 2p^2\zeta(\zeta - 1) - (p - 1)^2 + 1$. Plugging \eqref{eqn:MGF_expression_upper_bound} in \eqref{eqn:privacy_utility_condition} yields the following:
\begin{align}
    \operatorname{Pr}(\hat{\bm{\sigma}}(\mathbf{A}') \in \mathcal{S}_{i}) & \geq \frac{(1-\eta)^{2}}{ M_{\operatorname{Ham}(\mathbf{A}, \mathbf{A}')}(2 \epsilon)}.
\end{align}



Finally, we lower bound the packing number $m$ with respect to $\operatorname{err} \operatorname{rate}$. 
Building upon the framework established in \cite{chen2023private}, we can readily demonstrate that
\begin{align}
    m \geq \frac{1}{2} \cdot \frac{| \mathcal{B}_{\operatorname{Ham}} (\bm{\sigma}^{*}, 4\beta n) |}{| \mathcal{B}_{\operatorname{Ham}} (\bm{\sigma}^{*}, 2 \beta n) |},
\end{align}
where $\mathcal{B}_{\operatorname{Ham}} (\bm{\sigma}^{*}, t \beta n) = \{\bm{\sigma} \in \{\pm 1\}^{n}: \operatorname{Ham}(\bm{\sigma}, \bm{\sigma}^{*}) \leq \beta \} $ and $| \mathcal{B}_{\operatorname{Ham}} (\bm{\sigma}^{*}, t \beta n) |$ is the number of vectors within the Hamming distance of $t \beta n$ from $\bm{\sigma}^{*}$ for $t > 0$. It includes all vectors that can be obtained by flipping any $t \beta n$ elements of $\bm{\sigma}^{*}$. Thus, we can further lower bound $m$ as 
\begin{align}
    m \geq \frac{1}{2} \cdot \frac{{n \choose 4 \beta n} }{{n \choose 2 \beta n}} \geq \frac{1}{2} \cdot \left(\frac{1}{8 e \beta} \right)^{2 \beta n}.
\end{align}
Recall that, we have
\begin{align}
   (m-1) \cdot \frac{(1-\eta)^{2}}{ M_{\operatorname{Ham}(\mathbf{A}, \mathbf{A}')}(2 \epsilon)} \leq \eta. \label{eqn:lower_bound_bad_events}
\end{align}
Taking the logarithm for both sides of \eqref{eqn:lower_bound_bad_events}, it yields
\begin{align}
   & 8 \beta n (n - 8 \beta n) \log(1 - p' + p' e^{2 \epsilon})  \geq 2 \beta n \log \left(\frac{1}{8 e \beta} \right) + \log \left( \frac{1}{\eta}\right) \nonumber \\ 
   \Rightarrow & \log(1 - p' + p' e^{2 \epsilon})  \geq \frac{  \log \left(\frac{1}{8 e \beta} \right)}{4  (n - 8 \beta n)} + \frac{\log \left( \frac{1}{\eta}\right)}{8 \beta n (n - 8 \beta n) } \nonumber \\ 
     \Rightarrow & \left(e^{2\epsilon} -1 \right) p' \geq \frac{  \log \left(\frac{1}{8 e \beta} \right)}{4  (n - 8 \beta n)} + \frac{\log \left( \frac{1}{\eta}\right)}{8 \beta n (n - 8 \beta n) } \nonumber \\ 
      \overset{(a)}\Rightarrow & \left(e^{2\epsilon} -1 \right) p' \geq  \frac{(\Delta + 1) \log(n) - \log(8 e)}{4 (n-8)} \nonumber \\ 
          \Rightarrow & \epsilon \geq   \frac{1}{2} \log \left[ 1 + \frac{(\Delta + 1) \log(n) - \log(8 e)}{ p'  (4n-32)} \right],
\end{align}
where $p' =   2 a \times \frac{\log(n)}{n} + a^{2} (2 \zeta^{2} - 2 \zeta -1) \times \frac{\log^{2}(n)}{n^{2}} = 2 a \frac{\log(n)}{n} + \mathcal{O} \left(\frac{\log^{2}(n)}{n^{2}}\right) $ . In step (a), we set $\beta = n^{-1}$ and $\eta = n^{- \Delta}$, where $\Delta > 0$.





\section{Proofs for Section \ref{sec:detection_theory}}\label{sec:detection_analysis}

We begin by considering the average run length, which characterizes the average false alarm period when there is no change in the community structure. For notational simplicity we denote $\ell(\bA; \bm{\sigma}, \bm{\sigma}'):=\log\frac{\Pr(\bA;\bm{\sigma}')}{\Pr(\bA;\bm{\sigma})}$ as the log-likelihood ratio between two CBMs $\CBM(\bm{\sigma}',p,\zeta)$ and $\CBM(\bm{\sigma},p,\zeta)$, and denote $\tilde\ell(\tilde\bA; \bm{\sigma}, \bm{\sigma}'):=\log\frac{\Pr(\tilde\bA;\bm{\sigma}')}{\Pr(\tilde \bA;\bm{\sigma})}$ as the log-likelihood ratio between two CBMs $\CBM(\bm{\sigma}',\tilde p,\tilde \zeta)$ and $\CBM(\bm{\sigma},\tilde p,\tilde \zeta)$ after graph perturbation.

\subsection{Proof of Lemma \ref{lem:arl}}\label{app:prooflemIV1}

    We compute the average false alarm period using similar ideas as in Lemma 8.2.1 in \cite{tartakovsky2014sequential}.
Let us define a Shiryaev-Roberts like statistic $\{R_t\}$ as
$$
R_t=(R_{t-1}+1)e^{\tilde\ell(\tilde\bA_t; \bm{\sigma}^{\text{pre}}, \bm{\hat\sigma}_{t-1})},  t >1, \quad ~R_1=0.
$$
It is important to note that $\{R_t-t\}$ is a martingale with respect to the $\bE_\infty$ measure since
\begin{align*}
\bE_\infty[R_t-t|\calF_{t-1}]& =\bE_\infty\left[(R_{t-1}+1)e^{\tilde\ell(\tilde\bA_t; \bm{\sigma}^{\text{pre}}, \bm{\hat\sigma}_{t-1})}-t\Big|\calF_{t-1}\right] \nonumber \\ 
& =R_{t-1}-(t-1).
\end{align*}
The last equality is true because given $\calF_{t-1}$ we have $\bm{\hat\sigma}_{t-1}$ fixed and is independent of the observation $\bA_t$. Since $\tilde\ell(\tilde\bA_t; \bm{\sigma}^{\text{pre}}, \bm{\hat\sigma}_{t-1})$ is a valid log density ratio for $\tilde\bA_t$, we have that its expectation under $\bE_\infty$ regime equals to one. 
The martingale property of $\{R_t-t\}$ and usage of Optional Sampling allows us to write for any stopping time $T$ with finite expectation that
\begin{equation}
\bE_\infty[R_T-T]=\bE_\infty[R_1-1]=-1~\Rightarrow~\bE_\infty[T]-1=\bE_\infty[R_T].
\label{eq:A1}
\end{equation}

Recall that the detection statistics updated in \eqref{eq:stat-wlcusum}, after exponentiation, can be equivalently written as
$$
e^{S_t}=\max\{e^{S_{t-1}},1\}e^{\tilde\ell(\tilde\bA_t; \bm{\sigma}^{\text{pre}}, \bm{\hat\sigma}_{t-1})},~e^{S_1}=1.
$$
Using induction, the fact that $e^{S_{2}}=R_{2}$ and that for $x\geq0$,  $x+1\geq\max\{x,1\}$, it is straightforward to prove that for $t>1$ we have $R_t\geq e^{S_t}$. 
With this observation and using \eqref{eq:A1} we can now write
$$
e^b\leq \bE_\infty[e^{S_{T_{\rm{L}}}}]\leq\bE_\infty[R_{T_{\rm{L}}}]=\bE_\infty[T_{\rm{L}}]-1\leq\bE_\infty[T_{\rm{L}}],
$$
which proves the desired inequality.

\subsection{Proof of Lemma \ref{lem:arl2}}\label{app:prooflemIV2}
Similar to the proof of Lemma \ref{lem:arl}, we again define the Shiryaev-Roberts like statistic $\{R_t\}$. $\{R_t-t\}$ is again a martingale, thus $\bE_\infty[T]-1=\bE_\infty[R_T]$ for the stopping time $T$.

Denote $L_t\overset{iid}{\sim}\rm{Lap}(\frac{4C}{\epsilon})$, $t\geq 1$. Then the detection statistics updated in \eqref{eq:stat-wlcusum2}, after exponentiation, can be equivalently written as
$$
e^{\tilde S_t}=e^{\mathcal S_t}e^{L_t} \leq  R_t e^{L_t}, t>1,
$$
where the inequality is due to $R_t\geq e^{\mathcal S_t}$ according to the proof of Lemma \ref{lem:arl}. 

Write the randomized threshold as $\tilde b = b + L_0$ for $L_0 \sim \rm{Lap}(\frac{2C}{\epsilon})$, then using \eqref{eq:A1} we can now write
$$
e^b\cdot \mathbb E[e^{L_0}]\leq \bE_\infty[e^{\tilde S_T}]\leq\bE_\infty[R_{T}]\bE_\infty[e^{L_T}]\leq \bE_\infty[T]\bE_\infty[e^{L_T}].
$$
Note that for $L\sim\rm{Lap}(0,\beta)$, we have $\mathbb E[e^L] = \frac{1}{1-\beta^2}$ when $\beta<1$. Thus we have when $\epsilon > 4C$, 
\[
\bE_\infty[T] \geq \frac{\mathbb E[e^{L_0}]}{\bE[e^{L_T}]} e^b  = \frac{1-(4C/\epsilon)^2}{1-(2C/\epsilon)^2} e^b,
\]
which proves the desired inequality.




\subsection{Proof of Theorem \ref{th:upper_wadd}}\label{app:proofIV5}

Then we provide theoretical guarantees on the detection delay of the procedure \eqref{eq:stat-wlcusum} and \eqref{eq:stat-wlcusum2}. Recall $I_0=\frac12 (\log\frac{1-\zeta}{\zeta}) p(1-2\zeta)({n\choose 2} - \sum_{i<j}\bm{\sigma}^\pre_i\bm{\sigma}^\pre_j\bm{\sigma}_i^\post\bm{\sigma}_j^\post)=\bE_1[\ell(\bA;\bm{\sigma}^\pre,\bm{\sigma}^\post)]$ is the KL divergence for original CMBs, and $\tilde I_0=\frac12 (\log\frac{1-\tilde\zeta}{\tilde\zeta}) \tilde p(1-2\tilde\zeta)({n\choose 2} - \sum_{i<j}\bm{\sigma}^\pre_i\bm{\sigma}^\pre_j\bm{\sigma}_i^\post\bm{\sigma}_j^\post)=\bE_1[\ell(\tilde \bA;\bm{\sigma}^\pre,\bm{\sigma}^\post)]$ is the KL information number for the distribution after graph perturbation. 
Here $\bE_1$ represents the expectation taken under the post-change regime (i.e., change-point $\nu=1$).


We first present the well-known information-theoretic lower bound on the detection delay for online change detection.

\begin{theorem}[Information-Theoretic Lower Bound on WADD \cite{tartakovsky2014sequential}]\label{thm:delay_lower} A lower bound on the WADD of any test that satisfies the average run length constraint $\mathbb{E}_\infty\left[\tau\right] \geq \gamma$ is
\[
\WADD(\tau) \geq \frac{\log\gamma}{I}(1+o(1)),
\]
as $\gamma\to\infty$, where $I$ is the KL divergence of the post- and pre-change data distributions.
\end{theorem}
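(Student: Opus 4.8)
The plan is to prove this classical bound by a direct change-of-measure argument in the spirit of Lorden and Lai, treating $I=\KL(p_0\|p_\infty)$ as the per-sample information. Fix an arbitrary $\theta\in(0,1)$ and set the window length $m=m_\gamma=\lceil (1-\theta)(\log\gamma)/I\rceil$. Since $\WADD(\tau)\ge \esssup\,\mathbb{E}_\nu[(\tau-\nu+1)^+\mid\mathcal F_{\nu-1}]$ for every change-point $\nu$, and $(\tau-\nu+1)^+\ge (m-1)\mathds{1}\{\tau\ge \nu+m-1\}$, it suffices to exhibit a change-point $\nu$ and a positive-probability set of pre-change histories on which $\mathbb{P}_\nu(\tau<\nu+m\mid\mathcal F_{\nu-1})\to 0$ as $\gamma\to\infty$. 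This would give $\WADD(\tau)\ge (m-1)(1-o(1))=(1-\theta)(\log\gamma)/I\,(1+o(1))$, and letting $\theta\downarrow 0$ yields the claim.

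\textbf{Core estimate (change of measure and SLLN).} Write $Z_i=\log\{p_0(X_i)/p_\infty(X_i)\}$ and $L_k=\sum_{i=\nu}^{k}Z_i$, so the Radon--Nikodym derivative of $\mathbb{P}_\nu$ with respect to $\mathbb{P}_\infty$ on the post-change observations is $e^{L_k}$. Introduce the event $G=\{\max_{\nu\le k<\nu+m}L_k\le (1+\theta')mI\}$ for a small $\theta'>0$. Because the $Z_i$ are i.i.d. with $\mathbb{E}_\nu[Z_i]=I$ for $i\ge\nu$ and are independent of $\mathcal F_{\nu-1}$, the strong law together with a maximal bound over the window gives $\mathbb{P}_\nu(G^c\mid\mathcal F_{\nu-1})\to 0$. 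Wald's likelihood-ratio identity then yields
\[
\mathbb{P}_\nu(\nu\le\tau<\nu+m\mid\mathcal F_{\nu-1})\le e^{(1+\theta')mI}\,\mathbb{P}_\infty(\nu\le\tau<\nu+m\mid\mathcal F_{\nu-1})+\mathbb{P}_\nu(G^c\mid\mathcal F_{\nu-1}),
\]
where on $G$ the factor $e^{L_\tau}\le e^{(1+\theta')mI}=\gamma^{(1+\theta')(1-\theta)}$, and $\theta'$ is chosen so that $(1+\theta')(1-\theta)<1$.

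\textbf{Converting the ARL constraint.} It remains to control the pre-change windowed false-alarm probability. Summing over change-points and using $\sum_{\nu\ge1}\mathds{1}\{\nu\le\tau<\nu+m\}=\min(\tau,m)$ gives $\sum_{\nu\ge1}\mathbb{P}_\infty(\nu\le\tau<\nu+m)=\mathbb{E}_\infty[\min(\tau,m)]\le m$; averaging over $\nu\in\{1,\dots,\gamma\}$ then produces a change-point $\nu^\ast\le\gamma$ with $\mathbb{P}_\infty(\nu^\ast\le\tau<\nu^\ast+m)\le m/\gamma$. A conditional Markov step isolates a positive-probability set of histories on which $\mathbb{P}_\infty(\nu^\ast\le\tau<\nu^\ast+m\mid\mathcal F_{\nu^\ast-1})$ is small; placing the change at $\nu^\ast$ and substituting into the display above makes its right-hand side vanish, which is what we needed.

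The hard part is this last reduction: turning the \emph{average} false-alarm budget implied by $\mathbb{E}_\infty[\tau]\ge\gamma$ into a bound that is small for a \emph{worst-case} history (the $\esssup$ in $\WADD$) without losing the sharp constant, since a naive conditional Markov bound degrades $m/\gamma$ into $\sqrt{m/\gamma}$ and only recovers the delay up to a multiplicative factor. Obtaining the exact constant $1$ requires the refined measure-change argument over the entire class $\{\tau:\mathbb{E}_\infty[\tau]\ge\gamma\}$ developed in \cite{tartakovsky2014sequential} (cf.\ Lai's information bound), which is precisely where the delicacy of the average-run-length constraint, as opposed to a false-alarm-probability constraint, resides. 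All remaining ingredients (the SLLN with a maximal bound for $G^c$ and Wald's identity) are standard.
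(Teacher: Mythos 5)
Your skeleton---a change of measure with a truncation event $G$, a maximal SLLN bound for $G^c$, and a conversion of the ARL constraint into a windowed false-alarm bound at some change point---is exactly the classical Lorden/Lai argument behind this theorem. (The paper itself does not reprove the theorem; it cites \cite{tartakovsky2014sequential}, but it runs precisely this argument, following \cite{lai1998information}, in its proof of Proposition~\ref{prop-delay-lower-cdp}.) However, your proposal is not a complete proof: the step you yourself label ``the hard part'' is the step you defer to the reference, and your diagnosis of why that step is delicate rests on a misconception, so the gap is genuine.

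Concretely, two problems arise in your ARL-conversion step. First, your averaging argument only yields the \emph{unconditional} bound $\Pr_\infty(\nu^\ast\le\tau<\nu^\ast+m)\le m/\gamma$; this can be vacuous after conditioning on survival, because $\Pr_\infty(\tau\ge\nu^\ast)$ may itself be as small as $m/\gamma$, in which case $\Pr_\infty(\tau<\nu^\ast+m\mid \tau\ge\nu^\ast)$ can be near $1$. Second, you then try to localize to individual pre-change histories to cope with the $\esssup$, and correctly note that a Markov step costs a square root. Neither difficulty is real, because neither step is needed. The standard resolution, which is what the paper uses: (i) Lai's lemma, proved by a block/contradiction argument (if $\Pr_\infty(\tau<\nu+m\mid\tau\ge\nu)>m/\gamma$ for every $\nu$ with $\Pr_\infty(\tau\ge\nu)>0$, then taking $\nu_k=km+1$ and multiplying the survival probabilities forces $\mathbb E_\infty[\tau]<\gamma$), produces a change point $\nu$ with $\Pr_\infty(\tau\ge\nu)>0$ and the \emph{conditional} bound $\Pr_\infty(\tau<\nu+m\mid\tau\ge\nu)\le m/\gamma$; (ii) since $\{\tau\ge\nu\}\in\mathcal F_{\nu-1}$, the essential supremum in the definition of $\WADD$ automatically dominates the expectation conditioned on this single event, i.e. $\esssup\,\mathbb E_\nu[(\tau-\nu+1)^+\mid\mathcal F_{\nu-1}]\ge\mathbb E_\nu[(\tau-\nu+1)^+\mid\tau\ge\nu]$, and the likelihood-ratio identity for the post-change samples remains valid conditionally on $\{\tau\ge\nu\}$ because pre-change data have the same law under $\Pr_\nu$ and $\Pr_\infty$. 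Running your own change-of-measure estimate conditionally on $\{\tau\ge\nu\}$ then yields the sharp constant $1$ with no square-root loss, and letting $\theta\downarrow0$ finishes the proof. A minor further difference: the paper takes the window $m\approx(\log\gamma)^2$, much larger than the target delay, and truncates the log-likelihood ratio at $(1-\delta^2)\log\gamma$ on the good event, which decouples the window length from the measure-change exponent; your coupling $mI=(1-\theta)\log\gamma$ with $(1+\theta')(1-\theta)<1$ also works, but only once the conditional false-alarm bound above is in hand.
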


Based on Theorem \ref{thm:delay_lower}, when choosing $b=\log\gamma$, we have $\mathbb E_\infty[T_{\rm L
}]\geq \gamma$, thus we have 
\begin{equation}\label{eq:wadd1_lower}
\WADD(T_{\rm L
})\geq \frac{\log\gamma}{\tilde I_0}(1+o(1))   
\end{equation}
since by definition $\tilde I_0$ is the KL divergence between the post- and pre-change graph distributions after perturbation. 
We then list the following Lemma as a direct consequence of Theorem 1 of \cite{wlcusum2023}, observing that $\log\gamma$ dominates the numerator term under our assumption.
\begin{lemma}[WADD Upper Bound \cite{wlcusum2023}]
Under the assumptions of Theorem \ref{th:upper_wadd}, we have the following upper bound for the worst-case performance of the detection procedure \eqref{eq:stop_time},
\begin{equation}
\WADD[T_{\rm L
}]\leq\frac{\log\gamma}{\hat{I}_0}(1+o(1)),
\label{eq:th1_upper}
\end{equation}
where $\hat{I}_0=\bE_1[\ell(\tilde \bA;\bm{\sigma}^\pre,\bm{\hat\sigma}_{t-1})]$.
\end{lemma}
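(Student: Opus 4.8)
The plan is to prove this upper bound by recognizing that the recursion \eqref{eq:stat-wlcusum} together with the stopping rule \eqref{eq:stop_time} is precisely an instance of the adaptive (window-limited) CUSUM procedure analyzed in \cite{wlcusum2023}, specialized to window size $w=1$. Concretely, the increment accumulated at time $t$ is the plug-in log-likelihood ratio $Z_t := \ell(\tilde\bA_t; \bm{\sigma}^\pre, \bm{\hat\sigma}_{t-1}) = \log\frac{\Pr(\tilde\bA_t; \bm{\hat\sigma}_{t-1})}{\Pr(\tilde\bA_t; \bm{\sigma}^\pre)}$, in which the post-change density is evaluated at the estimate $\bm{\hat\sigma}_{t-1}$ rather than at a known parameter. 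First I would invoke the structural fact, already highlighted after \eqref{eq:stat-wlcusum}, that $\bm{\hat\sigma}_{t-1}$ is a measurable function of $\tilde\bA_{t-1}$ alone and is therefore independent of $\tilde\bA_t$. Conditioned on $\calF_{t-1}$ the estimate $\bm{\hat\sigma}_{t-1}$ is frozen, so $Z_t$ is a bona fide log-likelihood ratio for the fixed pair $(\bm{\sigma}^\pre, \bm{\hat\sigma}_{t-1})$, and its conditional post-change mean equals $\bE_1[\ell(\tilde\bA; \bm{\sigma}^\pre, \bm{\sigma})]$ evaluated at $\bm\sigma = \bm{\hat\sigma}_{t-1}$. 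Averaging over the law of the estimate (which is identical for every $t\ge 2$ in the post-change regime) yields the well-defined per-step drift $\hat I_0 = \bE_1[\ell(\tilde\bA; \bm{\sigma}^\pre, \bm{\hat\sigma}_{t-1})]$.

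Next I would check the regularity hypotheses under which Theorem 1 of \cite{wlcusum2023} applies. For fixed $n$, each per-edge contribution to $\ell(\tilde\bA; \bm{\sigma}^\pre, \bm{\hat\sigma}_{t-1})$ is bounded in absolute value by $\log\frac{1-\tilde\zeta}{\tilde\zeta}$, and there are at most ${n \choose 2}$ such contributions; hence the increments $Z_t$ are uniformly bounded, which trivially supplies the integrability and finite-second-moment conditions required by the cited theorem. The remaining hypothesis is a strictly positive drift $\hat I_0 > 0$: this holds because $\ell(\tilde\bA; \bm{\sigma}^\pre, \bm{\sigma})$ has post-change mean equal to a nonnegative KL-type quantity for every $\bm\sigma$, and under the exact-recovery assumption the estimate equals $\bm{\sigma}^\post \ne \bm{\sigma}^\pre$ with probability $1-o(1)$, so the average is bounded below by a positive multiple of $\tilde I_0 > 0$. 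With these verified, Theorem 1 of \cite{wlcusum2023} gives $\WADD[T_{\rm L}] \le \frac{b + \text{(lower-order correction)}}{\hat I_0}(1+o(1))$, where the additive correction collects the overshoot and variance terms and is of order $o(b)$.

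Finally I would substitute the threshold choice $b = \log\gamma$ (which, by Lemma \ref{lem:arl}, certifies $\bE_\infty[T_{\rm L}] \ge \gamma$) and invoke the standing assumption $\tilde I_0 = o(\log\gamma)$ as $\gamma \to \infty$, which forces the lower-order additive correction to be negligible relative to $\log\gamma$, so that the numerator is dominated by $\log\gamma$ and the claimed bound $\WADD[T_{\rm L}] \le \frac{\log\gamma}{\hat I_0}(1+o(1))$ follows. The main obstacle I anticipate is not this final simplification but the careful verification that the randomness of the plug-in estimate inside the drift does not break the hypotheses of \cite{wlcusum2023}: unlike classical CUSUM, the drift $\hat I_0$ is itself an expectation over a random post-change parameter, so one must argue that the increments remain a well-controlled adapted sequence with the stated conditional mean, and the independence of $\bm{\hat\sigma}_{t-1}$ from $\tilde\bA_t$ is exactly what resolves this. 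Combined with the matching lower bound \eqref{eq:wadd1_lower} and the exact-recovery fact that $\hat I_0 = \tilde I_0(1+o(1))$ (since $\bm{\hat\sigma}_{t-1} = \bm{\sigma}^\post$ with probability $1-o(1)$ while the bounded failure-event increment contributes negligibly), this upper bound closes the proof of Theorem \ref{th:upper_wadd}.
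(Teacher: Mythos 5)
Your proposal is correct and follows essentially the same route as the paper, which simply cites Theorem 1 of \cite{wlcusum2023} and notes that $\log\gamma$ dominates the numerator under the assumption $\tilde I_0=o(\log\gamma)$. The additional detail you supply — verifying the independence of $\bm{\hat\sigma}_{t-1}$ from $\tilde\bA_t$, the boundedness of the increments, and the positivity of the drift $\hat I_0$ — is exactly the hypothesis check the paper leaves implicit, so there is no substantive difference in approach.
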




In the following, we analyze the empirical quantities $\hat{I}_0$ and establish the asymptotic result. Following the proof of Theorem \ref{thm:private_threshold_condition_one_time_instance}, we have under the post-change regime, the estimate $\bm{\hat\sigma}_t$ resulted from the most recent adjacency matrix satisfies, as $n\to\infty$,
\[
\operatorname{Pr}(\bm{\hat\sigma}_t = \bm{\sigma}^{\text{post}}) \geq 1- n^{-\Omega(1)}. 
\]
Then we have
\[
\hat I_0 = \tilde I_0 \Pr\{\bm{\hat\sigma}_t = \bm{\sigma}^{\text{post}}\} + \bE[\hat I_0|\bm{\hat\sigma}_t \neq \bm{\sigma}^{\text{post}}]  \Pr\{\bm{\hat\sigma}_t \neq \bm{\sigma}^{\text{post}}\} 
\]
note that $\mathbb E[\hat I_0|\bm{\hat\sigma}_t] = \mathbb E[\log \frac{\Pr(\tilde\bA_t;\hat{\bm\sigma}_t)}{\Pr(\tilde\bA_t;\bm\sigma^\pre)}] = \mathbb E[\log \frac{\Pr(\tilde \bA_t;\hat{\bm\sigma}_t)}{\Pr(\tilde \bA_t;\bm\sigma^\post)} + \log \frac{\Pr(\tilde \bA_t;\bm\sigma^\post)}{\Pr(\tilde \bA_t;\bm\sigma^\pre)} ] \in [0, \tilde I_0]$.
Therefore, 
\begin{align}
\frac{\hat I_0}{\tilde I_0} &=\frac{\tilde I_0 \Pr\{\bm{\hat\sigma}_t = \bm{\sigma}^{\text{post}}\} + \bE[\hat I_0|\bm{\hat\sigma}_t \neq \bm{\sigma}^{\text{post}}]  \Pr\{\bm{\hat\sigma}_t \neq \bm{\sigma}^{\text{post}}\}}{\tilde I_0} \nonumber \\ 
& = 1- O(n^{-c}),
\end{align}
for some constant $c$, and when combined with \eqref{eq:th1_upper}, we obtain the desired asymptotic result in Theorem \ref{th:upper_wadd}.

\subsection{Proof of Theorem \ref{thm:wadd2}}\label{app:proofIV6}

Based on Theorem \ref{thm:delay_lower}, we have $\WADD(T_{\rm C
})\geq \frac{\log\gamma}{I_0}(1+o(1)) $. It remains to prove $\frac{\log\gamma}{I_0}(1+o(1))$ is also an upper bound for $\WADD(T_{\rm C})$. To show this, we modify the proof of Theorem 1 in \cite{wlcusum2023} to incorporate the additional Laplace noise added to the detection statistics.

Define the process $\{U_t,t=1,2,\ldots\}$ as $
U_t=U_{t-1}+\log\frac{\Pr(\bA_t;\bm{\hat\sigma}_{t-1})}{\Pr(\bA_t;\bm{\sigma}^{\text{pre}})}$ with $U_1=0$, and the corresponding stopping time $T'=\inf\{t>1:U_t +L_t\geq \tilde b\}$ for the threshold $\tilde b = b+L_0$. Obviously we have $U_t \leq \mathcal S_t$, thus $U_t+L_t \leq \mathcal S_t+L_t=\tilde S_t$ and $\mathbb E[T_{\rm C}]\leq \mathbb E[T']$. In the following, we derive an upper bound for $\mathbb E[T']$.

For simplicity let us further denote with $\ell_t=\log\frac{\Pr(\bA_t;\bm{\hat\sigma}_{t-1})}{\Pr(\bA_t;\bm{\sigma}^{\text{pre}})}$ and $\hat I_0 :=\bE_1[\ell_t]$, then we observe that for $t>1$ we can write
\begin{equation*}
\begin{aligned}
\bE[U_{T'}] & =\bE[\sum_{t=2}^{T'}\ell_t]  =\bE[\sum_{t=2}^{T'+1}\ell_t]
-\bE_0[\ell_{T'+1}] \\
& = \bE[\sum_{t=2}^{\infty}\ell_t\mathbb I\{T'\geq t-1\}
] -\bE_0[\ell_{T'+1}] \\
& =\bE[\sum_{t=2}^{\infty}\bE\left[\ell_t|\mathcal F_{t-2}\right]\mathbb I\{T'\geq t-1\}] -\bE_0[\ell_{T'+1}] \\
& = \hat I_0\bE[T'] - \bE_0[\ell_{T'+1}] \geq \hat I_0\bE[T'] - I_0. 
\end{aligned}
\end{equation*}
Then we conclude that
\begin{equation}
\begin{aligned}
&\bE[T']\leq\frac{\bE[U_{T'}]+I_0}{\hat{I}_0}=
\frac{\bE[U_{T'}+L_{T'}-\tilde b]+\bE[\tilde b - L_{T'}]+I_0}{\hat{I}_0}. 
\end{aligned}
\label{eq:AAA1}
\end{equation}

The next step involves the analysis of the expectation of the overshoot $U_{T'}+L_{T'}-\tilde b$. We borrow ideas from \cite{lorden1970excess} and modify the proof in \cite{wlcusum2023}. For any threshold $x>0$ define $T'_x$ to be the corresponding stopping time
$$
T'_x=\inf\{t>1:U_t+L_t \geq x\},
$$
and denote the overshoot function as $R_x=U_{T'_x} +L_{T'_x} -x$. Define a sequence of stopping times $\{\tau_j\}$ with $\tau_0=1$ and 
$$
\tau_j=\inf\left\{t>\tau_{j-1}:\sum_{s=\tau_{j-1}+1}^t\ell_s + L_t > L_{\tau_{j-1}}\right\},~j\geq1,
$$
and the corresponding ladder variables $z_j=\sum_{s=\tau_{j-1}+1}^{\tau_j}\ell_s + L_{\tau_j} - L_{\tau_{j-1}}>0$. We can now see that $U_{\tau_j} + L_{\tau_j}=\sum_{i=1}^jz_i$, in fact the sequence $\{U_t+L_t\}_{t\geq1}$ increases only at the stopping times $\{\tau_j\}$. For any given threshold $\nu$, in order to stop at $T_\nu'$ the statistic $U_t+L_t$ needs an increase at $T_\nu'$, which means that there exists a random index $j_\nu$ such that $\tau_{j_\nu}=T_\nu'$. Due to this fact, we can write $R_\nu=\sum_{j=1}^{j_\nu}z_j-\nu$. Following the same steps as in \cite{lorden1970excess}, we observe that $R_x$ is a piecewise linear function and all pieces having slope $-1$, thus we have
\begin{equation}
\begin{aligned}
\int_0^\nu R_x\,dx & =\int_0^{U_{T'_\nu}+L_{T'_\nu}} R_x \, dx - \int_\nu^{U_{T'_\nu}+L_{T'_\nu}} R_x \, dx \\
& =\frac{1}{2}\left\{\sum_{j=1}^{j_\nu}z_j^2-R_\nu^2\right\}. 
\end{aligned}
\label{eq:AAA2}
\end{equation}
By definition $0<z_j$ and $z_j=\sum_{t=\tau_{j-1}+1}^{\tau_j-1}\ell_t+\ell_{\tau_j} + L_{\tau_j} - L_{\tau_{j-1}}$ with $\sum_{t=\tau_{j-1}+1}^{\tau_j-1}\ell_t + L_{\tau_j-1} - L_{\tau_{j-1}}\leq0$, then we have $0<z_j\leq \ell_{\tau_j} + L_{\tau_j}  - L_{\tau_j-1}$, which implies $z_j^2\leq 2 \ell_{\tau_j}^2 + 2 (L_{\tau_j}  - L_{\tau_j-1})^2 \leq 2 \sum_{t=\tau_{j-1}+1}^{\tau_j} \left[\ell_t^2 + (L_{t}  - L_{t-1})^2 \right]$. Substituting in \eqref{eq:AAA2} yields
\[
\int_0^\nu R_x\,dx\leq \sum_{t=2}^{T'+1}[\ell_t^2 + (L_{t}  - L_{t-1})^2]- \frac12R_\nu^2.    
\]
If we take the expectation of above expression and use Jensen's inequality on the last term we obtain
\begin{align*}
&0\leq \int_0^\nu \bE[R_x]\,dx
\leq  \bE[\sum_{t=2}^{T'+1}[\ell_t^2 + (L_{t}  - L_{t-1})^2]]-\frac12(\bE[R_\nu])^2
 \\
 & =\frac{\hat{J}_0}{\hat{I}_0}\bE[\sum_{t=2}^{T'+1}\ell_t]-\frac12(\bE[R_\nu])^2 \\
& =\frac{\hat{J}_0}{\hat{I}_0}\bE\left[U_{T'}+\ell_{T'+1}\right]-\frac12(\bE[R_\nu])^2 \\
& =\frac{\hat{J}_0}{\hat{I}_0}\bE\left[R_\nu+\nu-L_{T'}+\ell_{T'+1}\right]-\frac12(\bE[R_\nu])^2 \\
& \leq \frac{\hat{J}_0}{\hat{I}_0}\left\{\bE[R_\nu]+\nu+I_0\right\}-\frac12(\bE[R_\nu])^2,
\end{align*}
where $\hat J_0 = \mathbb E[\ell_2^2 + (L_{2}  - L_{1})^2] = \mathbb E[\ell_2^2] + 64(\frac{C}{\epsilon})^2$; the first equality is true because $\bE[\sum_{t=2}^{T'+1}[\ell_t^2 + (L_{t}  - L_{t-1})^2]]=\hat J_0\bE[T']$ and
$\bE[\sum_{t=2}^{T'+1}\ell_t]=\hat{I}_0\bE[T']$. From the nonnegativity of the integral we have
$\frac12(\bE[R_\nu])^2\leq\frac{\hat{J}_0}{\hat{I}_0}\left\{\bE[R_\nu]+\nu+I_0\right\}$, from which we conclude that $\bE[R_\nu]\leq 2\frac{\hat{J}_0}{\hat{I}_0}+\big(2\frac{\hat{J}_0}{\hat{I}_0}\nu\big)^{1/2}+\big(2\frac{\hat{J}_0}{\hat{I}_0}I_0\big)^{1/2}$. Substitute $\nu = \tilde b= b+L_0$ as a random threshold, we obtain 
\begin{align*}
\mathbb E[R_{\tilde b}] & \leq 2\frac{\hat{J}_0}{\hat{I}_0}+\bE[\big(2\frac{\hat{J}_0}{\hat{I}_0}\nu\big)^{1/2}]+\big(2\frac{\hat{J}_0}{\hat{I}_0}I_0\big)^{1/2} \nonumber \\ 
& \leq 2\frac{\hat{J}_0}{\hat{I}_0}+\big(2\frac{\hat{J}_0}{\hat{I}_0}\big)^{1/2}(\sqrt{b}+\frac{\sqrt{4C\pi/\epsilon}}{2})+\big(2\frac{\hat{J}_0}{\hat{I}_0}I_0\big)^{1/2}.
\end{align*}
Substitute above into \eqref{eq:AAA1}, and plug in $b=\log\gamma+\log\frac{1-(2C/\epsilon)^2}{1-(4C/\epsilon)^2}$, we obtain
\begin{align*}
\mathbb E[T'] & \leq \frac{ \log\gamma+2\frac{\hat{J}_0}{\hat{I}_0}+\big(2\frac{\hat{J}_0}{\hat{I}_0}\big)^{1/2}(\sqrt{b}+\frac{\sqrt{4C\pi/\epsilon}}{2})}{\hat I_0} \nonumber \\
& \hspace{0.1in} + \frac{\big(2\frac{\hat{J}_0}{\hat{I}_0}I_0\big)^{1/2} + \log\frac{1-(2C/\epsilon)^2}{1-(4C/\epsilon)^2}+I_0}{\hat I_0}.
\end{align*}
We note that $\log\frac{1-(2C/\epsilon)^2}{1-(4C/\epsilon)^2} = o(\log\gamma)$ and $(\frac{\hat{J}_0}{\hat{I}_0})^{1/2} \sqrt{4C\pi/\epsilon} = o(\log\gamma)$. Similar to the proof of Theorem \ref{th:upper_wadd}, we also have $\frac{\hat I_0}{I_0} =1- O(n^{-c'})$ for some constant $c'$ under exact recovery condition. Then the above upper bound can be written as $\mathbb E[T']\leq \frac{\log\gamma}{I_0}(1+o(1))$. This completes the proof.

\section{Proofs for Section \ref{sec:info-lower-bound}}\label{sec:minimax_private}

\subsection{Proof of Lemma \ref{lemma:KL_after_pertub}} \label{app:prooflemIV3}

Recall that we denote $\bA$ as the original adjacency matrix, and $\tilde \bA$ as the perturbed matrix yielded by a $\epsilon$ edge-LDP randomized mechanism. We denote $\tilde p_\infty$ as the distribution for the perturbed matrix $\tilde \bA$ in the pre-change regime, and let $\tilde p_0$ be the distribution in the post-change regime. We have
\[
\begin{aligned}
    & \KL(\tilde  p_0 || \tilde  p_\infty)
    = \mathbb E_{\tilde p_0} \bigg[\sum_{1\leq i<j \leq n} \log \frac{\tilde p_0(\tilde A_{ij})}{\tilde  p_\infty(\tilde A_{ij})} \bigg] \\
    & = \sum_{i<j: \ \sigma^\post_i\sigma^\post_j\neq \sigma^\pre_i\sigma^\pre_j} \sum_z  \tilde p_0(\tilde \bA_{i,j}=z) \log \frac{\tilde p_0(\tilde \bA_{i,j}=z)}{\tilde p_\infty(\tilde \bA_{i,j}=z)}. 
\end{aligned}
\]
Denote $q_1(z) = q(\tilde A_{ij}=z|A_{ij}=+1)$ as the probability that the privatized edge equals to $z$ when the raw edge is $+1$, and similarly we define $q_0(z)$ and $q_{-1}(z)$. By the privacy constraint we have $\frac{q_x(z)}{q_y(z)} \in [e^{-\epsilon}, e^\epsilon]$ for any $x,y,z\in\{-1,0,+1\}$. We first present the following Lemma.

\begin{lemma} We note that for pair $i<j$ such that \(\sigma^\post_i=\sigma^\post_j,\sigma^\pre_i\neq \sigma^\pre_j\), we have
\[
\tilde p_0(\tilde \bA_{i,j}=z) = q_1(z) p(1-\zeta) + q_{-1}(z) p\zeta + q_0(z)(1-p),
\]
and 
\[
\tilde p_\infty(\tilde \bA_{i,j}=z) = q_1(z) p\zeta + q_{-1}(z) p(1-\zeta) + q_0(z)(1-p).
\]
For the perturbed data distribution $\tilde p_\infty$ and $\tilde p_0$, we have for each $z\in\{+1,-1,0\}$,
    \[
   \left|\log \frac{\tilde p_0(\tilde \bA_{i,j}=z)}{\tilde p_\infty(\tilde \bA_{i,j}=z)} \right| \leq c_\epsilon (e^\epsilon -1 ) p(1-2\zeta),
    \]
    where $c_\epsilon=\min\{2,e^\epsilon\}$.
\end{lemma}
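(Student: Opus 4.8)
The plan is to first pin down the two explicit expressions for $\tilde p_0(\tilde A_{ij}=z)$ and $\tilde p_\infty(\tilde A_{ij}=z)$ and then control their log-ratio by a first-order estimate. For the expressions I would combine the $\CBM$ generative law with the channel $Q$: under the post-change labels $\sigma^\post_i=\sigma^\post_j$ the raw edge takes values $+1,-1,0$ with probabilities $p(1-\zeta),p\zeta,1-p$, and marginalizing over the privatization kernel $q_x(\cdot)$ gives the stated formula for $\tilde p_0$; under the pre-change labels $\sigma^\pre_i\neq\sigma^\pre_j$ the two probabilities $p(1-\zeta)$ and $p\zeta$ swap, producing $\tilde p_\infty$. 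Writing $A:=\tilde p_0(\tilde A_{ij}=z)$ and $B:=\tilde p_\infty(\tilde A_{ij}=z)$ and subtracting, the shared term $q_0(z)(1-p)$ cancels and we are left with the clean identity $A-B=p(1-2\zeta)\,(q_1(z)-q_{-1}(z))$, so the two privatized laws differ only through the single gap $q_1(z)-q_{-1}(z)$.

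Next I would bound $|\log(A/B)|$. Assuming without loss of generality $q_1(z)\ge q_{-1}(z)$ (so $A\ge B$; the reverse case is symmetric), the elementary inequality $\log(1+u)\le u$ gives $0\le\log(A/B)\le (A-B)/B = p(1-2\zeta)\,(q_1(z)-q_{-1}(z))/B$, so it suffices to bound $(q_1(z)-q_{-1}(z))/B$ by $c_\epsilon(e^\epsilon-1)$. The $\epsilon$-edge LDP constraint $q_x(z)/q_y(z)\in[e^{-\epsilon},e^\epsilon]$ is the only tool available, and I would deploy it in two complementary ways, whose minimum is exactly $c_\epsilon=\min\{2,e^\epsilon\}$.

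For the $e^\epsilon$ estimate, bound the numerator by $q_1(z)-q_{-1}(z)\le(e^\epsilon-1)q_{-1}(z)$ and the denominator by $B\ge e^{-\epsilon}q_{-1}(z)$ --- the latter because each of $q_1(z),q_{-1}(z),q_0(z)$ is at least $e^{-\epsilon}q_{-1}(z)$ while the three coefficients $p\zeta,p(1-\zeta),1-p$ sum to one --- which together yield $(q_1-q_{-1})/B\le e^\epsilon(e^\epsilon-1)$. For the constant-$2$ estimate, instead write $q_1(z)-q_{-1}(z)\le(e^\epsilon-e^{-\epsilon})q_0(z)=(1+e^{-\epsilon})(e^\epsilon-1)q_0(z)\le 2(e^\epsilon-1)q_0(z)$ and keep only the dominant mass $B\ge q_0(z)(1-p)$ in the denominator, giving $(q_1-q_{-1})/B\le 2(e^\epsilon-1)/(1-p)$. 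Taking the better of the two and absorbing the $1/(1-p)\to 1$ factor (recall $p=a\log n/n\to 0$) produces $c_\epsilon=\min\{2,e^\epsilon\}$, and the symmetric pair type $\sigma^\post_i\neq\sigma^\post_j,\ \sigma^\pre_i=\sigma^\pre_j$ follows by interchanging $p_0$ and $p_\infty$.

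The main obstacle is obtaining a lower bound on the denominator $B$ that is clean enough to leave a constant prefactor: the naive choice that keeps only $q_{-1}(z)p(1-\zeta)$ discards the dominant mass $q_0(z)(1-p)$ and introduces a spurious $1/p$ blow-up, which would ruin the advertised bound. The fix is to retain the $q_0(z)(1-p)$ term and use the privacy constraint to tie $q_0(z)$ to $q_1(z),q_{-1}(z)$; this is precisely what makes the two estimates tight in their respective regimes, with small $\epsilon$ favoring the $e^\epsilon$ bound and large $\epsilon$ favoring the constant $2$.
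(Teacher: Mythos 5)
Your proposal follows essentially the same route as the paper: the same closed-form expressions for $\tilde p_0$ and $\tilde p_\infty$, the same identity $A-B=p(1-2\zeta)\bigl(q_1(z)-q_{-1}(z)\bigr)$, the same first-order bound $|\log(A/B)|\leq |A-B|/\min\{A,B\}$, and the same two complementary uses of the LDP ratio constraint whose minimum yields $c_\epsilon=\min\{2,e^\epsilon\}$. Your $e^\epsilon$ branch is correct and matches the paper's.

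The one real defect is in your constant-$2$ branch: by bounding the numerator as $2(e^\epsilon-1)q_0(z)$ and the denominator as $B\geq q_0(z)(1-p)$, you obtain $2(e^\epsilon-1)/(1-p)$, which strictly exceeds the claimed constant $2$ for any $p>0$. The lemma as stated is a non-asymptotic inequality with constant exactly $\min\{2,e^\epsilon\}$, so "absorbing" the $1/(1-p)$ because $p\to 0$ does not close the argument; it only proves the bound with $c_\epsilon$ replaced by $\min\{2/(1-p),\,e^\epsilon\}$. The paper avoids this by routing both bounds through $q_{x^*}(z):=\min_x q_x(z)$: writing $|q_1(z)-q_{-1}(z)|\leq |q_1(z)-q_{x^*}(z)|+|q_{x^*}(z)-q_{-1}(z)|\leq 2q_{x^*}(z)(e^\epsilon-1)$, and lower-bounding the denominator by the full convex combination, $B\geq q_{x^*}(z)\bigl(p\zeta+p(1-\zeta)+(1-p)\bigr)=q_{x^*}(z)$, so the factor $q_{x^*}(z)$ cancels exactly and no $1/(1-p)$ appears. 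Substituting this pairing into your constant-$2$ branch repairs the proof; everything else in your write-up, including the symmetric treatment of the pair type $\sigma^\post_i\neq\sigma^\post_j$, $\sigma^\pre_i=\sigma^\pre_j$, is fine.
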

\begin{proof}
    The proof follows from Lemma 1 of \cite{duchi2018minimax}. We first have $|\log\frac{a}{b} |\leq \frac{|a-b|}{\min\{a,b\}}$ for $a,b\in\mathbb R_{+}$. Indeed, for any $x>0$, $\log (x)\leq x-1$. Set $x=a/b$ and $x=b/a$ yields 
    \[
    \log\frac{a}{b}\leq \frac{a}{b}-1=\frac{a-b}{b}, \quad \log\frac{b}{a}\leq \frac{b}{a}-1=\frac{b-a}{a}.
    \]
    Using the first inequality for $a>b$ and the second for $a<b$ yields $|\log\frac{a}{b} | \leq \frac{|a-b|}{\min\{a,b\}}$. 

    Let $a=\tilde p_0(\tilde \bA_{i,j}=z)$ and $b=\tilde p_\infty(\tilde \bA_{i,j}=z)$, we have
    \[
    \left|\log \frac{\tilde p_0(\tilde \bA_{i,j}=z)}{\tilde p_\infty(\tilde \bA_{i,j}=z)} \right| \leq \frac{|\tilde p_0(\tilde \bA_{i,j}=z)-\tilde p_\infty(\tilde \bA_{i,j}=z)|}{\min\{\tilde p_0(\tilde \bA_{i,j}=z),\tilde p_\infty(\tilde \bA_{i,j}=z) \}}.
    \]

We further have
\[
|\tilde p_0(\tilde \bA_{i,j}=z)-\tilde p_\infty(\tilde \bA_{i,j}=z)| = |q_1(z) -  q_{-1}(z)|\cdot |p(1-2\zeta)|. 
\]
For any $x\in\{-1,+1,0\}$ we have
\begin{align}
|q_1(z) -  q_{-1}(z)|& = |q_1(z) - q_x(z) + q_x(z) -  q_{-1}(z)| \nonumber \\
&\leq |q_1(z) - q_x(z)| + |q_x(z) -  q_{-1}(z)|.
\end{align}
This means we have
\[
|q_1(z) -  q_{-1}(z)| \leq 2\min_x q_x(z) \max_{x'} \left| \frac{q_{x'}(z)}{q_{x}(z)} -1 \right|.
\]
Similarly we have
\begin{align}
|q_1(z) -  q_{-1}(z)| & \leq q_{-1}(z) \left| \frac{q_{1}(z)}{q_{-1}(z)} -1 \right|  \nonumber \\
&\leq e^\epsilon \min_x q_x(z) \left| \frac{q_{1}(z)}{q_{-1}(z)} -1 \right|.
\end{align}
Combing these together, and using the result that $e^\epsilon-1\geq 1-e^{-\epsilon}$, we have
\[
|q_1(z) -  q_{-1}(z)| \leq c_\epsilon  \min_x q_x(z) (e^\epsilon-1).
\]
Therefore we have
\[
\begin{aligned}
 \left|\log \frac{\tilde p_0(\tilde \bA_{i,j}=z)}{\tilde p_\infty(\tilde \bA_{i,j}=z)} \right| 
 & \leq \frac{|\tilde p_0(\tilde \bA_{i,j}=z)-\tilde p_\infty(\tilde \bA_{i,j}=z)|}{\min\{\tilde p_0(\tilde \bA_{i,j}=z),\tilde p_\infty(\tilde \bA_{i,j}=z) \}} \\
 & \leq \frac{c_\epsilon  \min_x q_x(z) (e^\epsilon-1)\cdot |p(1-2\zeta)|}{\min_x q_x(z)}\\
 & =c_\epsilon(e^\epsilon-1)\cdot |p(1-2\zeta)|.    
\end{aligned}
\]
\end{proof}

 From the above lemma, we have
\[
\begin{aligned}
 & \sum_z  \tilde p_0(\tilde \bA_{i,j}=z) \log \frac{\tilde p_0(\tilde \bA_{i,j}=z)}{\tilde p_\infty(\tilde \bA_{i,j}=z)} \nonumber \\
 & \leq  \sum_z  (\tilde p_0(\tilde \bA_{i,j}=z) - \tilde p_\infty(\tilde \bA_{i,j}=z)) \log \frac{\tilde p_0(\tilde \bA_{i,j}=z)}{\tilde p_\infty(\tilde \bA_{i,j}=z)} \\
 & \leq c_\epsilon^2 (e^\epsilon -1 )^2 p^2(1-2\zeta)^2 \sum_z \min_x q_x(z) \\
 & \leq c_\epsilon^2 (e^\epsilon -1 )^2 p^2(1-2\zeta)^2. 
\end{aligned}
\]
The same inequality hold for pair $i<j$ such that \(\sigma^\post_i\neq\sigma^\post_j,\sigma^\pre_i= \sigma^\pre_j\). Therefore we have
\[
\begin{aligned}
     &\KL(\tilde p_0 || \tilde  p_\infty ) \nonumber \\
     & \hspace{0.1in} \leq  c_\epsilon^2 (e^\epsilon -1 )^2 p^2(1-2\zeta)^2 \left[{n\choose 2} - \sum_{i<j}\bm{\sigma}_i^{\text{pre}}\bm{\sigma}_j^{\text{pre}}\bm{\sigma}_i^{\text{post}}\bm{\sigma}_j^{\text{post}}\right]. 
\end{aligned}
\]

\subsection{Proof of Lemma \ref{lem:private_hypo_test}} \label{app:prooflemIV4}

For any two adjacent graphs $\bA,\bA'$, an $(\epsilon, \delta)$-edge CDP test $\mathcal T$ should satisfies
\begin{align*}
    {\operatorname{Pr}(\mathcal T(\mathbf{A}') = 1)} \leq e^{\epsilon} { \operatorname{Pr}(\mathcal T({\mathbf{A}}) = 1)} + \delta,
\end{align*}
and ${\operatorname{Pr}(\mathcal T(\mathbf{A}') = 0)} \leq e^{\epsilon} { \operatorname{Pr}(\mathcal T({\mathbf{A}}) = 0)} + \delta$. 
This is equivalent to
\begin{align*}
    & { \operatorname{Pr}(\mathcal T({\mathbf{A}}) = 1)}  \geq e^{- \epsilon} \left[ {\operatorname{Pr}(\mathcal T(\mathbf{A}') = 1)} - \delta  \right] \nonumber  \\
     & \hspace{0.1in} = e^{- \epsilon} \left[ {\operatorname{Pr}(\mathcal T(\mathbf{A}') = 1)} + \frac{\delta}{e^{\epsilon} - 1} \right] - \frac{\delta}{e^{\epsilon} - 1}. 
\end{align*}
Generally, for any graph $\mathbf{A}$, using the group privacy property of CDP \cite{dwork2014algorithmic}, we have 
\begin{align*}
     & P_{\max} \geq { \operatorname{Pr}(\mathcal T({\mathbf{A}}) = 1)} \geq  e^{- R \epsilon} \left[ P_{\max} + \frac{\delta}{e^{\epsilon} - 1} \right] - \frac{\delta}{e^{\epsilon} - 1},
\end{align*}
where $P_{\max} \triangleq \sup_{\mathbf{A}} { \operatorname{Pr}(\mathcal T({\mathbf{A}}) = 1)} $. Similarly, we can show that 
\begin{align*}
     & 1- P_{\max}  \leq { \operatorname{Pr}(\mathcal T({\mathbf{A}}) = 0)},  \nonumber \\
     &{ \operatorname{Pr}(\mathcal T({\mathbf{A}}) = 0)}  \leq  e^{R \epsilon} \left[ (1 - P_{\max}) + \frac{\delta}{e^{\epsilon} - 1} \right] - \frac{\delta}{e^{\epsilon} - 1}.
\end{align*}
Combining both bounds, we have the following: 
\begin{align}
     & P_{\max} \geq { \operatorname{Pr}(\mathcal T({\mathbf{A}}) = 1)}, \nonumber \\ 
     & { \operatorname{Pr}(\mathcal T({\mathbf{A}}) = 1)} \geq \max \bigg\{ e^{- R \epsilon} \left[ P_{\max} + \frac{\delta}{e^{\epsilon} - 1} \right] - \frac{\delta}{e^{\epsilon} - 1}, \nonumber \\ 
     & \hspace{0.6in}  1 -  e^{R \epsilon} \left[ (1 - P_{\max}) + \frac{\delta}{e^{\epsilon} - 1} \right] - \frac{\delta}{e^{\epsilon} - 1} \bigg\}. \label{eqn:bounds_T_equal_1}
\end{align}
Solving for $P_{\max}$ to obtain the tightest upper bound, we get 
\begin{align*}
    P_{\max}^{*} = \frac{e^{R \epsilon}}{e^{R \epsilon} + 1} +  \left(1 + \frac{\delta}{e^{\epsilon} - 1}\right) \times \left(\frac{e^{R \epsilon} -1}{e^{R \epsilon} + 1}\right).
\end{align*}
Plugging the value of $P_{\max}^{*}$ into the right hand side of equation \eqref{eqn:bounds_T_equal_1}. Using the above lower bound on ${ \operatorname{Pr}(\mathcal T({\mathbf{A}}) = 1)}$, we get the following:
\begin{align}\label{eq:hypo-test-diff-cdp}
     & { \operatorname{Pr}(\mathcal T({\mathbf{A}}) = 1 | H_{1})} -  { \operatorname{Pr}(\mathcal T({\mathbf{A}}) = 1 | H_{0})} \nonumber \\
     & \overset{(a)} \leq \left( \frac{e^{R\epsilon} - 1}{e^{R\epsilon} + 1} \right) \times \left(1 + \frac{2 \delta}{e^{\epsilon} - 1}\right) \times \TV (p_{0} || p_{\infty} ) \nonumber \\ 
     & \overset{(b)} \leq \frac{1}{\sqrt{2}} \times \left( \frac{e^{R\epsilon} - 1}{e^{R\epsilon} + 1} \right) \times \left(1 + \frac{2 \delta}{e^{\epsilon} - 1}\right) \times \sqrt{\KL (p_{0} || p_{\infty} )},
\end{align}
where in step (a), $R = 2^{{n \choose 2}}$ which is the total number of undirected graphs. Step (b) follows from applying Pinsker's inequality. Here recall that $p_\infty(\bA)$ and $p_0(\bA)$ denote the probability distribution for the adjacency matrix $\bA$ in the pre-change and post-change regime, respectively.

\subsection{Proof of Proposition \ref{prop-delay-lower-cdp}}\label{app:proofpropIV1}

The proof follows the analysis in \cite{lai1998information}. Firstly, given that the ARL constraint $\mathbb E_\infty[T]\geq \gamma$ is satisfied, let $m$ be a positive integer less than $\gamma$, then for some $\nu \geq 1$, we have $
\Pr_\infty (T \geq \nu)>0$ and $\Pr_\infty(T<\nu+m \mid T \geq \nu) \leq m / \gamma$. 

Secondly, let $m$ be the largest integer $\leq(\log \gamma)^{2}$. Suppose $\mathbb E_{\infty}(T) \geq \gamma$. Then we choose a specific $\nu$ such that $
\Pr_\infty (T \geq \nu)>0$ and $\Pr_\infty(T<\nu+m \mid T \geq \nu) \leq m / \gamma$. 

Let $I=\frac{1}{\alpha_0} \left( \frac{e^{R\epsilon} - 1}{e^{R\epsilon} + 1} \right)^2 \times \left(1 + \frac{2 \delta}{e^{\epsilon} - 1}\right)^2 \times \KL (p_{0} || p_{1} )$, and $Z_i:=\log\frac{\Pr(O_i|H_1)}{\Pr(O_i|H_0)}$. Define the event $C_{\delta}=\{0 \leq T-\nu<(1-\delta) I^{-1} \log \gamma, \sum_{i=\nu}^{T} Z_{i}<\left(1-\delta^{2}\right) \log \gamma\}$, we first show that 
\[
{\Pr}_{(\nu)}\left\{C_{\delta} \mid T \geq \nu\right\} \rightarrow 0, \text{ as } \gamma \rightarrow \infty,
\]
for the chosen $\nu$ and every $0<\delta<1$. Indeed, we have
\begin{align}
{\Pr}_{(\nu)}\left(C_{\delta}\right) & =\int_{C_{\delta}} \exp \left(\sum_{i=\nu}^{T} Z_{i}\right) d P_{\infty} \nonumber \\
& \leq \exp \left\{\left(1-\delta^{2}\right) \log \gamma\right\} {\Pr}_{\infty}\left(C_{\delta}\right).
\end{align}
Then it follows that for {\it large} $\gamma$,
\begin{align*}
& {\Pr}_{(\nu)}\left\{C_{\delta} \mid T \geq \nu\right\} \leq \exp \left\{\left(1-\delta^{2}\right) \log \gamma\right\} \\
&\hspace{60pt} \cdot {\Pr}_{\infty} \left\{T-\nu<(1-\delta) I^{-1} \log \gamma \mid T \geq \nu\right\} \\
& \leq  \gamma^{1-\delta^{2}} {\Pr}_{\infty} (T<\nu+m \mid T \geq \nu) \leq  \gamma^{1-\delta^{2}}(\log \gamma)^{2} / \gamma \to 0,
\end{align*}
where the second inequality is because $m$ is the largest integer $\leq(\log \gamma)^{2}$. Hence, we have shown that as $\gamma \rightarrow \infty$, ${\Pr}_{(\nu)}\left\{C_{\delta} \mid T \geq \nu\right\} \rightarrow 0$.

Thirdly, we show that $\Pr_{(\nu)}\left\{\tilde C_{\delta} \mid T \geq \nu\right\} \rightarrow 0$ as well, where 
$\tilde C_{\delta}=\{0 \leq T-\nu<(1-\delta) I^{-1} \log \gamma, \sum_{i=\nu}^{T} Z_{i}\geq\left(1-\delta^{2}\right) \log \gamma\}$. To show this, we first observe that by the inverse Pinsker inequality,
\[
\mathbb E_0[Z_i] \leq \frac{2}{\alpha_0} ({ \operatorname{Pr}(\mathcal T({\mathbf{A}}) = 1 | H_{1})} -  { \operatorname{Pr}(\mathcal T({\mathbf{A}}) = 1 | H_{0})})^2\leq I,  
\]
where $\alpha_0:=\operatorname{Pr}(\mathcal T({\mathbf{A}}) = 1 | H_{0})>0$. Therefore, by the law of large numbers, we have for any $\delta>0$, $
\lim_{n\to\infty} \sup_{\nu\geq 1} \esssup P^{(\nu)} \{\max_{t\leq n} \sum_{i=\nu}^{\nu+t} Z_i\geq I(1+\delta)n|O_1,\ldots,O_{\nu-1}\}=0$. Substitute $n=(1-\delta) I^{-1} \log \gamma\to\infty$, we have ${\Pr}_{(\nu)} \left\{\tilde C_{\delta} \mid T \geq \nu\right\}\rightarrow0$. 


Combining this with $\Pr_{(\nu)}\left\{C_{\delta} \mid T \geq \nu\right\} \rightarrow 0$, we have 
$$
{\Pr}_{(\nu)}\left\{T-\nu \geq(1-\delta) I^{-1} \log \gamma \mid T \geq \nu\right\} \rightarrow 1, 
$$
which yields
$$
\mathbb E_\nu(T-\nu \mid T \geq \nu) \geq(1-\delta+o(1)) I^{-1} \log \gamma,
$$
as $\gamma \rightarrow \infty$. Since $\delta$ is arbitrary, it then follows that
\begin{equation*}
\WADD \geq\left(I^{-1}+o(1)\right) \log \gamma.
\end{equation*} 
Recall that $I=\frac{1}{\alpha_0} \left( \frac{e^{R\epsilon} - 1}{e^{R\epsilon} + 1} \right)^2 \times \left(1 + \frac{2 \delta}{e^{\epsilon} - 1}\right)^2 \times \KL (p_{0} || p_{\infty} )$, thereby we complete the proof.



\subsection{Proof of Theorem \ref{lemma:required_window_with_privacy}} \label{app:proofIV7}


 \paragraph{Step $1$: Lower Bounding the Mutual Information:}

Let us first define the random variable
\begin{align}
I \triangleq 
    \begin{cases}
        1, & \text{if} \hspace{0.05in} \mathcal{E} \hspace{0.05in} \text{occurs}, \\
        0, & \text{otherwise},
    \end{cases}
\end{align}
where the event $\mathcal{E}$ is defined as $\mathcal{E} \triangleq \big\{ {\operatorname{err}} (\hat{\bm{\sigma}}_{t}, {\bm{\sigma}}_{t}^{*}) = 0 \big\}$. Now, we have the following set of steps:
\begin{align}
    & H(\bm{\sigma}_{t}^{*} | \hat{\bm{\sigma}}_{t}) \leq H(I, \bm{\sigma}_{t}^{*}| \hat{\bm{\sigma}}_{t}) \nonumber \\ 
    & = H(I | \hat{\bm{\sigma}}_{t}) +  H(\bm{\sigma}_{t}^{*}|\hat{\bm{\sigma}}_{t}, I) \nonumber \\ 
    & \leq H(I) + H(\bm{\sigma}_{t}^{*}|\hat{\bm{\sigma}}_{t}, I = 0) \times \operatorname{Pr}(I = 0) \nonumber \\ 
    & \hspace{0.3in} +  H(\bm{\sigma}_{t}^{*}|\hat{\bm{\sigma}}_{t}, I = 1) \times \operatorname{Pr}(I = 1) \nonumber \\ 
    & \leq 1 + n \times \operatorname{Pr}(I = 0) +  H(\bm{\sigma}_{t}^{*}|\hat{\bm{\sigma}}_{t}, I = 1) \times \operatorname{Pr}(I = 1) \nonumber \\ 
    & \overset{(a)} = 1 + n - \operatorname{Pr}(I = 1) \times (n -  H(\bm{\sigma}_{t}^{*}|\hat{\bm{\sigma}}_{t}, I = 1) )  \nonumber \\ 
    & = 1 + n - \operatorname{Pr}(\hat{\bm{\sigma}}_{t} = {\bm{\sigma}}_{t}^{*}) \times  (n -  H(\bm{\sigma}_{t}^{*}|\hat{\bm{\sigma}}_{t}, I = 1) ), \label{upper_bound_cond_entropy}
\end{align}
where in step (a), we used fact that $H(\bm{\sigma}_{t}^{*}|\hat{\bm{\sigma}}_{t}, I = 1) \leq  n$. To this end, we have
\begin{align}
    I(\bm{\sigma}_{t}^{*}; \hat{\bm{\sigma}}_{t}) & = H(\bm{\sigma}_{t}^{*}) - H(\bm{\sigma}_{t}^{*} | \hat{\bm{\sigma}}_{t}) \nonumber \\ 
    & \overset{(a)}  = n -  H(\bm{\sigma}_{t}^{*} | \hat{\bm{\sigma}}_{t}) \nonumber \\ 
    & \overset{(b)} \geq  \operatorname{Pr}(\hat{\bm{\sigma}}_{t} = {\bm{\sigma}}_{t}^{*}) \times (n -  H(\bm{\sigma}_{t}^{*}|\hat{\bm{\sigma}}_{t}, I = 1) ) - 1  \nonumber \\
    & \overset{(c)} =  \operatorname{Pr}(\hat{\bm{\sigma}}_{t} = {\bm{\sigma}}_{t}^{*}) \times  n - 1,
\end{align}
where step $(a)$ follows that the labeling vector is generated uniformly at random, while step $(b)$ we invoked the upper bound on the conditional entropy $H(\bm{\sigma}_{t}^{*} | \hat{\bm{\sigma}}_{t}) $ in Eqn. \eqref{upper_bound_cond_entropy}. Finally, step (c) follows that $ H(\bm{\sigma}_{t}^{*}|\hat{\bm{\sigma}}_{t}, I = 1)  = 0$. 

\paragraph{Step 2: Upper Bounding the Mutual Information}

We upper bound the mutual information between $\bm{\sigma}_{t}^{*}$ and $\bm{\hat{\sigma}}_{t}$.  Let $\hat{\mathbf{Y}}_{t}$ be the labelling estimator via SDP (before the rounding procedure). Now, we have 
{
   \allowdisplaybreaks
    \begin{align}
     I(\bm{\sigma}_{t}^{*}; \hat{\bm{\sigma}_{t}}) & \overset{(a)} =    I(\bm{\sigma}_{t}^{*}; \hat{\mathbf{Y}}_{t}) \nonumber \\ 
    & \leq I(\bm{\sigma}_{t}^{*}; \hat{\mathbf{Y}}_{t}, \hat{\mathbf{Y}}_{t-1}, \cdots, \hat{\mathbf{Y}}_{t-(w-1)}) \nonumber \\ 
      &= H(\hat{\mathbf{Y}}_{t}, \hat{\mathbf{Y}}_{t-1}, \cdots, \hat{\mathbf{Y}}_{t-(w-1)}) \nonumber \\ 
      & \hspace{0.3in} - H(\hat{\mathbf{Y}}_{t}, \hat{\mathbf{Y}}_{t-1}, \cdots, \hat{\mathbf{Y}}_{t-(w-1)}| \bm{\sigma}_{t}^{*}), \label{eqn:break_1} 
      \end{align}
      where in step (a) holds true after a proper rounding. We next condition on $\bm{\sigma}_{t-1}^{*}, \cdots, \bm{\sigma}_{t-(w-1)}^{*}$ to further upper bound Eqn.  \eqref{eqn:break_1} as follows:
    \begin{align}
  \eqref{eqn:break_1} &\leq H(\hat{\mathbf{Y}}_{t}, \hat{\mathbf{Y}}_{t-1}, \cdots, \hat{\mathbf{Y}}_{t-(w-1)}) \nonumber \\ 
    & \hspace{0.2in}  - H(\hat{\mathbf{Y}}_{t}, \hat{\mathbf{Y}}_{t-1}, \cdots, \hat{\mathbf{Y}}_{t-(w-1)}| \bm{\sigma}_{t}^{*}, \bm{\sigma}_{t-1}^{*}, \cdots, \bm{\sigma}_{t-(w-1)}^{*}) \nonumber \\ 
    & \leq \sum_{i = 0}^{w-1}  H(\hat{\mathbf{Y}}_{t-i}) \nonumber\\
    & \hspace{0.2in} - H(\hat{\mathbf{Y}}_{t}, \hat{\mathbf{Y}}_{t-1}, \cdots, \hat{\mathbf{Y}}_{t-(w-1)}| \bm{\sigma}_{t}^{*}, \bm{\sigma}_{t-1}^{*}, \cdots, \bm{\sigma}_{t-(w-1)}^{*}) \nonumber \\ 
    & \overset{(b)} =  \sum_{i = 0}^{w-1}  H(\hat{\mathbf{Y}}_{t-i}) - \sum_{i = 0}^{w-1}  H(\hat{\mathbf{Y}}_{t-i} | \bm{\sigma}_{t-i}^{*}) \nonumber \\
    & = \sum_{i = 0}^{w-1}  I(\bm{\sigma}_{t-i}^{*}; \hat{\mathbf{Y}}_{t-i})  \overset{(c)} = \sum_{i = 0}^{w-1}  I(\bm{\sigma}^{*}_{t-i}; \hat{\bm{\sigma}}_{t-i}) \nonumber \\ 
     & \leq \sum_{i = 0}^{w-1}  \left[ n - H(\hat{\bm{\sigma}}_{t-i}| \bm{\sigma}_{t-i}^{*}) \right], \label{eqn:break_2}
    \end{align}
     where step (b) follows that the labeling estimate $\hat{\mathbf{Y}}_{t-i}$ only depends on the ground truth labelling vector $\bm{\sigma}^{*}$ through the construction of our proposed algorithm. Step (c) holds true after a proper rounding of the SDP-based estimate $\hat{\mathbf{Y}}_{t-i}$.
     Next, we condition on the realization of the adjacency matrix to step $t-1$ and invoke edge DP constraints in the following set of steps:
    \begin{align}
     \eqref{eqn:break_2}   & \leq \sum_{i = 0}^{w-1}  \left[ n - H(\hat{\bm{\sigma}}_{t-i}| \bm{\sigma}_{t-i}^{*}, {\mathbf{A}}_{t-i} = \mathbf{A}) \right] \nonumber \\ 
     &  \overset{(d)} \leq \sum_{i = 0}^{w-1}  \left[ n - e^{- \epsilon} H(\hat{\bm{\sigma}}_{t-i}|\bm{\sigma}_{t-i}^{*}, {\mathbf{A}}_{t-i} = \mathbf{A}') \right] \nonumber \\ 
    & \overset{(e)} = \sum_{i = 0}^{w-1}  \left[ n - n e^{- \epsilon} \cdot h_{2}\left(p_{e} \right) \right] \nonumber \\ 
    &  = \sum_{i = 0}^{w-1}  n \left[ 1 - e^{- \epsilon} \cdot h_{2}\left(p_{e}\right) \right],
\end{align}
}

\noindent where step (d), we can easily show under $\epsilon$-edge DP that $H(\hat{\bm{\sigma}}_{t-i}| \mathbf{A}_{t-i} = \mathbf{A}) \geq e^{- \epsilon} H(\hat{\bm{\sigma}}_{t-i} | \mathbf{A}_{t-i} = \mathbf{A}')$.  Step (e) follows that $\hat{\bm{\sigma}}_{t-i}$ is a sequence of Bernoulli random variables with parameter $p_{e} = \mathds{E}\left[ \operatorname{err} \operatorname{rate} (\hat{\bm{\sigma}}(\mathbf{A}), {\bm{\sigma}}^{*})  \right]$ (i.e., element-wise error probability), in addition, $p_{e} \rightarrow 0$ and $n \rightarrow \infty$, and $h_{2}(\cdot)$ is the binary entropy function. 

To this end, we have 
\begin{align}
    I(\bm{\sigma}_{t}^{*}; \hat{\bm{\sigma}_{t}}) &\leq w n \left[ 1 - e^{- \epsilon} \cdot h_{2}\left(p_{e} \right) \right].
\end{align}
Putting everything together gives us the following:
\begin{align}
     & w n \left[ 1 - e^{- \epsilon} \cdot  h_{2}\left(p_{e} \right) \right] \geq \operatorname{Pr}(\hat{\bm{\sigma}}_{t} = {\bm{\sigma}}_{t}^{*}) \times n -1 \nonumber \\ 
     & \Rightarrow w \geq \bigg(\operatorname{Pr}(\hat{\bm{\sigma}}_{t} = {\bm{\sigma}}_{t}^{*}) - \frac{1}{n}\bigg) \times \left[ 1 -  e^{- \epsilon} \cdot h_{2}\left(p_{e} \right) \right]^{-1}. 
\end{align}
A more stringent condition is
\begin{align}
    w \geq  \operatorname{Pr}(\hat{\bm{\sigma}}_{t} = {\bm{\sigma}}_{t}^{*})  \times \left[ 1 -  e^{- \epsilon}\right]^{-1} =  \operatorname{Pr}(\hat{\bm{\sigma}}_{t} = {\bm{\sigma}}_{t}^{*})  \times \frac{e^{\epsilon}}{e^{\epsilon}-1},
\end{align}
where we used the fact that $h_{2}\left( \cdot \right) \leq 1$.

\section{Additional Details of the Detection Algorithm}\label{app:unknown-para}

We provide additional details on how to perform detection when the pre-change community structure $\bm\sigma^{\pre}$ and the parameters $p$ and $\zeta$ are unknown, as briefly mentioned in Remark III.1. When the pre-change structure and parameters are unknown, we can estimate them using available data. If we have access to a separate set of historical data, we can add a pre-processing step to estimate the pre-change distribution before running the detection algorithm—this is shown in Algorithm~\ref{alg:alg-pre1} (for local DP) and Algorithm~\ref{alg:alg-pre2} (for central DP). For clarity, we present these algorithms assuming a single historical data. In the case of multiple historical samples, we can perform community label estimation on each data point individually and aggregate the results using a majority vote. Even when no separate historical data are available, the pre-change distribution can still be estimated by reserving the first few observations as ``historical'' samples, assuming the change-point does not occur too early. In both scenarios, if some effective historical data is used for estimation, the resulting estimate will be close to the true pre-change distribution. For this reason, we mainly assume the pre-change distribution is known, in order to simplify the analysis and focus on the core detection problem.

\begin{algorithm}[ht!]
   \caption{Pre-processing: Pre-change distribution estimation under LDP setting}
   \label{alg:alg-pre1}
\begin{algorithmic}[1]
   \STATE {\bfseries Input:} Historical data $\{\bA_{0}\}$, number of nodes $n$, private budget $\epsilon$.
   \STATE {\bfseries Output:} The differentially private estimate of pre-change community structure and parameters $\bm{\hat\sigma^\pre}$, $\hat p$, and $\hat \zeta$.
   \STATE Perturb historical data to $\{\tilde\bA_{0}\}$ according to the Privacy Mechanism (1) in Section III-A. 
   \STATE Calculate $\bm{\hat\sigma^\pre}$ via solving Eq. (3) on perturbed graphs:
   \[
   \hat{\mathbf{Y}} = \arg\max_{\mathbf{Y}}  \hspace{0.1in}\text{tr}( \tilde\bA_{0}\mathbf{Y})    \quad \text{s.t.}  \hspace{0.1in}  \mathbf{Y} \succcurlyeq \mathbf{0} , \, Y_{i, i} = 1, \forall i \in [n], 
   \]
   and $\hat{\mathbf{Y}}=\bm{\hat\sigma^\pre} (\bm{\hat\sigma^\pre})^T$.
   \STATE Calculate the maximum likelihood estimate $\hat p=E_0/{n \choose 2}$ and $\hat \zeta = \frac12 - \frac{(\bm{\hat\sigma^\pre})^{T}\tilde\bA_{0} \bm{\hat\sigma^\pre}}{4E_0}$, where $E_0$ denote the total number of edges in $\tilde\bA_{0}$. 
\end{algorithmic}
\end{algorithm}

\begin{algorithm}[ht!]
   \caption{Pre-processing: Pre-change distribution estimation under CDP setting}
   \label{alg:alg-pre2}
\begin{algorithmic}[1]
   \STATE {\bfseries Input:} Historical data $\{\bA_{0}\}$, private budget $\epsilon$.
   \STATE {\bfseries Output:} The differentially private estimate of pre-change community structure and parameters $\bm{\hat\sigma^\pre}$, $\hat p$, and $\hat \zeta$.
   \STATE Compute $d(\bA_{0})$, stability of $\bm{\hat\sigma}$ with respect to graphs $\{\bA_{0}\}$.
   \STATE $\tilde d = d(\bA_{0})+\operatorname{Lap}(0, 1/\epsilon)$.
   \IF{$\tilde{d}  > \frac{\log{1/\delta}}{\epsilon}$}
    \STATE $\bm{\hat\sigma^\pre} 
= \hat{\bm{\sigma}}(\bA_{0})$
    \ELSE
    \STATE $\bm{\hat\sigma^\pre} = \perp$ (a random label vector) 
    \ENDIF
    
   \STATE Compute $\hat p$ and $\hat \zeta$ as the maximum likelihood estimate from $\{\bA_{0}\}$. 
   \STATE $\hat p = \hat p + \operatorname{Lap}(0, 1/\epsilon)$, $\hat \zeta = \hat \zeta + \operatorname{Lap}(0, 1/\epsilon) $. 
\end{algorithmic}
\end{algorithm}

Moreover, in practice, the distributional parameters $p$ and $\zeta$ may also change after the change-point. The proposed detection algorithms can be easily extended to such cases by estimating the post-change parameters $p$ and $\zeta$ together with $\bm\sigma^{\post}$ and then substituting into the detection algorithm. Taking the detection under the LDP setting as an example (the CDP case can be handled similarly), in Algorithm~\ref{alg:alg1} of the main paper, we compute the maximum likelihood estimates $\bm{\hat{\sigma}}^{t-1}$, $\hat{p}$, and $\hat{\zeta}$ on the perturbed graph $\tilde{\bA}^{t-1}$, as described in Appendix~\ref{sec:llr}. The detection statistic is then updated as $S_{t}=(S_{t-1})^{+} + \log\frac{\Pr(\tilde\bA_t;\bm{\hat\sigma}_{t-1}, \hat p, \hat\zeta)}{\Pr(\tilde\bA_t;\bm{\sigma}^{\text{pre}}, p^\pre, \zeta^\pre)}$, where $p^\pre$ and $\zeta^\pre$ are the pre-change distributional parameters. Since the MLEs $\hat{p}$ and $\hat{\zeta}$ have closed-form expressions, the additional estimation incurs minimal computational cost. Moreover, as demonstrated in the real-data experiments in Fig. \ref{fig:real-air}, such adaptive estimation on $p$ and $\zeta$ may improve detection performance, especially when there is indeed a significant change in $p$ and $\zeta$.


\end{document}